\numberwithin{equation}{section} \makeatletter
\renewcommand{\subsection}{\@startsection
{subsection}{2}{0mm}{\baselineskip}{-0.25cm}
{\normalfont\normalsize\bf}} \makeatother
\newtheorem{theorem}{Theorem}[section]
\newtheorem{lemma}[theorem]{Lemma}
\newtheorem{corollary}[theorem]{Corollary}
\newtheorem{definition}[theorem]{Definition}
\newtheorem{remark}[theorem]{Remark}
\newtheorem{proposition}[theorem]{Proposition}
\newtheorem{ass}[theorem]{Assumption}
\def \C {\mathcal C}
\def \E {\mathcal E}
\def \F {\mathcal F}
\def \G {\mathcal G}
\def \H {\mathcal H}
\def \L {\mathcal L}
\def \P {\mathbf P}
\def \Q {\mathbf Q}
\def \I {{\mathbf 1}}
\def \R {\mathbb R}
\def \bF {\mathbb F}
\def \bG {\mathbb G}
\def \bH {\mathbb H}
\def \bE {\mathbb E}
\newcommand{\ud}{\mathrm d}
\newcommand{\ds}{\displaystyle}
\newcommand{\esp}[2][\mathbb E] {#1\left[#2\right]}
\newcommand{\widehatesp}[2][\widehat{\mathbb E}] {#1\left[#2\right]}
\newcommand{\espp}[2][\mathbb {\widehat E}] {#1\left[#2\right]}
\newcommand{\doleans}[1] {\mathcal E\left(#1\right)}
\begin{document}
\author[C.~Ceci]{Claudia  Ceci}
\address{Claudia  Ceci ({\large \Letter}), Department of Economics,
University ``G. D'Annunzio'' of Chieti-Pescara, Viale Pindaro, 42,
I-65127 Pescara, Italy.}\email{c.ceci@unich.it}

\author[K.~Colaneri]{Katia Colaneri}
\address{Katia Colaneri, Department of Economics,
 University of Perugia, Via Alessandro Pascoli, 20, I-06123 Perugia, Italy.}\email{katia.colaneri@unipg.it}

\author[A.~Cretarola]{Alessandra Cretarola}
\address{Alessandra Cretarola, Department of Mathematics and Computer Science,
 University of Perugia, Via Luigi Vanvitelli, 1, I-06123 Perugia, Italy.}\email{alessandra.cretarola@unipg.it}

\title[Unit-linked life insurance policies]{Unit-linked life insurance policies: optimal hedging in partially observable market models}

\date{}

\begin{abstract}
\begin{center}
In this paper we investigate the hedging problem of a unit-linked life insurance contract via the local risk-minimization approach, when the insurer has a restricted information on the market. In particular, we consider an endowment insurance contract, that is a combination of a term insurance policy and a pure endowment,
whose final value depends on the trend of a stock market where the premia the policyholder pays are invested.
We assume that the stock price process dynamics depends on an exogenous unobservable stochastic factor that also influences the mortality rate of the policyholder.
To allow for mutual dependence between the financial and the insurance markets, we use the progressive enlargement of filtration approach. We characterize the optimal hedging strategy in terms of the integrand in the Galtchouk-Kunita-Watanabe decomposition of the insurance claim with respect to the minimal martingale measure and the available information flow. We provide an explicit formula by means of predictable projection of the corresponding hedging strategy under full information with respect to the natural filtration of the risky asset price and the minimal martingale measure.
Finally, we discuss applications in a Markovian setting via filtering.
\end{center}
\end{abstract}

\maketitle

{\bf Keywords}: Unit-linked life insurance contract; progressive enlargement of filtration; partial information; local risk-minimization.

{\bf JEL Classification}: C02; G11; G22.

{\bf AMS Classification}: 91B30; 60G35; 60G40; 60J60.

\section{Introduction}
For the last years unit-linked life insurance contracts have experienced a clamorous success, driven by low interest rates, which have considerably reduced the returns of the classic management, and the new Solvency II rules on the insurance regulatory capital, which made the unit-linked much more affordable for the companies, in terms of lower absorption of capital.
In these insurance products premia are invested by the insurance company in the financial market on behalf on the policyholder. Therefore, benefits depend on the performance of a stock or a portfolio traded in the financial market. Among these contracts, we may distinguish at least three different kinds of policies based on the payoff structure:
\begin{itemize}
\item {\em pure endowment} contract that promises to pay an agreed amount if the policyholder is still alive on a specified future date;
\item {\em term insurance} contract that pays the benefit if the policyholder dies before the policy term;
\item {\em endowment insurance} contract which is a combination of the above contracts and guarantees that benefits will be paid by the insurance company, either at the policy term or after the insured death.
\end{itemize}
Modeling the time of death is a fundamental issue to be addressed in this setting. Here, we propose a modeling framework for life insurance liabilities that is also well suited to describe defaultable claims, as the time of death can be handled in a similar manner to the default time of a firm. Then, we take the analogies between mortality and credit risk into account and follow the intensity-based approach of reduced-form methodology, see e.g. \citet{bielecki2002} and references therein.
The goal of this paper is to study the hedging problem of an endowment insurance contract in a general intensity-based model where the mortality intensity, as well as the drift in the risky asset price dynamics affecting the benefits for the policyholder, is not observable by the insurance company.
This problem requires to consider a suitable combined financial-insurance market model where we allow for mutual dependence between the stock price trend and the insurance portfolio.

Precisely, we consider a simple financial market model with a riskless asset, whose discounted price is equal to $1$, and a risky asset, with discounted price process denoted by $S$. The price process $S$  is represented by a geometric diffusion, whose drift depends on an exogenous unobservable stochastic factor $X$, correlated with $S$.
The insurance company issues an endowment insurance contract with maturity of $T$ years for an individual whose remaining lifetime is represented by a random time $\tau$.

We model the death time $\tau$ as a nonnegative random variable, which is not necessarily a stopping time with respect to the initial filtration $\bF$ generated by the underlying Brownian motions driving the dynamics of the pair $(S,X)$. We do not assume independence between the random time of death and the financial market, and characterize our setting via the progressive enlargement of filtration approach, see the seminal works by \citet{jeulin1978grossissement, jeulin1980, jeulin1985grossissements}.
This technique is widely applied to reduced-form models for credit risk, as in \citet{bielecki2004stochastic,bielecki2006completeness,bielecki2006replication, elliott2000models, kusuoka1999remark}. Moreover, applications to insurance problems can be found in \citet{biagini2015risk,barbarin2007risk,choulli2015hedging,li2011uncertain} in a complete information setting.
Here, we consider an enlargement of the filtration $\bF$ to make $\tau$ a stopping time and we denote it by $\bG$.
The available information to the insurance company is represented by a subfiltration $\widetilde \bG$ of $\bG$, which contains the natural filtration of $S$ and ensures that $\tau$ is still a stopping time.
This means that, at any time $t$, the insurer may observe the risky asset price and knows if the policyholder is still alive or not.

The endowment insurance contract can be treated as a contingent claim in the incomplete hybrid market model given by the financial securities and the insurance portfolio. Then,
we choose, among the quadratic hedging methods, the {\em local risk-minimization} approach (see e.g. \citet{schweizer2001guided} for further details). The idea of this technique is to find an optimal hedging strategy that perfectly replicates the given contingent claim with minimal cost, within a wide class of admissible strategies that in general might not necessarily be self-financing.
Locally risk-minimizing hedging strategies can be characterized via the F\"{o}llmer-Schweizer decomposition of the random variable representing the payoffs of the given contingent claim, see e.g. %examples both in the financial and insurance framework can be found in
\citet{schweizer1995minimal,schweizer2001guided} for the full information case and \citet{ceci2014bsdes, ceci2015local} under incomplete information. This quadratic hedging approach has been successfully applied to the hedging problem of insurance products, see e.g. \citet{biagini2015risk,biagini2016risk,choulli2015hedging,dahl2006valuation,moller1998risk,moller2001risk,vandaele2008locally} for the complete information case and \citet{ceci2015hedging} under partial information.

To the best of our knowledge this is the first time that the hedging problem of a unit-linked life insurance policy is studied under partial information without assuming independence between the financial and the insurance markets.

Analogously to \citet{bielecki2006hedging, biagini2012local}, we assume that hedging stops after the earlier between the policyholder death $\tau$ and the maturity $T$: this allows to work with stopped price processes and guarantees that the stopped Brownian motions, that drive the financial market, are also Brownian motions with respect to the enlarged filtration. As a consequence, we do not need to assume the {\em martingale invariance property}, also known as {\em H-hypothesis}, see e.g. \citet{bielecki2002}.
Then, we introduce the (stopped) F\"{o}llmer-Schweizer decomposition under partial information and the corresponding Galtchouk-Kunita-Watanabe decomposition with respect to the minimal martingale measure and in Proposition \ref{prop:risk_min_strategy}  and  Theorem \ref{thm:equivalence}  we
characterize the optimal strategy in terms of the integrands in these decompositions. In this sense, we extend the results obtained in \citet{biagini2012local} to the partial information framework.
Moreover, Theorems  \ref{thm:strategy0}  and \ref{thm:strategy} provide the relationship between the optimal hedging strategy under partial information and that under full information via predictable projections.
In the case where the mortality intensity depends on the unobservable stochastic factor  $X$, we can compute the optimal hedging strategy in a more explicit form by means of filtering problems.

Pricing and hedging problems for contingent claims under incomplete information using filtering techniques have been studied in credit risk context, in \citet{frey2009pricing, frey2012pricing, tardelli2015partially} and in the insurance framework in \citet{ceci2015hedging} under the hypothesis of independence between the financial and the insurance markets.

The paper is organized as follows. In Section \ref{sec:market model} we introduce the combined financial-insurance market model
in a partial information scenario via progressive enlargement of filtrations. The semimartingale decompositions of the stopped risky asset price process with respect to the enlarged filtrations $\bG$ and $\widetilde \bG$ respectively, can be found in Section \ref{sec:semi_mg_decomp}. In Section \ref{sec:lrm} we provide
a closed formula for the locally risk-minimizing hedging  strategy under incomplete information for the given endowment insurance contract  by means of predictable projections. Finally, in Section \ref{sec:markov} we discuss the problem in a Markovian framework, where the mortality intensity depends on the unobservable stochastic factor and apply the filtering approach to compute the optimal hedging  strategy.
In addition, we address the issue of the hazard process and the martingale hazard process of $\tau$ under restricted information in Appendix \ref{appendix:hazard_rate}.
Some technical results on the optional and predictable projections under partial information and certain proofs %
can be found in Appendix \ref{appendix:projections}.
%\ref{appendix:proofs}.

\section{The setting}\label{sec:market model}
We consider the problem of an insurance company that wishes to hedge a unit-linked life insurance contract. This type of contract has a relevant link with the financial market. Indeed, the value of the policy is determined by the performance of the underlying stock or portfolio. Moreover, it also depends on the remaining lifetime of the policyholder. Therefore, the most appropriate way to address the problem is to construct a combined financial-insurance market model and treat the life insurance policy as a contingent claim. We will %construct the suitable
define the suitable modeling framework via the progressively enlargement of filtration approach, which allows for possible dependence between the financial market and the insurance portfolio. %For this reason, to define the suitable setup we need to construct the combined financial-insurance market model.
First, we introduce the underlying financial market model.

\subsection{The financial market model}
Let $(\Omega,\F,\P)$ be a complete probability space endowed with a filtration $\bF=\{\F_t, \ t \in[ 0 , T]\}$, where $T$ denotes a fixed and finite time horizon, such that $\F=\F_T$ and $\F_0=\{\Omega, \emptyset\}$. On $(\Omega,\F,\P)$, we define two one-dimensional, independent $(\bF,\P)$-Brownian motions $W=\{W_t, \ t \in [0,T]\}$  and $B=\{B_t, \ t\in [0, T]\}$, with $W_0=B_0=0$.
We suppose that
\[
\bF=\bF^W\vee \bF^B,
\]
where $\bF^W$ and $\bF^B$ denote the natural filtrations of the processes $W$ and $B$, respectively.
In addition, we assume that $\bF$ satisfies the usual hypotheses of completeness and right continuity.

On the given probability space $(\Omega, \F, \P)$, we consider a simple financial market which consists of one riskless asset whose price process is assumed to be equal to $1$ at any time, and one risky asset whose (discounted) price process $S=\{S_t, \ t \in [0,T]\}$ evolves according to the following stochastic differential equation
\begin{equation}\label{eq:din_S}
\ud S_t= S_t \left(\mu(t , S_t, X_t) \ud t + \sigma(t, S_t)  \ud W_t\right),\quad S_0=s_0>0,
\end{equation}
where $X=\{X_t, \ t \in [0,T]\}$ is an unobservable exogenous stochastic factor satisfying
\begin{equation}\label{eq:din_X}
\ud X_t = b(t , X_t) \ud t  + a(t, X_t) \left[ \rho \ud W_t + \sqrt{1-\rho^2} \ud B_t\right],\quad X_0=x_0 \in \R,
\end{equation}
with $\rho \in [0,1]$. Here, the coefficients $\mu$, $b$ are $\R$-valued measurable functions and $\sigma$, $a$ are $\R^+$-valued measurable functions such that the system of equations \eqref{eq:din_S} and \eqref{eq:din_X} admits a unique strong solution, see for instance \citet[Chapter 5]{oksendal2013}.

We assume that the following conditions are in force throughout the paper:
\begin{ass}\label{ipot1}
\begin{itemize}
\item[]
\item[(i)] $\ds \esp{\int_0^{T} \left(|\mu(u, S_u, X_u)| + \sigma^2\left(u,S_u\right)\right)   \ud u} < \infty$;
\item[(ii)] $\ds \left|\frac{\mu(t, S_t, X_t)}{\sigma(t,S_t)}\right|  < c$, $\P$-a.s.,  for every $t \in [0,T]$, with $c$ being a positive constant.
\end{itemize}
\end{ass}
In particular, condition (ii) of Assumption \ref{ipot1} is required to avoid technicalities.

We observe that
$\bF^S\vee \bF^X \subseteq \bF$, where $\bF^S=\{\F_t^S, \ t \in [0,T]\}$ and $\bF^X=\{\F^X_t, \ t \in [0,T]\}$ denote the natural filtrations of the processes $S$ and $X$ respectively, and the pair $(S,X)$ is an $(\bF, \P)$-Markov process.

To exclude arbitrage opportunities, we assume that the set of all equivalent martingale measures for $S$ is non-empty and contains more than a single element, since $X$ does not represent the price of any tradeable asset, and therefore the financial market is incomplete.

Precisely, every equivalent probability measure $\Q$ has the following density $L^\Q=\{L^\Q_t, \ t \in [0,T]\}$, given by
$$
L^\Q_t:=\left.\frac{\ud \Q}{\ud \P}\right|_{\F_t} = \E \left(\int_0^. -\frac{\mu(u , S_u, X_u)}{\sigma(u, S_u) }\ud W_u + \int_0^. \psi^\Q_u \ud B_u\right)_t, \quad t \in [0,T],
$$
where $\psi^\Q=\{\psi_t^\Q,\ t \in [0,T]\}$ is an  $\bF$-predictable process such that $L^\Q$ turns out to be an
$(\bF, \P)$-martingale. Here $\E (Y)$ denotes the Dol\'{e}ans-Dade exponential of an $(\bF, \P)$-semimartingale $Y$.
The choice $\psi^\Q_t=0$, for every $t \in [0,T]$, corresponds to the so-called {\em minimal martingale measure} for $S$ (see e.g. \citet{follmer2010minimal}), denoted by $\widehat \P$, whose density process $L=\{L_t, \ t \in [0,T]\}$, is defined by
\begin{equation}\label{eq:densitaL}
L_t:=\left.\frac{\ud \widehat \P}{\ud \P}\right|_{\F_t} = \E \left(\int_0^. -\frac{\mu(u ,S_u, X_u)}{\sigma(u, S_u) }\ud W_u \right)_t, \quad t \in [0,T].
\end{equation}
Condition (ii) of Assumption \ref{ipot1} implies that $L$ is a square integrable $(\bF, \P)$-martingale. As a consequence of the Girsanov Theorem, we get that the process $\widehat W=\{\widehat W_t, \ t \in [0,T]\}$, given by
\begin{equation}\label{eq:What}
\widehat W_t:=W_t+\int_0^t\frac{\mu(u , S_u, X_u)}{\sigma(u, S_u) }\ud u, \quad t \in [0,T],
\end{equation}
is an $(\bF, \widehat \P)$-Brownian motion.

\subsection{The combined financial-insurance market model}

Let $\tau$ be the remaining lifetime of an individual with age $a$. Here $\tau$ is a nonnegative random variable $\tau:\Omega\to[0, T]\cup \{+\infty\}$ satisfying $\P(\tau=0)=0$ and $\P(\tau>t)>0$, for every $t \in [0,T]$. Since, we only consider a single policyholder we omit the dependence on the age.

%In the sequel, we will refer to $\tau$ as the {\em default time}.
%We assume in particular that $\tau$ is an $(\bF, \P)$ stopping time.
Then, we define the associated death indicator process as $H=\{H_t, \ t \in [0, T]\}$, where
\begin{equation}\label{eq:death_process}
H_t=\I_{\{\tau\le t\}}, \quad t \in [ 0, T],
\end{equation}
and $\bF^H=\{\F^H_t,\ t\in [0, T]\}$ denotes the natural filtration of $H$. % , i.e. $\F^H_t:=\sigma\{H_s, \ 0 \leq s\leq t\}$, for any $t \in [0,T]$.
Notice that $\tau$ is a stopping time with respect to the filtration $\bF^H$, but it is not necessarily a stopping time with respect to the filtration $\bF$.
%This setting is frequently used in the credit risk literature, see for instance  \citet{bielecki2004stochastic,bielecki2006completeness,bielecki2006replication, elliott2000models, kusuoka1999remark}.

Let $\bG=\{\G_t,\ t\in [0, T]\}$ be the enlarged filtration given by
\[
\G_t:=\F_t\vee\F_t^H, \quad t \in [0,T].
\]
This is the smallest filtration which contains $\bF$, such that $\tau$ is a $\bG$-stopping time. In this framework the initial market might be correlated with the time of death $\tau$. The connection between the financial market and $\tau$ is expressed in terms of the conditional distribution of $\tau$ given $\F_t$, for every $t \in [0,T]$,  defined as the process $F=\{F_t, \ t \in [0,T]\}$  given by
\begin{equation}\label{eq:cond_distribution_F}
  F_t=\P(\tau\leq t|\F_t)=\esp{H_t|\F_t}, \quad t\in[ 0, T].
\end{equation}
Notice that, $0\leq F_t\leq 1$ for every $t \in [0,T]$. In the sequel, we will assume that $F_t<1$ for every $t \in [0,T]$; this excludes the case where $\tau$ is an $\bF$-stopping time, see e.g. \citet{bielecki2002} for further details.

In the sequel we define the so-called hazard process of the random time $\tau$.
\begin{definition}
The $\bF$-{\em  hazard process} of $\tau$ under $\P$ is the nonnegative process $\Gamma=\{\Gamma_t, \ t \in [0,T]\}$ defined by
\begin{equation}\label{eq:hazard process}
  \Gamma_t=-\ln(1-F_t),\quad t \in [0,T].
\end{equation}
\end{definition}
In this paper we assume that $\Gamma$ has a density, i.e. $\Gamma_t=\int_0^t \gamma_u \ud u$, for every $t \in [0,T]$, for some nonnegative $\bF$- predictable %c\`{a}dl\`{a}g
 process  $\gamma=\{\gamma_t, \ t \in [0,T]\}$  such that $\esp{\int_0^T \gamma_u \ud u} < \infty$. The process $\gamma$ is known as the $\bF$-{\em mortality intensity} or the $\bF$-{\em mortality rate} and the $\bF$-{\em survival process} is given by $\P( \tau > t |\F_t) = e ^{- \int_0^t \gamma_u \ud u},$ $t \in [0,T]$.

Now, we  introduce the concept of the $(\bF,\bG)$-martingale hazard process associated to $\tau$.
\begin{definition}
An $\bF$-predictable, right-continuous, increasing process $\Lambda=\{\Lambda_t, \ t \in [0,T]\}$ is called an $(\bF,\bG)$-{\em martingale hazard process} of the random time $\tau$ if and only if
the process
\[
M_t=H_t-\Lambda_{t\wedge\tau}, \quad t \in [0,T],
\]
is a $(\bG, \P)$-martingale.
\end{definition}

\begin{remark}\label{MG}
It is well known that in general, the $\bF$-hazard process and the $(\bF,\bG)$-martingale hazard process do not coincide. Nevertheless, the existence of the $\bF$-mortality intensity ensures that the process $F$ is continuous and increasing. Then, by \citet[Proposition 6.2.1]{bielecki2002} we get that $\Gamma$ is also an $(\bF, \bG)$-martingale hazard process, and consequently,
the process $M=\{M_t, \ t \in [0,T]\}$ defined by
\begin{equation}\label{eq:martingala_salto}
M_t=H_t-\Gamma_{t\wedge\tau}=H_t-\int_0^{t\wedge \tau} \gamma_u \ud u=H_t-\int_0^{t} \lambda_u \ud u, \quad t \in [0,T],
\end{equation}
where $\lambda_t=\gamma_t\I_{\{\tau \geq t\}} = \gamma_t (1 - H_{t^-})$, is a $(\bG, \P)$-martingale.
Furthermore, by \citet[Chapter 6.78]{dellacherie_meyer1982}, $\tau$ is a totally inaccessible $\bG$-stopping time.
\end{remark}

We assume that the insurance company issues a unit-linked life insurance policy. In these contracts the insurance benefits depend on the price of some specific traded stock on the financial market, as well as the remaining lifetime of the policyholder. Therefore, the insurer is exposed to both financial and mortality risks. Precisely, we consider an endowment insurance contract with maturity of $T$ years which can be defined as follows.

\begin{definition}\label{def:dafaultable_claim}
An endowment insurance contract is characterized by a a triplet $(\xi, Z, \tau)$, where
\begin{itemize}
\item the random variable $\xi\in L^2(\F^S_T,\P)$ is the amount paid at maturity $T$, if the policyholder is still alive at time $T$;
\item the process $Z=\{Z_t,\ t \in [0,T]\}$ represents the amount which is immediately paid at death-time $\tau$; here, $Z$ is
assumed to be square integrable and $\bF^S$-predictable;
\item $\tau$ is time of death.
\end{itemize}
\end{definition}

\begin{remark}
If $Z=0$ the endowment insurance contract reduces to the so-called {\em term insurance contract}, which pays out the amount $\xi$ in case of survival until $T$, whereas, if $\xi=0$ we obtain the payoff of a {\em pure endowment contract}, that provides the amount $Z_\tau$ at the random time $\tau$ in case of death before time $T$.
\end{remark}

We denote by $N = \{N_t,\ t \in [0,T]\}$ the process that models the payment stream arising from the endowment insurance contract, i.e.
\begin{equation}\label{N}
N_t = Z_\tau \I_{\{ \tau \leq t\}}=\int_0^t Z_s \ud H_s,  \quad 0\leq t < T, \ \mbox{ and } \ N_T = \xi  \I_{\{ \tau > T \}}, \quad t =T.
\end{equation}

\subsection{The information levels}

We consider a scenario where the insurance company does not have a complete information on the market. Precisely, we assume that it  cannot observe neither the stochastic factor $X$ affecting the behavior of the risky asset price process $S$ nor the Brownian motions $W$ and $B$ which drive the dynamics of the pair $(S,X)$. In particular, this implies that the insurer does not know completely the $\bF$-mortality rate $\gamma$ of $\tau$.  For instance, $\gamma$ may be dependent on the unobservable stochastic factor $X$, that is $\gamma_t= \gamma(t, X_t)$, for each $t \in [0,T]$,
with $\gamma$ being a nonnegative measurable function. This special case will be discussed in Section \ref{sec:markov}.
At any time $t$, the insurer may observe the risky asset price and knows if the policyholder died or not. Hence, the available information is described by the filtration
$\widetilde \bG=\{\widetilde\G_t, \ t \in [0,T]\}$, given by
\[\widetilde \G_t:=\F_t^S\vee\F^H_t,\quad t \in [0,T].\]

Since $\bF^S\subseteq \bF$, we have $$\widetilde \bG \subseteq  \bG.$$

We assume throughout the paper that all filtrations satisfy the usual hypotheses of completeness and right-continuity.
Some results about the hazard process and the martingale hazard process of $\tau$ under partial information can be found in Appendix \ref{appendix:hazard_rate}.

In the sequel we will address the hedging problem of the endowment insurance contract $(\xi, Z, \tau)$ in a partial information setting characterized by the information flow  $\widetilde \bG$. Since hedging stops either at time $T$ or $\tau$, whichever comes first, it makes sense to consider the stopped discounted price process. This also implies that we can work without assuming the so-called martingale invariance property between filtrations $\bF$ and $\bG$, which establishes that  every $\bF$-martingale is also a $\bG$-martingale. The martingale invariance property is frequently assumed when considering enlargement of filtrations. To the best of our knowledge there are only a few papers in the literature where this hypothesis is not imposed, see for instance \citet{barbarin2009valuation, choulli2015hedging} in the insurance framework and \citet{biagini2012local} in the credit risk setting.

\section{The semimartingale decompositions of the stopped risky asset price process} \label{sec:semi_mg_decomp}

In this section we provide the semimartingale decomposition of the stopped process price process $S^\tau = \{S_{t\wedge \tau}, \ t \in [0,T]\}$ with respect to the information flows $\bG$ and $\widetilde \bG$ respectively, and we show that, under suitable conditions, $S^\tau$ satisfies the so-called {\em structure condition} with respect to both $\bG$ and $\widetilde\bG$ on the stochastic interval $\llbracket 0, \tau\wedge T\rrbracket$, see e.g. \citet[Section 1, page 1540]{schweizer1994approximating} for further details. %and definition \ref{} below.

The structure condition of the stopped price process is a relevant tool for the computation of the minimal martingale measure and the orthogonal decompositions that allow to characterize the optimal hedging strategy under full and partial information.
Moreover, the semimartingale decomposition of $S^\tau$ with respect to the information flow $\widetilde\bG$ allows to reduce the hedging problem under partial information to a full information problem where all involved processes are $\widetilde\bG$-adapted.

\begin{remark}\label{W}
Recall that if the process  $F$ given in  \eqref{eq:cond_distribution_F} is increasing, for any given $\bF$-predictable  $(\bF, \P)$-martingale, $m =\{ m_t, \ t \in [0,T]\}$,  the stopped process $m^\tau= \{m_{t\wedge \tau}, \ t \in [0,T]\}$ is a $(\bG,\P)$-martingale, see \citet[Lemma 5.1.6]{bielecki2002}.

Since $F$ is increasing in our setting, both processes $W^\tau= \{W_{t\wedge \tau}, \ t \in [0,T]\}$ and  $B^{\tau}= \{B_{t\wedge \tau}, \ t \in [0,T]\}$ are $(\bG,\P)$-martingales over $\llbracket 0, \tau\wedge T\rrbracket$.

Moreover, by L\'evy's Theorem we also obtain that $W^\tau$ and $B^{\tau}$ are $(\bG,\P)$-Brownian motions on $\llbracket 0, \tau\wedge T\rrbracket$
 and, as a consequence, the integral processes $\ds \left\{\int_0^t \varphi_s \ud W^\tau_s, \ t \in [0,T]\right\}$ and $\ds \left\{\int_0^t \varphi_s \ud B^{\tau}_s, \ t \in [0,T]\right\}$ are $(\bG,\P)$-(local) martingales for any $\bG$-predictable process $\varphi=\{\varphi_t, \ t \in [0,T]\}$.
\end{remark}

By Remark \ref{W}, we get that the stopped process $S^\tau$ is a $(\bG, \P)$-semimartingale, decomposable as the sum of a locally square integrable $(\bG, \P)$-local martingale and a $(\bG, \P)$-predictable process of finite variation, both null at zero, i.e.
\begin{align}
S^\tau_t = s_0 + \int_0^{t \wedge \tau} S^\tau_u  \mu(u, S^\tau_u, X^\tau_u) \ud u + \int_0^{t \wedge \tau} S^\tau_u \sigma(u, S^\tau_u) \ud W^\tau_u, \quad t \in [0,T],
\end{align}
where
\begin{align}
 X^\tau_t = x_0 + \int_0^{t \wedge \tau}   b(u, X^\tau_u) \ud u  + \int_0^{t \wedge \tau}  a(u, X^\tau_u)  \left[ \rho \ud W^\tau_u+ \sqrt{1-\rho^2} \ud B^\tau_u\right], \quad t \in [0,T].
 \end{align}

Since  $S^\tau$ is $\widetilde \bG$-adapted, then it also admits a semimartingale decomposition with respect to the information flow $\widetilde \bG$, which will be computed below by means of the (stopped) innovation process $I^\tau$, defined below in \eqref{I}.% Before deriving such a decomposition, we need some preliminary results.

Given any subfiltration $\bH=\{\H_t,\ t \in [0,T]\}$ of $\bG$,
we will use the notation  ${}^{o,\bH} {Y}$ (respectively ${}^{p,\bH} Y$) to indicate the optional (respectively predictable) projection of a given $\P$-integrable, $\bG$-adapted process $Y$ with respect to $\bH$ and $\P$, defined as the unique $\bH$-optional (respectively $\bH$-predictable) process such that $ {}^{o,\bH}Y_{\widehat \tau} = \esp{Y_{\widehat \tau} | \H_{\widehat \tau}}$ $\P$-a.s. (respectively $ {}^{p,\bH}Y_{\widehat \tau} = \esp{Y_{\widehat \tau} | \H_{\widehat \tau^-}}$ $\P$-a.s.) on $\{ \widehat \tau < \infty\}$ for every $\bH$-optional (respectively $\bH$-predictable) stopping time $\widehat \tau$.

Moreover, in the sequel we denote by ${}^{o,\widetilde \bG} \mu$, ${}^{p,\widetilde \bG} \mu$, the optional projection and the predictable projection respectively of the process $\{ \mu(t, S^\tau_t, X^\tau_t),  \ t \in [ 0, T] \}$ with respect to the information flow $\widetilde \bG$.

%\bigskip

\begin{lemma}\label{lemma:stopped_inovation}
Under Assumption \ref{ipot1}, the process $I^\tau=\{I_t^\tau,\ t \in [0,T]\}$
defined by
\begin{equation}\label{I}
I^\tau_t := W^\tau_t + \int_0^{t\wedge \tau}  \frac{\mu(u, S^\tau_u, X^\tau _u) - {}^{p,\widetilde \bG} \mu_u}{\sigma\left(u,S^\tau_u\right) } \ud u,
\quad t \in [0,T],
\end{equation}
%Then, $I^\tau$
is a $(\widetilde \bG, \P)$-Brownian motion on $\llbracket 0, \tau\wedge T\rrbracket$.
\end{lemma}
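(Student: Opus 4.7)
The strategy is L\'evy's characterisation of Brownian motion. I plan to verify that $I^\tau$ is (a) continuous and $\widetilde \bG$-adapted, (b) a $(\widetilde \bG,\P)$-martingale on $\llbracket 0,\tau\wedge T\rrbracket$, and (c) satisfies $[I^\tau,I^\tau]_t=t\wedge\tau$. Continuity is immediate from the defining formula \eqref{I}, and Assumption \ref{ipot1}(ii) ensures that the Lebesgue integral there is well defined: indeed its integrand is bounded by $2c$, since ${}^{p,\widetilde \bG}\mu/\sigma$ inherits the bound $c$ of $\mu/\sigma$ by a conditional Jensen argument (using that $\sigma(u,S^\tau_u)$ is $\widetilde\G_{u^-}$-measurable).

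For (a), I would substitute the $(\bG,\P)$-dynamics of $S^\tau$,
\[
\ud S^\tau_u = S^\tau_u\,\mu(u,S^\tau_u,X^\tau_u)\,\I_{\{u\leq\tau\}}\ud u + S^\tau_u\,\sigma(u,S^\tau_u)\ud W^\tau_u,
\]
into the definition of $I^\tau$ and solve for $W^\tau$, obtaining
\[
I^\tau_t = \int_0^t\frac{\ud S^\tau_u}{S^\tau_u\,\sigma(u,S^\tau_u)} - \int_0^{t\wedge\tau}\frac{{}^{p,\widetilde \bG}\mu_u}{\sigma(u,S^\tau_u)}\ud u.
\]
Both terms on the right-hand side are functionals of $S^\tau$ and of the $\widetilde\bG$-predictable process ${}^{p,\widetilde \bG}\mu$; hence $I^\tau$ is $\widetilde\bG$-adapted.

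For (b), it suffices to show that $\esp{A\,(I^\tau_t-I^\tau_s)}=0$ for every $0\leq s\leq t\leq T$ and every bounded $\widetilde\G_s$-measurable random variable $A$. Since $A$ is $\G_s$-measurable and $W^\tau$ is a $(\bG,\P)$-martingale on $\llbracket 0,\tau\wedge T\rrbracket$ by Remark \ref{W}, one has $\esp{A(W^\tau_t-W^\tau_s)}=0$; therefore, by Fubini,
\[
\esp{A(I^\tau_t-I^\tau_s)} = \int_s^t\esp{A\,\frac{\I_{\{u\leq\tau\}}}{\sigma(u,S^\tau_u)}\left(\mu(u,S^\tau_u,X^\tau_u)-{}^{p,\widetilde \bG}\mu_u\right)}\ud u.
\]
For fixed $u\in(s,t]$ the prefactor $A\,\I_{\{u\leq\tau\}}/\sigma(u,S^\tau_u)$ is $\widetilde\G_{u^-}$-measurable (recall $\I_{\{u\leq\tau\}}=1-H_{u^-}$), so conditioning on $\widetilde\G_{u^-}$ together with the defining identity ${}^{p,\widetilde \bG}\mu_u=\esp{\mu(u,S^\tau_u,X^\tau_u)\,|\,\widetilde\G_{u^-}}$ of the predictable projection makes the inner expectation vanish for a.e.\ $u$.

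For (c), the difference $I^\tau-W^\tau$ is a continuous process of finite variation, so $[I^\tau,I^\tau]_t=[W^\tau,W^\tau]_t=t\wedge\tau$, again by Remark \ref{W}. L\'evy's characterisation applied to the continuous $(\widetilde\bG,\P)$-martingale $I^\tau$ (which by construction is constant on $\llbracket\tau,T\rrbracket$) then delivers the claim. The delicate point I foresee is step (b): because $W^\tau$ is a $\bG$-martingale but fails to be $\widetilde\bG$-adapted, one cannot obtain the martingale property by a direct filtration reduction, and must instead combine the $\widetilde\bG$-adapted representation of $I^\tau$ through $S^\tau$ with the defining property of the \emph{predictable} projection; the use of the predictable (rather than optional) version is crucial here, since the indicator $\I_{\{u\leq\tau\}}$ is predictable but $\tau$ is totally inaccessible.
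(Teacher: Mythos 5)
Your proposal is correct and follows essentially the same route as the paper: adaptedness via the representation of $I^\tau$ through $S^\tau$, the martingale property via Fubini and the projection identity, and L\'evy's characterisation with $\langle I^\tau\rangle=\langle W^\tau\rangle$ since $I^\tau-W^\tau$ has finite variation. The only (harmless) difference is that you invoke the predictable projection directly at fixed deterministic times $u$, whereas the paper first passes to the optional projection using the fact that $\int_0^t {}^{o,\widetilde\bG}\psi_u\,\ud u=\int_0^t {}^{p,\widetilde\bG}\psi_u\,\ud u$ (Lemma \ref{op-pre}), which also shows that your closing remark about the predictable version being ``crucial'' is slightly overstated.
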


The proof is postponed to Appendix \ref{appendix:proofs}.

Lemma \ref{lemma:stopped_inovation} allows to get the following $\widetilde \bG$-semimartingale decomposition of $S^\tau$,
\begin{equation}\label{semimg}
S^\tau_t = s_0 + \int_0^{t \wedge \tau} \!\! S^\tau_u  \hskip 1 mm {}^{p,\widetilde \bG}  \mu_u \ud u + \int_0^{t \wedge \tau} \!\! S^\tau_u \hskip 1mm \sigma(u, S^\tau_u) \ud I^\tau_u, \quad t \in [0,T],
\end{equation}
i.e. the sum of a locally square integrable $(\widetilde \bG, \P)$-local martingale and a $(\widetilde \bG, \P)$-predictable process of finite variation both null at zero.

Moreover, $S^\tau$ satisfies the structure condition with respect to both the filtrations $\bG$ and $\widetilde \bG$. Precisely,
\begin{align}
S^\tau_t &=  s_0 + M^\G_t + \int_0^{t\wedge \tau} \alpha^\G_u \ud \langle M^\G \rangle_u, \quad t \in \llbracket 0, \tau\wedge T\rrbracket,\\
S^\tau_t &= s_0 + M^{\widetilde \G}_t +  \int_0^{t\wedge \tau} \alpha^{\widetilde\G}_u  \ud \langle M^{\widetilde\G} \rangle_u, \quad t \in \llbracket 0, \tau\wedge T\rrbracket,
\end{align}
where $M^\G=\{M_t^\G,\ t \in [0,T]\}$ and $M^{\widetilde\G}=\{M_t^{\widetilde \G},\ t \in [0,T]\}$ are the locally square integrable  $(\bG, \P)$-local martingale and $(\widetilde \bG, \P)$-local martingale respectively, given by
\begin{equation}\label{eq:martG}
M^\G_t :=   \int_0^{t\wedge \tau} S^\tau_u  \sigma(u, S^\tau_u) \ud W^\tau_u, \quad  M^{\widetilde\G}_t :=  \int_0^{t\wedge \tau} S^\tau_u \sigma(u, S^\tau_u) \ud I^\tau_u,\quad t \in [0,T],
\end{equation}
and $\alpha^\G=\{\alpha_t^\G,\ t \in [0,T]\}$ and $\alpha^{\widetilde \G}=\{\alpha_t^{\widetilde \G},\ t \in [0,T]\}$ are the $\bG$-predictable and $\widetilde \bG$-predictable processes, respectively given by
$$ \alpha^\G_t :=  { \mu(t, S^\tau_t, X^\tau_t)  \over S^\tau_t \sigma^2(t, S^\tau_t) }, \quad  \alpha^{\widetilde \G}_t :=  { {}^{p,\widetilde \bG}  \mu_t  \over S^\tau_t \sigma^2(t, S^\tau_t)}, \quad t \in [0,T].
$$

\section{Local risk-minimization for payment streams under partial information} \label{sec:lrm}

The combined financial-insurance market model outlined in Section \ref{sec:market model} is not complete. This frequently occurs in the insurance framework where typically the number of random sources is larger than the number of tradeable risky assets due to the presence of a totally inaccessible death time. Moreover, here additional randomness is brought by the unobservable stochastic factor $X$. This implies that a self-financing hedging strategy in the classical sense does not exist. The goal of the current section is to provide a locally risk-minimizing hedging strategy under restricted information for the payment stream associated to the endowment insurance contract $(\xi, Z, \tau)$, and discuss the relationship with the corresponding optimal hedging strategy under full information.

In the sequel, we define the classes of admissible hedging strategies under full and partial information.
\begin{definition}
The space $\Theta^{\bF, \tau}$ consists of all $\R$-valued $\bF$-predictable processes $\theta=\{\theta_t, \ t \in \llbracket 0, T\wedge \tau\rrbracket\}$ satisfying
\begin{align*} \esp{\int_0^{T \wedge \tau} \left(\theta_u \sigma(u, S^\tau_u) S^\tau_u \right) ^2 \ud u + \left( \int_0^{T \wedge \tau}  |\theta_u  \hskip 1mm \mu(u, S^\tau_u, X^\tau_u) S^\tau_u| \ud u \right)^2  } < \infty.
\end{align*}
\end{definition}
\begin{definition}
The space $\Theta^{\bF^S, \tau}$ consists of all $\R$-valued $\bF^S$-predictable processes $\theta=\{\theta_t, \ t \in \llbracket 0, T\wedge \tau\rrbracket\}$ satisfying
\begin{align*} \esp{\int_0^{T \wedge \tau} \left(\theta_u \sigma(u, S^\tau_u) S^\tau_u \right) ^2 \ud u + \left( \int_0^{T \wedge \tau}  |\theta_u \ {}^{p, \widetilde \bG} \mu_u \ S^\tau_u| \ud u \right)^2 } < \infty.\end{align*}
\end{definition}

\begin{remark}
Notice that for $\theta \in \Theta^{\bF, \tau}$ (respectively  $\theta \in \Theta^{\bF^S, \tau}$), we get
\begin{itemize}
\item[(i)] $\ds \int_0^{t\wedge \tau} \theta_u \ud S_u=\int_0^{t} \theta_u \ud S^\tau_u$, for every $t \in [0,T]$, see \citet[Chapter VIII, equation 3.3]{dellacherie_meyer1982};
\item[(ii)]the integral process $\ds \left\{\int_0^{t} \theta_u \ud S^\tau_u, \ t \in [0, T] \right\}$  is a $(\bG, \P)$-semimartingale (respectively $(\widetilde \bG, \P)$-semimartingale), see \citet[Chapter 3.II]{shiryaev1998}.
\end{itemize}
\end{remark}

\begin{definition}
An $(\bF,\bG)$-strategy (respectively $(\bF^S,\widetilde \bG)$-strategy) is a bidimensional process $\varphi =(\theta, \eta)$  where  $\theta \in \Theta^{\bF, \tau}$ (respectively $\theta \in \Theta^{\bF^S, \tau}$) and
$\eta$ is a real-valued $\bG$-adapted (respectively $\widetilde \bG$-adapted) process such that the associated value process
$V(\varphi) := \theta S^\tau + \eta$ is right-continuous and square integrable over $\llbracket 0, T\wedge \tau \rrbracket$.
\end{definition}

Note that the first component $\theta$ of the $(\bF,\bG)$-strategy (respectively $(\bF^S,\widetilde \bG)$-strategy), which represents the number of risky assets in the portfolio,  is $\bF$-predictable (respectively $\bF^S$-predictable), while the amount $\eta$ invested in the risk-free asset is $\bG$-adapted (respectively $\widetilde \bG$-adapted). This reflects the natural situation where a trader invests in the risky asset according to her/his knowledge on the asset prices before the death of the policyholder and rebalances the portfolio also upon the death information.

Following \citet{schweizer2008local}, we assign to each admissible strategy a cost process.
\begin{definition}
The {\em cost process} $C(\varphi)$ of an $(\bF,\bG)$-strategy (respectively $(\bF^S,\widetilde \bG)$-strategy) $\varphi=(\theta, \eta)$ is given by
\begin{align}\label{eq:cost}C_t(\varphi) := N_t + V_t(\varphi) - \int_0^t \theta_u \ud S^\tau_u, \quad t \in \llbracket 0, T\wedge \tau \rrbracket,
\end{align}
where $N$ is defined in \eqref{N}.

A $(\bF, \bG)$-strategy (respectively $(\bF^S,\widetilde \bG)$-strategy) $\varphi$ is called {\em mean-self-financing} if its cost process $C(\varphi)$ is a $(\bG, \P)$-martingale (respectively $(\widetilde \bG, \P)$-martingale).
\end{definition}

It is well known in the literature (see e.g. \citet{moller2001risk,schweizer2008local, biagini2012local}) that a natural extension of the local risk-minimization approach to payment streams requires to look for admissible strategies satisfying the $0$-achieving property, that is,
$$V_{\tau \wedge T}(\varphi)=0, \quad \P-\mbox{a.s.}.$$
Then, by \citet[Theorem 1.6]{schweizer2008local}, we provide the following equivalent definition of locally risk-minimizing strategy.
\begin{definition}
Let $N$ be the payment stream given in \eqref{N} associated to the endowment insurance contract $(\xi, Z, \tau)$.
We say that an $(\bF,\bG)$-strategy (respectively $(\bF^S,\widetilde \bG)$-strategy) $\varphi$ is $(\bF,\bG)$-{\em locally risk-minimizing} (respectively $(\bF^S,\widetilde \bG)$-{\em locally risk-minimizing}) for $N$ if
\begin{itemize}
\item[(i)]$\varphi$ is $0$-achieving and mean-self-financing,
\item[(ii)]the cost process  $C(\varphi)$, defined in \eqref{eq:cost}, is strongly orthogonal to the $\bG$-martingale part $M^\G$ (respectively $\widetilde \bG$-martingale part $M^{\widetilde \G}$) of $S^\tau$, both given in \eqref{eq:martG}.
\end{itemize}
\end{definition}

Locally risk-minimizing hedging strategies can be characterized via the F\"{o}llmer-Schweizer decomposition  of payment streams associated to life insurance contracts under partial information.

We recall the definition of stopped F\"{o}llmer-Schweizer  decomposition of a square integrable random variable with respect to $\bG$ and $\widetilde \bG$.
\begin{definition}[Stopped F\"{o}llmer-Schweizer decomposition with respect to $\bG$]
Given a random variable $\zeta\in L^2(\G_T, \P)$, we say that $\zeta$ admits a {\em stopped F\"{o}llmer-Schweizer decomposition with respect to $\bG$}, if there exist a process $\theta^\F \in \Theta^{\bF, \tau}$, a square integrable $(\bG, \P)$-martingale $A^\G=\{A^\G_t, \ t \in \llbracket0,T\wedge\tau\rrbracket \}$ null at zero, strongly orthogonal to the martingale part of $S^\tau$, $M^\G$, given in \eqref{eq:martG}, and $\zeta_0\in \R$ such that
\begin{equation}\label{eq:FS-G}
\zeta=\zeta_0+\int_0^T\theta^\F_u \ud S^\tau_u+A^\G_{T\wedge\tau}, \quad \P-a.s..
\end{equation}
\end{definition}

\begin{definition}[Stopped F\"{o}llmer-Schweizer decomposition with respect to $\widetilde \bG$]
Given a random variable $\zeta\in L^2(\widetilde \G_T, \P)$, we say that $\zeta$ admits a {\em stopped F\"{o}llmer-Schweizer decomposition with respect to $\widetilde \bG$}, if there exist a process $\theta^{\F^S} \in \Theta^{\bF^S, \tau}$, a square integrable $(\widetilde \bG, \P)$-martingale $A^{\widetilde \G}=\{A^{\widetilde \G}_t, \ t \in \llbracket  0,T\wedge\tau \rrbracket \}$ null at zero, strongly orthogonal to the martingale part of $S^\tau$,$M^{\widetilde \G}$, given in \eqref{eq:martG}, and $\zeta_0\in \R$ such that
\begin{equation}\label{eq:FS-Gtilde}
\zeta=\zeta_0+\int_0^T\theta^{\F^S}_u \ud S^\tau_u+A^{\widetilde \G}_{T\wedge\tau}, \quad \P-a.s..
\end{equation}
\end{definition}

Under Assumption \ref{ipot1}, the {\em mean variance tradeoff processes} $K=\{K_t,\ t \in [0,T]\}$ and $\widetilde K=\{\widetilde K_t,\ t \in [0,T]\}$ under $\bG$ and $\widetilde \bG$,  respectively defined by
\begin{align}
K_t:=\int_0^{t} (\alpha^\G_u)^2\ud \langle M^\G \rangle_u, \qquad \widetilde K_t:=\int_0^{t} (\alpha^{\widetilde\G}_u)^2\ud \langle M^\G \rangle_u, \quad t \in [0,T], \label{eq:mvt_Gtilde}
\end{align}
are bounded uniformly in $t$ and $\omega$. This guarantees the existence of decompositions \eqref{eq:FS-G} and \eqref{eq:FS-Gtilde} for every square integrable random variable, see e.g. \citet[Section 5]{schweizer1994approximating} and references therein. Other classes of sufficient conditions  for the existence of the F\"{o}llmer-Schweizer decompositions can be found e.g. in \citet{schweizer1995minimal}, \citet{monat1995follmer}, \citet{choulli2010follmer} and \citet{ceci2014bsdes}.

The following proposition gives a characterization of the optimal hedging strategy.
\begin{proposition}\label{prop:risk_min_strategy}
 Let  $N$ be the payment stream associated to the endowment insurance contract $(\xi, Z, \tau)$. Then, $N$ admits an $(\bF^S, \widetilde \bG)$-locally risk-minimizing strategy $\varphi^*=(\theta^*, \eta^*)$ if and only if $N_{T\wedge\tau}=\xi \I_{\{\tau>T\}}+Z_\tau\I_{\{\tau\leq T\}}$ admits a stopped F\"{o}llmer-Schweizer decomposition with respect to $\widetilde \bG$, i.e.
 \begin{equation}\label{eq:decomp_1}
 N_{T\wedge \tau}=\zeta_0+\int_0^T\theta^{\F^S}_u \ud S^\tau_u + A^{\widetilde \G}_{t\wedge\tau} \quad \P-a.s..
 \end{equation}
 Finally, the strategy $\varphi^*$ is explicitly given by
\begin{align}\label{eq:strategy}
\theta^*=\theta^{\F^S}, \quad \eta^*=V(\varphi^*)-\theta^{\F^S}S^\tau,
\end{align}
with value process
\begin{align}\label{eq:port_value}
V_t(\varphi^*)=\zeta_0+\int_0^t\theta^{\F^S}_u \ud S^\tau_u+A^{\widetilde \G}_t-N_t, \quad t \in \llbracket 0,T\wedge\tau\rrbracket,
\end{align}
and minimal cost
\begin{align}\label{eq:opt_cost}
C_t(\varphi^*)=\zeta_0+A^{\widetilde \G}_t, \quad t \in \llbracket 0,T\wedge\tau\rrbracket.
\end{align}
\end{proposition}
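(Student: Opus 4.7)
The plan is to treat the proposition as the standard Schweizer-type characterization of locally risk-minimizing strategies by the Föllmer--Schweizer decomposition, transported into the partial-information setting governed by $\widetilde\bG$. The two directions correspond to constructing a strategy from an FS decomposition, and recovering a decomposition from a strategy. I would proceed by showing the ``if'' direction first, since it gives the explicit formulas \eqref{eq:strategy}--\eqref{eq:opt_cost}, and then argue the ``only if'' direction by reading the decomposition off the cost process.

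For the ``if'' direction, assume \eqref{eq:decomp_1} holds. I would \emph{define} $\theta^*:=\theta^{\F^S}$, and then define $\eta^*$ implicitly through \eqref{eq:port_value}, i.e.\ set
\begin{equation*}
V_t(\varphi^*):=\zeta_0+\int_0^t\theta^{\F^S}_u \ud S^\tau_u+A^{\widetilde \G}_t-N_t,\qquad t\in\llbracket 0,T\wedge\tau\rrbracket,
\end{equation*}
and $\eta^*:=V(\varphi^*)-\theta^{\F^S}S^\tau$. The first task is to check admissibility: $\theta^*\in\Theta^{\bF^S,\tau}$ by hypothesis, while $\eta^*$ is $\widetilde\bG$-adapted as a difference of $\widetilde\bG$-adapted processes, and $V(\varphi^*)$ is square integrable because $\int_0^\cdot\theta^{\F^S}\ud S^\tau$, $A^{\widetilde\G}$ and $N$ all are on $\llbracket 0,T\wedge\tau\rrbracket$. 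Evaluating at $T\wedge\tau$ and substituting \eqref{eq:decomp_1} gives $V_{T\wedge\tau}(\varphi^*)=0$, so $\varphi^*$ is $0$-achieving. The cost process computed from \eqref{eq:cost} telescopes to
\begin{equation*}
C_t(\varphi^*)=N_t+V_t(\varphi^*)-\int_0^t\theta^{\F^S}_u\ud S^\tau_u=\zeta_0+A^{\widetilde\G}_t,
\end{equation*}
which is \eqref{eq:opt_cost}. Since $A^{\widetilde\G}$ is a square integrable $(\widetilde\bG,\P)$-martingale strongly orthogonal to $M^{\widetilde\G}$, the strategy $\varphi^*$ is mean-self-financing and satisfies condition (ii) of the definition of $(\bF^S,\widetilde\bG)$-locally risk-minimizing strategy.

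For the ``only if'' direction, suppose $\varphi^*=(\theta^*,\eta^*)$ is an $(\bF^S,\widetilde\bG)$-locally risk-minimizing strategy for $N$. Define $\zeta_0:=C_0(\varphi^*)$, which is a constant because $\widetilde\G_0$ is $\P$-trivial by construction of $\widetilde\bG$, and set $A^{\widetilde\G}_t:=C_t(\varphi^*)-\zeta_0$ on $\llbracket 0,T\wedge\tau\rrbracket$. By mean-self-financing, $A^{\widetilde\G}$ is a square integrable $(\widetilde\bG,\P)$-martingale null at zero, and by strong orthogonality of $C(\varphi^*)$ to $M^{\widetilde\G}$ the same holds for $A^{\widetilde\G}$. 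Rewriting \eqref{eq:cost} at $t=T\wedge\tau$ and using the $0$-achieving property $V_{T\wedge\tau}(\varphi^*)=0$ yields
\begin{equation*}
N_{T\wedge\tau}=\zeta_0+\int_0^{T\wedge\tau}\theta^*_u\ud S^\tau_u+A^{\widetilde\G}_{T\wedge\tau},
\end{equation*}
which is the stopped Föllmer--Schweizer decomposition of $N_{T\wedge\tau}$ with respect to $\widetilde\bG$, with integrand $\theta^{\F^S}=\theta^*\in\Theta^{\bF^S,\tau}$.

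The main conceptual point I would be most careful about is the interplay between the $0$-achieving condition and the stopping at $T\wedge\tau$: this is what turns the cost-process identity into a \emph{terminal} representation of $N_{T\wedge\tau}$, and it is what allows the equivalence to be stated purely in terms of a square-integrable FS decomposition of the $\widetilde\G_T$-measurable random variable $\xi\I_{\{\tau>T\}}+Z_\tau\I_{\{\tau\le T\}}$. Once this bridging identity is in hand the rest is routine bookkeeping, and the explicit formulas \eqref{eq:strategy}--\eqref{eq:opt_cost} fall out of the construction above. Existence of the decomposition itself is not part of what has to be proved here (it is guaranteed by the boundedness of the mean-variance tradeoff process $\widetilde K$ noted after \eqref{eq:mvt_Gtilde}); only the equivalence with the hedging problem does.
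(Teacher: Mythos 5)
Your proof is correct and follows essentially the same route as the paper's: the paper likewise constructs $\varphi^*$ from the decomposition in one direction and reads the decomposition off the cost process via $\zeta_0=C_0(\varphi)$, $A^{\widetilde\G}=C(\varphi)-C_0(\varphi)$ and the $0$-achieving property in the other, citing the analogous result of Biagini et al.\ under the filtration replacement $(\bG,\bF)\to(\widetilde\bG,\bF^S)$. No gaps.
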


\begin{proof}
The proof follows by that of \citet[Proposition 3.7]{biagini2012local}, by replacing the filtrations $\bG$ and $\bF$ with $\widetilde \bG$ and $\bF^S$, respectively.

Precisely, if $ N_{T\wedge \tau}$ has the stopped F\"{o}llmer-Schweizer decomposition with respect to $\widetilde \bG$ \eqref{eq:decomp_1}, then \eqref{eq:strategy} and \eqref{eq:port_value} define an $(\bF^S, \widetilde \bG)$-strategy with cost \eqref{eq:opt_cost}. It is easy to see that $C(\varphi^*)$ is a martingale and that $\varphi^*$ is $0$-achieving, and therefore $\varphi^*$ is an $(\bF^S, \widetilde \bG)$-locally risk-minimizing strategy. For the converse implication, note that if $\varphi$ is $(\bF^S, \widetilde\bG)$-locally risk minimizing, then it is $0$-achieving and mean-self-financing, and we get
\[
N_{T\wedge \tau}=C_{T\wedge\tau}(\varphi)+\int_0^T\theta_u \ud S^\tau_u=C_{0 }(\varphi)+\int_0^T\theta_u \ud S^\tau_u+ (C_{T\wedge\tau}(\varphi)-C_{0}(\varphi)),
\]
which is equivalent to \eqref{eq:decomp_1} with $\ds \zeta_0:=C_{0}(\varphi), \quad \theta^{\F^S}:=\theta, \quad A^{\widetilde \G}:=C(\varphi)-C_{0}(\varphi)$. Finally note that $A^{\widetilde \G}$ is strongly orthogonal to the $\widetilde \bG$-martingale part of $S^\tau$, which concludes the proof.
\end{proof}

\subsection{The optimal strategy via the Galtchouk-Kunita-Watanabe decomposition}

When $S^\tau$ is continuous and satisfies the structure condition, the (stopped) F\"{o}llmer-Schweizer decomposition of a given square integrable random variable with respect to $S^\tau$ can be computed by switching to the minimal martingale measure. In the following, we provide the definition of the minimal martingale measure adapted to %the setting of
the combined financial-insurance market model outlined in Section \ref{sec:market model}.

\begin{definition}\label{def:mmm}
A martingale measure $\widehat \P$ equivalent to $\P$ with square integrable density is called {\em minimal} for $S^\tau$ if any square integrable $(\bG, \P)$-martingale, which is strongly orthogonal to the martingale part of $S^\tau$, $M^\G$ given in \eqref{eq:martG}, under $\P$ is also a $(\bG, \widehat \P)$-martingale.
\end{definition}

Define the process $L^\tau=\{L^\tau_t, \ t \in [0,T]\}$ by setting
\begin{equation}\label{eq:Ltau}
L^\tau_t = \left.\frac{ \ud \widehat\P }{ \ud \P}\right|_{\G_{\tau\wedge t}} := \mathcal E\left( - \int_0^\cdot  {\mu(u, S^\tau_u, X^\tau_u)  \over \sigma(u,S^\tau_u) } \ud  W^\tau_u  \right)_{t\wedge \tau}, \quad t \in [0,T]. \end{equation}
By condition (ii) of Assumption \ref{ipot1}, we get that $L^\tau_t \in L^2(\G_t, \P)$, for every $t \in [0,T]$.

Applying the results in \citet{ansel1993unicite}, we get that $\widehat \P$, given in \eqref{eq:Ltau}, corresponds to the minimal martingale measure.
By the Girsanov theorem the process $\widehat W ^\tau=\{\widehat W ^\tau_t, \ t \in [0,T]\}$, defined by
$$ \widehat W ^\tau_t := W^\tau _t + \int_0^{t \wedge \tau}  {\mu(u, S^\tau_u, X^\tau_u)  \over \sigma(u,S^\tau_u) } \ud u, \quad t \in [0, T],$$
is a $(\bG, \widehat \P)$-Brownian motion.

Note that $L^\tau$ and $\widehat W^\tau$ coincide with the processes $L$ and $\widehat W$, given in \eqref{eq:densitaL} and \eqref{eq:What} respectively, on the stochastic interval $\llbracket 0, T\wedge\tau\rrbracket$.
\begin{remark}
We may also define the minimal martingale measure $\widehat \Q$ for $S^\tau$ with respect to the information flow $\widetilde \bG$, by setting
$$
\left.{ \ud \widehat \Q \over \ud \P}\right|_{\widetilde \G_{\tau\wedge T}} := \doleans{ - \int_0^\cdot \alpha^{\widetilde \G}_u \ud  M^{\widetilde \G }_u }_{T\wedge \tau} =  \doleans{- \int_0^\cdot  {{}^{p,\widetilde \bG} \mu_u  \over \sigma(u,S^\tau_u) }  \ud  I^\tau_u}_{T\wedge \tau}.$$
Since $S^\tau$ has continuous trajectories, $ \widehat \Q$ coincides with the restriction of $\widehat \P$ over $\widetilde \G_{\tau\wedge T}$, see, e.g. \citet[Lemma 4.3]{ceci2015local}. Indeed, by \eqref{I}
 $$I^\tau_t + \int_0^{t \wedge \tau}  {{}^{p,\widetilde \bG}  \mu_u  \over \sigma(u,S^\tau_u) } \ud u =
W^\tau _t + \int_0^{t \wedge \tau}  {\mu(u, S^\tau_u, X^\tau_u)  \over \sigma(u,S^\tau_u) } \ud u =
\widehat W ^\tau_t, \quad t \in [0,T],$$
which, therefore, implies that the process $\ds \left\{I^\tau_t + \int_0^{t \wedge \tau}  {{}^{p,\widetilde \bG}  \mu_u  \over \sigma(u,S^\tau_u) } \ud u,\ t \in [0,T] \right\}$ is a
$(\widetilde \bG, \widehat \P)$-Brownian motion since it is $\widetilde \bG$-adapted.
\end{remark}
In the following we show that the F\"{o}llmer-Schweizer decomposition of the payment stream $N$ associated to the endowment insurance contract $(\xi, Z, \tau)$ indeed coincides with its Galtchouk-Kunita-Watanabe decomposition under the minimal martingale measure, which is easier to characterize.

For reader's convenience, we recall the definition of the Galtchouk-Kunita-Watanabe decomposition of an $\R$-valued local martingale, adapted to this setting.
\begin{definition}[Galtchouk-Kunita-Watanabe decomposition]
Any $\R$-valued $\bG$-local martingale (respectively $\widehat \bG$-local martingale) $\zeta=\{\zeta_t,\ t \in \llbracket 0, T \wedge \tau \rrbracket\}$ admits a {\em Galtchouk-Kunita-Watanabe decomposition} with respect to $S^\tau$ under $\widehat \P$, that is, it can be uniquely written as
\begin{equation}
\zeta_t=\zeta_0 + \int_0^t \bar \theta_u \ud S_u^\tau + \bar A_t, \quad \widehat \P-a.s., \quad t \in \llbracket0,T\wedge \tau\rrbracket,
\end{equation}
where $\zeta_0 \in \R$, $\bar \theta=\{\bar \theta_t, \ t \in \llbracket 0,T\wedge\tau\rrbracket\}$  is a $\bG$-predictable (respectively $\widehat \bG$-predictable) process
such that $\{\int_0^{t}\left(\bar \theta_u \sigma(u,S_u^\tau)S_u^\tau)\right)^2\ud u,\ t \in \llbracket0,T\wedge \tau\rrbracket\}$ is locally $\widehat \P$-integrable and $\bar A=\{\bar A_t,\ t \in \llbracket0,T\wedge \tau\rrbracket\}$ is a $(\bG,\widehat \P)$-local martingale (respectively $(\widehat \bG,\widehat \P)$-local martingale) null at zero, strongly orthogonal to $S^\tau$.
\end{definition}

Consider the payment stream $N$ associated to the endowment insurance contract $(\xi, Z, \tau)$, and define the process $\widehat V=\{\widehat V_t,\ t \in \llbracket0,T\wedge \tau\rrbracket\}$ by setting
\begin{equation}\label{eq:hatV}
\widehat V_t:=\widehatesp{N_{T\wedge\tau}|\widetilde \G_t}, \ t \in \llbracket 0,T\wedge \tau\rrbracket,
\end{equation}
where $\widehatesp{Y \Big|\widetilde \G_t}$ denotes the conditional expectation of a $\widehat \P$-integrable random variable $Y$ with respect to $\widehat \P$ and the $\sigma$-algebra $\widetilde \G_t$, for every $t \in [0,T]$.

Since $S^\tau$ is a $(\widetilde \bG, \widehat \P)$-martingale and $\widehat V_t\in L^1(\widetilde \G_t, \widehat \P)$, for every $t \in \llbracket0,T \wedge \tau\rrbracket $, then
 $\widehat V$ admits the Galtchouk-Kunita-Watanabe decomposition with respect to $S^\tau$ under $(\widetilde \bG, \widehat \P)$ given by
\begin{equation}\label{eq:GKW1}
\widehat V_t= \espp{N_{T\wedge\tau}}+\int_0^t \widetilde \theta_u \ud S^\tau_u+ \widehat A_{t},  \quad t \in \llbracket0,T \wedge \tau\rrbracket,
\end{equation}
where $\widetilde \theta=\{\widetilde \theta_t, \ t \in \llbracket 0,T\wedge\tau\rrbracket\}$ is a $ \widetilde \bG$-predictable, integrable process with respect to $S^\tau$, $\widehat A=\{\widehat A_t, \ t \in \llbracket 0,T\wedge\tau\rrbracket\}$ is a $(\widetilde \bG,\widehat \P)$-martingale null at time zero, strongly orthogonal to $S^\tau$.
It is always possible to replace $\widetilde \theta$ by an $\bF^S$-predictable process $\widehat \theta$ such that $\I_{\{\tau\geq t\}}\widetilde \theta_t=\I_{\{\tau\geq t\}}\widehat \theta_t$, for each $ t \in [0,T]$. Then, equation \eqref{eq:GKW1} can be written as
\begin{equation}\label{eq:GKW}
\widehat V_t= \espp{N_{T\wedge\tau}}+\int_0^t \widehat \theta_u \ud S^\tau_u+ \widehat A_{t}, \quad t \in \llbracket0,T \wedge \tau\rrbracket.
\end{equation}

\begin{theorem}\label{thm:equivalence}
Let $N$ be the payment stream given by \eqref{N}, associated to the the endowment insurance contract $(\xi, Z, \tau)$. If either $N_{T\wedge\tau}$ admits a stopped F\"{o}llmer-Schweizer decomposition with respect to $\widetilde \bG$, or
$\widehat \theta\in \Theta^{\F^S,\tau}$ and $\widehat A$ is a square integrable $(\widetilde \bG,\P)$-martingale null at time zero, strongly orthogonal to the martingale part of $S^\tau$, $M^{\widetilde \G}$, then \eqref{eq:GKW} for $t=T\wedge\tau$ gives the stopped F\"{o}llmer-Schweizer decomposition of $N_{T\wedge\tau}$ with respect to $\widetilde \bG$.
\end{theorem}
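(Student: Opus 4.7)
The plan is to show that, under either of the two listed hypotheses, the Galtchouk-Kunita-Watanabe decomposition \eqref{eq:GKW} evaluated at $t=T\wedge\tau$ satisfies all the requirements of the stopped F\"ollmer-Schweizer decomposition of $N_{T\wedge\tau}$ with respect to $\widetilde \bG$. The bridge between the two decompositions is the defining property of the minimal martingale measure (Definition \ref{def:mmm}), together with the continuity of $S^\tau$, which makes its predictable quadratic covariation with any semimartingale coincide with that of its $\widetilde \bG$-martingale part $M^{\widetilde \G}$ and be invariant under the equivalent measure change $\P\leftrightarrow\widehat \P$.

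First I would record two preliminary observations. Since $\widehat V_{T\wedge\tau}=N_{T\wedge\tau}$ $\widehat \P$-a.s., hence $\P$-a.s., specialising \eqref{eq:GKW} at $t=T\wedge\tau$ already yields a candidate decomposition with $\zeta_0=\widehat{\mathbb E}[N_{T\wedge\tau}]$, $\theta^{\F^S}=\widehat \theta$ and $A^{\widetilde \G}=\widehat A$; thus only the admissibility of $\widehat \theta$ and the $\P$-martingale and orthogonality properties of $\widehat A$ demanded in \eqref{eq:FS-Gtilde} remain to be checked. Moreover, by continuity of $S^\tau$ the brackets $\langle \widehat A,S^\tau\rangle$ and $\langle \widehat A,M^{\widetilde \G}\rangle$ coincide and are the same under $\P$ and $\widehat \P$, so as soon as $\widehat A$ is a $\P$-local martingale its strong $\widehat \P$-orthogonality to $S^\tau$ is equivalent to its strong $\P$-orthogonality to $M^{\widetilde \G}$. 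Under the second hypothesis all the missing conditions are postulated directly, which immediately produces the stopped F\"ollmer-Schweizer decomposition of $N_{T\wedge\tau}$ with respect to $\widetilde \bG$.

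For the first hypothesis I would start from a stopped F\"ollmer-Schweizer decomposition $N_{T\wedge\tau}=\zeta_0+\int_0^T\theta^{\F^S}_u\,\ud S^\tau_u+A^{\widetilde \G}_{T\wedge\tau}$. The admissibility $\theta^{\F^S}\in\Theta^{\F^S,\tau}$ supplies $L^2(\P)$-integrability of $\theta^{\F^S}\sigma(\cdot,S^\tau)S^\tau$, while condition (ii) of Assumption \ref{ipot1} makes the density $L^\tau$ square integrable; a Cauchy-Schwarz argument then upgrades $\int_0^{\cdot\wedge\tau}\theta^{\F^S}_u\,\ud S^\tau_u$ from a local to a true $(\widetilde \bG,\widehat \P)$-martingale. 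The orthogonal term $A^{\widetilde \G}$ is a square integrable $(\widetilde \bG,\P)$-martingale strongly orthogonal to $M^{\widetilde \G}$, so Definition \ref{def:mmm} gives that $A^{\widetilde \G}$ is also a $(\widetilde \bG,\widehat \P)$-martingale, and by the bracket computation above it is strongly orthogonal to $S^\tau$ under $\widehat \P$. Taking $\widehat \P$-conditional expectation given $\widetilde \G_t$ on both sides of the FS identity yields $\widehat V_t=\zeta_0+\int_0^{t\wedge\tau}\theta^{\F^S}_u\,\ud S^\tau_u+A^{\widetilde \G}_{t\wedge\tau}$; evaluating at $t=0$ identifies $\zeta_0$ with $\widehat{\mathbb E}[N_{T\wedge\tau}]$, and uniqueness of the GKW decomposition of $\widehat V$ under $(\widetilde \bG,\widehat \P)$ then forces $\widehat \theta=\theta^{\F^S}$ and $\widehat A=A^{\widetilde \G}$ on $\llbracket 0,T\wedge\tau\rrbracket$, transferring the $\P$-side admissibility and orthogonality from \eqref{eq:FS-Gtilde} to \eqref{eq:GKW} at $t=T\wedge\tau$. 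The main obstacle is precisely this passage from a $\widehat \P$-local to a true $\widehat \P$-martingale for the stochastic integral, which justifies the conditioning step and underlies the identification of the two decompositions; it hinges on the uniform bound in condition (ii) of Assumption \ref{ipot1} that makes $L^\tau$ square integrable.
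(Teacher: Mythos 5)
Your proposal is correct and follows essentially the same route as the paper, which simply delegates the argument to \citet[Theorem 3.9]{biagini2012local} and \citet[Theorem 3.5]{schweizer2001guided}: the minimal martingale measure turns the orthogonal part into a $\widehat \P$-martingale, continuity of $S^\tau$ makes the relevant predictable brackets invariant under the equivalent change of measure, and uniqueness of the Galtchouk-Kunita-Watanabe decomposition identifies the two decompositions. The only step worth stating more carefully is that Definition \ref{def:mmm} concerns $(\bG,\P)$-martingales orthogonal to $M^\G$, so to conclude that $A^{\widetilde \G}$ is a $(\widetilde \bG,\widehat \P)$-martingale you should invoke the Remark preceding the theorem, which identifies $\widehat \P$ restricted to $\widetilde \G_{\tau\wedge T}$ with the minimal martingale measure $\widehat \Q$ relative to $\widetilde \bG$.
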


\begin{proof}
The proof follows by that of \citet[Theorem 3.9]{biagini2012local} or \citet[Theorem 3.5]{schweizer2001guided}, by replacing the filtrations $\bG$ and $\bF$ with $\widetilde \bG$
and $\bF^S$, respectively.
\end{proof}

The above theorem provides the existence of the optimal hedging strategy. Now, we need a more explicit characterization which allows to compute the locally risk-minimizing strategy for any kind of unit-linked life insurance contract of the form $(\xi, Z, \tau)$ given in Definition \ref{def:dafaultable_claim}. In the sequel, given any subfiltration $\bH$ of $\bG$, the notation ${}^{\widehat p, \bH}Y$ refers to the $(\bH, \widehat \P)$-predictable projection of a given $\widehat \P$-integrable $\bG$-adapted process $Y$.

Proposition \ref{thm:strategy0} below provides a representation of the integrand in the Galtchouk-Kunita-Watanabe decomposition of $N_{T\wedge \tau}$ under partial information in terms of the corresponding Galtchouk-Kunita-Watanabe decomposition under full information, and then Theorem \ref{thm:strategy} will give the characterization of the locally risk-minimizing strategy for the insurance claim $(\xi, Z, \tau)$ under partial information.

\begin{proposition}\label{thm:strategy0}
Let $N$ be the payment stream given by \eqref{N}, associated to the endowment insurance contract $(\xi, Z, \tau)$ and assume $N_{T\wedge \tau}$ and $S^\tau$ to be $\widehat \P$-square integrable.

Consider the Galtchouk-Kunita-Watanabe decomposition of $N_{T\wedge \tau}$ with respect to $(\bG, \widehat \P)$, i.e.
\begin{equation} \label{g1} N_{T\wedge \tau} = \espp{N_{T\wedge\tau}}+\int_0^T\widehat \theta^{\F}_u \ud S^\tau_u+\widehat A^{\G}_{T\wedge\tau}, \quad \widehat \P-a.s.,  \end{equation}
where $\widehat \theta^{\F}$ is an $\bF$-predictable  process such that $\widehatesp{ \int_0^{T\wedge\tau}\left( \widehat \theta^{\F}_u \sigma(u, S^\tau_u) S^\tau_u \right)^2 \ud u} < \infty$ and $\widehat A^{\G}=\{\widehat A_t^{\G},\ t \in \llbracket0, T\wedge\tau\rrbracket\}$ is a $(\bG, \widehat \P)$-martingale null at zero strongly orthogonal to $S^\tau$.

Then, $N_{T\wedge \tau}$ has the following Galtchouk-Kunita-Watanabe decomposition with respect to $(\widetilde \bG, \widehat \P)$
\begin{equation} \label{g2}N_{T\wedge \tau} = \espp{N_{T\wedge\tau}}+\int_0^T\widehat \theta^{\F^S}_u \ud S^\tau_u+\widehat A^{\widetilde \G}_{T\wedge\tau}, \quad \widehat \P-a.s.,
\end{equation}
where
\begin{equation}\label{thetaFS}
\widehat \theta^{\F^S}_t =  {  {}^{\widehat p, \bF^S} ( \widehat \theta^\F_t  e^{- \int_0^t \gamma_u \ud u})  \over {}^{\widehat p, \bF^S} ( e^{- \int_0^t \gamma_u \ud u} ) },  \quad t \in \llbracket0, T\wedge\tau\rrbracket,
\end{equation}
and the $(\widetilde \bG, \widehat \P)$-martingale $\widehat A^{\widetilde \G}=\{\widehat A_t^{\widetilde\G},\ t \in \llbracket0, T\wedge\tau\rrbracket\}$ is given by
 \begin{equation} \label{mgGtilde}
\widehat A^{\widetilde \G}_t= \widehatesp{\widehat A^{\G}_t \Big{|}\widetilde \G_t} + \widehatesp{\int_0^t ( \widehat \theta^{\F}_u  -  \widehat \theta^{\F^S}_u)  \ud S^\tau_u \Big{|}\widetilde \G_t},
\quad t \in \llbracket0, T\wedge\tau\rrbracket. \end{equation}

\end{proposition}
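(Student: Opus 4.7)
The plan is to verify that \eqref{g2}, together with the explicit formulas \eqref{thetaFS} for $\widehat \theta^{\F^S}$ and \eqref{mgGtilde} for $\widehat A^{\widetilde \G}$, defines a valid Galtchouk-Kunita-Watanabe decomposition of $N_{T\wedge\tau}$ under $(\widetilde \bG, \widehat \P)$; uniqueness of this decomposition then concludes.

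First, I would start from the full-information decomposition \eqref{g1} applied to the $(\bG,\widehat\P)$-martingale $\widehatesp{N_{T\wedge\tau}|\G_t}$, add and subtract the stochastic integral $\int_0^t \widehat\theta^{\F^S}_u \ud S^\tau_u$ (with $\widehat\theta^{\F^S}$ defined by \eqref{thetaFS}), and take $\widehatesp{\cdot|\widetilde \G_t}$. Since this integral is $\widetilde \bG$-adapted (its integrand being $\bF^S$-predictable, as a ratio of $\bF^S$-predictable projections), it can be pulled out of the conditional expectation, yielding
$$\widehat V_t = \espp{N_{T\wedge\tau}} + \int_0^t \widehat\theta^{\F^S}_u \ud S^\tau_u + \widehat A^{\widetilde \G}_t, \quad t \in \llbracket 0, T\wedge\tau\rrbracket,$$
with $\widehat A^{\widetilde \G}$ exactly of the form \eqref{mgGtilde}; evaluating at $t = T\wedge\tau$ gives \eqref{g2}, using that $N_{T\wedge\tau}$ is $\widetilde \G_{T\wedge\tau}$-measurable and hence $\widehat V_{T\wedge\tau} = N_{T\wedge\tau}$. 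By construction $\widehat A^{\widetilde \G}$ is a $(\widetilde \bG, \widehat \P)$-martingale null at zero, and the $\widehat \P$-square integrability of $\widehat \theta^{\F^S}$ follows from the corresponding assumption on $\widehat \theta^\F$ via Jensen's inequality applied to predictable projections (noting that ${}^{\widehat p, \bF^S}(e^{-\int_0^t \gamma_u \ud u})$ is bounded away from zero since $F_t < 1$).

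The main obstacle will be verifying the strong orthogonality of $\widehat A^{\widetilde \G}$ to $S^\tau$ under $(\widetilde \bG, \widehat \P)$. By uniqueness of the GKW decomposition, this reduces to showing that $\widehat \theta^{\F^S}$ in \eqref{thetaFS} coincides with the Kunita-Watanabe integrand, i.e.\ with the $\widetilde \bG$-predictable process $\eta$ satisfying $\eta_t \ud\langle S^\tau\rangle_t = \ud\langle \widehat V, S^\tau\rangle^{\widetilde \bG}_t$. Since $S^\tau$ is continuous, the cross bracket $[\widehat V, S^\tau]$ is pathwise, so its $\widetilde \bG$-compensator arises by projecting the $\bG$-compensator $\widehat \theta^\F_t \sigma^2(t, S^\tau_t)(S^\tau_t)^2 \ud t$ (read off from the full-information GKW) onto $\widetilde \bG$ via the classical formula for progressive enlargements,
$${}^{\widehat p, \widetilde \bG}(Y_t \I_{\{t \leq \tau\}}) = \I_{\{t \leq \tau\}} \frac{{}^{\widehat p, \bF^S}(Y_t e^{-\int_0^t \gamma_u \ud u})}{{}^{\widehat p, \bF^S}(e^{-\int_0^t \gamma_u \ud u})},$$
applied with $Y = \widehat\theta^\F$. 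The factor $e^{-\int_0^t \gamma_u \ud u}$ is the $\widehat \P$-conditional survival probability given $\F_t$, which coincides with the $\P$-one because the minimal martingale measure density $L^\tau$ involves only $W$ and therefore preserves the $(\bG, \widehat \P)$-compensator of $H$ from Remark \ref{MG}. The careful handling of these predictable projections, presumably carried out in Appendix \ref{appendix:projections}, is the technical heart of the argument.
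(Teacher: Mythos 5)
Your construction of the decomposition is exactly the paper's: identify $\widehat \theta^{\F^S}={}^{\widehat p,\widetilde \bG}\widehat\theta^{\F}$ on $\llbracket 0,T\wedge\tau\rrbracket$ via Corollary \ref{cor:pred_proj}, add and subtract $\int_0^t\widehat\theta^{\F^S}_u\,\ud S^\tau_u$ in \eqref{g1}, condition on $\widetilde\G_{T\wedge\tau}$, and control the integrability of $\widehat\theta^{\F^S}$ by Jensen. Where you diverge is the strong orthogonality of $\widehat A^{\widetilde\G}$ to $S^\tau$: you propose to identify $\widehat\theta^{\F^S}$ with the Kunita--Watanabe integrand by computing $\ud\langle\widehat V,S^\tau\rangle^{\widetilde\bG}$ as the $\widetilde\bG$-predictable projection of the $\bG$-compensator $\widehat\theta^{\F}_t\,\ud\langle S^\tau\rangle_t$, whereas the paper argues directly: it shows $\widehat A^{\widetilde\G}$ is \emph{weakly} orthogonal, i.e.\ $\widehatesp{\widehat A^{\widetilde\G}_{T\wedge\tau}\int_0^{T\wedge\tau}\varphi_u\,\ud S^\tau_u}=0$ for all $\widetilde\bG$-predictable $\varphi$ (the $\widehat A^{\G}$ term vanishes by orthogonality under $\bG$, the $\widetilde A$ term by the defining property of the predictable projection), and then upgrades to strong orthogonality by \citet[Remark 2.4]{ceci2014gkw}. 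Your route is legitimate and arguably more conceptual, but it rests on the identity relating $\langle{}^{o,\widetilde\bG}(\widehat V^{\G}),S^\tau\rangle^{\widetilde\bG}$ to the projection of $\langle\widehat V^{\G},S^\tau\rangle^{\bG}$, which is not established anywhere in this paper (it lives in the cited companion work), so as written it trades one external reference for a heavier one. Two smaller imprecisions: for the square integrability of $\widehat\theta^{\F^S}$ you should apply Jensen to ${}^{\widehat p,\widetilde\bG}\widehat\theta^{\F}$ on $\{\tau\ge u\}$ directly (as the paper does) rather than to the ratio formula, since $F_t<1$ gives positivity of ${}^{\widehat p,\bF^S}(e^{-\int_0^t\gamma_u\ud u})$ but not a uniform lower bound; and your justification that the $\widehat\P$-survival probability ``coincides with the $\P$-one'' is a heuristic for what Corollary \ref{cor:pred_proj} actually proves via the Kallianpur--Striebel lemmas.
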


\begin{proof}
In virtue of Corollary \ref{cor:pred_proj}, if $\widehat \theta^{\F^S}$ satisfies \eqref{thetaFS}, then
$$
\widehat \theta^{\F^S}_t = {}^{\widehat p,\widetilde \bG} \widehat \theta^{\F}_t, \quad t \in \llbracket 0,T\wedge\tau\rrbracket.
$$

By decomposition \eqref{g1}  we can write
\begin{equation} \label{gkw_w}
N_{T\wedge \tau} = \espp{N_{T\wedge\tau}}+\int_0^T  {}^{\widehat p,\widetilde \bG} \widehat  \theta^{\F}_u \ud S^\tau_u+ \widetilde A_{T\wedge\tau} + \widehat A^\G_{T\wedge\tau}, \quad \widehat \P-\mbox{a.s.}, \end{equation}
where $\widetilde A=\{\widetilde A_t,\ t \in \llbracket 0,T\wedge\tau\rrbracket\}$, given by
$$
\widetilde A_t := \int_0^t (\widehat \theta^{\F}_u  -  \widehat \theta^{\F^S}_u)  \ud S^\tau_u =
\int_0^t (\widehat \theta^{\F}_u  -  {}^{\widehat p,\widetilde \bG} \widehat \theta^{\F}_u)  \ud S^\tau_u , \quad t \in \llbracket 0,T\wedge\tau\rrbracket,
$$
is a square integrable $(\bG, \widehat\P)$-martingale. This is a consequence of the fact that $S^\tau$ is a $(\bG, \widehat \P)$-martingale and that, by Jensen's inequality the following holds
\begin{align*}
& \widehatesp{ \int_0^{T\wedge\tau}\!\!\!\left( \widehat \theta^{\F^S}_u \sigma(u, S^\tau_u) S^\tau_u\right)^2 \ud u}  = \widehatesp{ \int_0^T
 \left({}^{\widehat p,\widetilde \bG} \widehat \theta^{\F}_u \sigma(u, S^\tau_u) S^\tau_u\right)^2  \I_{\{\tau \geq u\}} \ud u} \\
& \qquad \leq \widehatesp{ \int_0^T
 {}^{\widehat p,\widetilde \bG}\left(\left(\widehat \theta^{\F}_u \sigma(u, S^\tau_u) S^\tau_u\right)^2  \I_{\{\tau \geq u\}}\right) \ud u}
 = \widehatesp{ \int_0^{T\wedge\tau}\!\!\left(\widehat \theta^{\F}_u \sigma(u, S^\tau_u) S^\tau_u\right)^2 \ud u} < \infty.
\end{align*}
By \eqref{mgGtilde}, conditioning  \eqref{gkw_w} with respect to $\widetilde \G_{T\wedge\tau}$  yields
\begin{align}
N_{T\wedge \tau} &=  \espp{N_{T\wedge\tau}} +\int_0^T  {}^{\widehat p,\widetilde \bG} \widehat  \theta^{\F}_u \ud S^\tau_u+ \widehatesp{ \widetilde A_{T\wedge\tau} + \widehat A^\G_{T\wedge\tau} | \widetilde \G_{T\wedge\tau}}\\
& = \espp{N_{T\wedge\tau}}  +  \int_0^T  {}^{\widehat p,\widetilde \bG}  \theta^{\F}_u \ud S^\tau_u+ \widehat A^{\widetilde \G}_{T\wedge\tau}.
\end{align}
This provides the Galtchouk-Kunita-Watanabe decomposition of  $N_{T\wedge \tau}$ with respect to $(\widetilde \bG, \widehat \P)$, once we verify that the square integrable $(\widetilde \bG, \widehat \P)$-martingale $\widehat A^{\widetilde \G}$  is strongly orthogonal to $S^\tau$.
Note that $\widehat A^{\widetilde \G}$ satisfies
$$
\widehatesp{ \widehat A^{\widetilde \G}_{T\wedge\tau} \int_0^{T\wedge\tau} \varphi_u  \ud S^\tau_u} =0,
$$
for all $\widetilde \bG$-predictable processes $ \varphi$ such that  $\widehatesp{ \int_0^{T\wedge\tau} \varphi^2_u\   \ud \langle   S^\tau \rangle_u} < \infty$, i.e. $\widehat A^{\widetilde \G}$ is $\widetilde \bG$-weakly orthogonal to $S^\tau$, see Definition 2.1 in \citet{ceci2014gkw}. Indeed, since $\varphi$ is $\widetilde \bG$-predictable, by the tower rule
$$
\widehatesp{ \widehat A^{\widetilde \G}_{T\wedge\tau} \int_0^{T\wedge\tau} \varphi_u  \ud S^\tau_u} =\widehatesp{ \widehat A^{\G}_{T\wedge\tau} \int_0^{T\wedge\tau} \varphi_u  \ud S^\tau_u}+\widehatesp{ \widetilde A_{T\wedge\tau} \int_0^{T\wedge\tau} \varphi_u  \ud S^\tau_u}.
$$
Both of the terms on the right-hand side are zero: the first one because $\widehat A^\G$ is strongly orthogonal to $S^\tau$, and the second one follows by the computations below,
\begin{align*}
& \widehatesp{ \widetilde A_{T\wedge\tau} \int_0^{T\wedge\tau} \! \!\!  \varphi_u \ud S^\tau_u}  = \widehatesp{  \int_0^{T\wedge\tau}\! \!  \varphi_u ( \widehat \theta^{\F}_u  -  {}^{\widehat p,\widetilde \bG} \widehat \theta^{\F}_u) \ud \langle  S^\tau \rangle_u}\\
 & \qquad =
\widehatesp{  \int_0^{T\wedge\tau} \! \!  \varphi_u ( \widehat \theta^{\F}_u  -  {}^{\widehat p,\widetilde \bG} \widehat \theta^{\F}_u) \sigma^2(u, S^\tau_u) (S^\tau_u)^2 \ud u} =0,
\end{align*}
since $\{\sigma(t, S^\tau_t) S^\tau_t, \ t \in  \llbracket 0,T\wedge\tau\rrbracket\} $ has continuous trajectories.

Finally, the strong orthogonality between  $ \widehat A^{\widetilde \G}$ and $S^\tau$ follows by the $\widetilde \bG$-weak orthogonality since $ \widehat A^{\widetilde \G}$ is $\widetilde \bG$-adapted (see \citet[Remark 2.4]{ceci2014gkw}).
\end{proof}

\begin{theorem}\label{thm:strategy}
Let $N$ be the payment stream given by \eqref{N}, associated to the endowment insurance contract $(\xi, Z, \tau)$ and let condition (ii) of Assumption \ref{ipot1}  hold.

Then, $N_{T\wedge \tau}$ admits a stopped F\"{o}llmer-Schweizer decomposition with respect to both $\widetilde \bG$ and $\bG$, i.e.
\begin{align}
N_{T\wedge \tau} =& \espp{N_{T\wedge\tau}}+\int_0^T\theta^{\F^S}_u \ud S^\tau_u+A^{\widetilde \G}_{T\wedge\tau}, \quad \P-a.s.,\\
N_{T\wedge \tau} =& \espp{N_{T\wedge\tau}}+\int_0^T\theta^{\F}_u \ud S^\tau_u+A^{\G}_{T\wedge\tau}, \quad \P-a.s.,
\end{align}

with $\theta^{\F^S} = \widehat \theta^{\F^S}$, $A^{\widetilde \G} = \widehat A^{\widetilde \G}$, $\theta^{\F} = \widehat \theta^{\F}$ and $A^{\G}= \widehat A^{\G}$ given in
decompositions \eqref{g1} and \eqref{g2}.
If in addition, $N_{T\wedge\tau}$ and $S^\tau$ are $\widehat \P$-square integrable, then the $(\bF^S, \widetilde \bG)$-locally risk-minimizing strategy $\varphi^*=(\theta^*, \eta^*)$ for $N$ is given by
\begin{align}
\theta^*_t & =  \theta^{\F^S}_t =  \frac{{}^{\widehat p, \bF^S} \left(\theta^\F_t  e^{- \int_0^t \gamma_u \ud u}\right)}{{}^{\widehat p, \bF^S}\left(e^{- \int_0^t \gamma_u \ud u}\right)},  \quad t \in \llbracket0, T\wedge\tau\rrbracket \label{eq:theta*}, \\
\ds \eta^*_t & = V_t(\varphi^*)-\theta^*_t S^\tau_t, \quad t \in \llbracket0, T\wedge\tau\rrbracket, \label{eq:eta*}
\end{align}
and the optimal value process $V(\varphi^*)$ is given by
\begin{align}\label{V^S}
V_t(\varphi^*) =  \espp{N_{T\wedge\tau}}+\int_0^t\theta^*_u \ud S^\tau_u+A^{\widetilde \G}_t-N_t, \quad t \in \llbracket 0,T\wedge\tau\rrbracket,
\end{align}
with
$$
A^{\widetilde \G}_ {t}= \espp{A^{\G}_ {t } \Big{|} \widetilde \G_{t} } + \espp{ \int_0^t \theta^\F_u \ud S^\tau_u  \Big{|} \widetilde \G_t } - \int_0^t\theta^*_u \ud S^\tau_u, \quad t \in \llbracket 0,T\wedge\tau\rrbracket.
$$
\end{theorem}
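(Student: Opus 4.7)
The plan is to chain together four ingredients already developed in the paper: (a) the uniform boundedness of the mean-variance tradeoff processes in \eqref{eq:mvt_Gtilde} under Assumption \ref{ipot1}, which yields existence of the stopped Föllmer--Schweizer decompositions of $N_{T\wedge\tau}$ with respect to both $\bG$ and $\widetilde \bG$; (b) Theorem \ref{thm:equivalence}, which identifies these F--S decompositions with the Galtchouk--Kunita--Watanabe decompositions under the minimal martingale measure $\widehat \P$; (c) Proposition \ref{thm:strategy0}, which relates the $(\widetilde \bG,\widehat \P)$-GKW decomposition to the $(\bG,\widehat \P)$-one via a predictable projection involving the survival factor $e^{-\int_0^\cdot \gamma_u \ud u}$; and (d) Proposition \ref{prop:risk_min_strategy}, which converts the F--S decomposition under $\widetilde \bG$ into the $(\bF^S,\widetilde \bG)$-locally risk-minimizing strategy.

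First I would note that $N_{T\wedge\tau}=\xi\I_{\{\tau>T\}}+Z_\tau\I_{\{\tau\leq T\}}$ is square integrable and belongs to $L^2(\widetilde \G_T,\P)\subseteq L^2(\G_T,\P)$ by Definition \ref{def:dafaultable_claim}, so that, combined with the boundedness of $K$ and $\widetilde K$, one obtains stopped F--S decompositions with respect to both $\bG$ and $\widetilde \bG$. Since $S^\tau$ has continuous trajectories and satisfies the structure condition with respect to each of the two filtrations, Theorem \ref{thm:equivalence} (applied once with the pair $(\widetilde \bG,\bF^S)$ and once with $(\bG,\bF)$) identifies the F--S coefficients $\theta^\F,A^\G,\theta^{\F^S},A^{\widetilde \G}$ with the GKW coefficients $\widehat \theta^\F,\widehat A^\G,\widehat \theta^{\F^S},\widehat A^{\widetilde \G}$ appearing in the decompositions \eqref{g1}--\eqref{g2}.

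Under the additional assumption that $N_{T\wedge\tau}$ and $S^\tau$ are $\widehat \P$-square integrable, Proposition \ref{thm:strategy0} applies and directly yields the representation \eqref{thetaFS} of $\widehat \theta^{\F^S}$ as a ratio of $\widehat \P$-predictable projections involving $e^{-\int_0^\cdot\gamma_u\ud u}$, together with the formula \eqref{mgGtilde} for $\widehat A^{\widetilde \G}$. Combining these with Proposition \ref{prop:risk_min_strategy} gives $\theta^*=\theta^{\F^S}=\widehat \theta^{\F^S}$, which is exactly \eqref{eq:theta*}, while $\eta^*$ and $V(\varphi^*)$ as in \eqref{eq:eta*} and \eqref{V^S} follow from \eqref{eq:strategy}--\eqref{eq:port_value}. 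The displayed formula for $A^{\widetilde \G}_t$ is then obtained by evaluating \eqref{mgGtilde} at time $t\wedge\tau$ and rewriting the integral terms using the identifications $\widehat A^\G=A^\G$, $\widehat \theta^\F=\theta^\F$ and $\widehat \theta^{\F^S}=\theta^*$.

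The main technical obstacle, and the step that really requires care, lies in verifying the integrability conditions needed to invoke both Theorem \ref{thm:equivalence} and Proposition \ref{thm:strategy0}. Specifically, one must check that $\widehat \theta^\F\in\Theta^{\bF,\tau}$ and $\widehat \theta^{\F^S}\in\Theta^{\bF^S,\tau}$ (classes defined with respect to $\P$, not $\widehat \P$), and that $\widehat A^\G$ and $\widehat A^{\widetilde \G}$ are square-integrable $\P$-martingales orthogonal to the respective martingale parts of $S^\tau$. Condition (ii) of Assumption \ref{ipot1} ensures that the density $L^\tau$ of \eqref{eq:Ltau} lies in $L^2(\G_{\tau\wedge T},\P)$ and that $\mu/\sigma$ is bounded, so that a Cauchy--Schwarz or Bayes-type argument transfers $\widehat \P$-square integrability back to $\P$-square integrability; the drift contribution in the definitions of $\Theta^{\bF,\tau}$ and $\Theta^{\bF^S,\tau}$ is then automatically controlled once the $\P$-square integrability of the martingale part has been established.
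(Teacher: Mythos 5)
Your proposal follows exactly the route the paper takes: existence of the stopped Föllmer--Schweizer decompositions from the boundedness of the mean-variance tradeoff processes under Assumption \ref{ipot1}(ii), identification with the Galtchouk--Kunita--Watanabe decompositions via Theorem \ref{thm:equivalence}, the projection formula from Proposition \ref{thm:strategy0}, and the passage to the optimal strategy via Proposition \ref{prop:risk_min_strategy}. Your final paragraph on transferring square integrability between $\P$ and $\widehat\P$ is a point the paper leaves implicit, and your sketch of it (boundedness of $\mu/\sigma$ plus the $L^2$ density) is the right way to fill it in.
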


\begin{proof}
Under condition (ii) of Assumption  \ref{ipot1}, we obtain the existence of the F\"{o}llmer-Schweizer decompositions of $N_{T\wedge\tau}$ with respect to $\bG$ and $\widetilde \bG$. By Theorem  \ref{thm:equivalence}  and Proposition \ref{thm:strategy0} we get that \eqref{g1} and \eqref{g2} give the F\"{o}llmer-Schweizer decompositions of $N_{T\wedge\tau}$ with respect to $\bG$ and $\widetilde \bG$ respectively. Finally, the result follows by Proposition \ref{prop:risk_min_strategy}.
\end{proof}

Representation \eqref{eq:theta*} requires the knowledge of the process $\theta^\F$, that is, the first component of the {\em $(\bF, \bG)$-locally risk-minimizing strategy}.

To characterize the process $\theta^\F$, define  the process $\widehat V^\G=\{\widehat V^\G_t,\ t \in \llbracket0,T\wedge \tau\rrbracket\}$ by setting
\begin{equation}\label{eq:hatVG}
\widehat V^\G_t:=\widehatesp{N_{T\wedge\tau}| \G_t}, \ t \in \llbracket 0,T\wedge \tau\rrbracket.
\end{equation}
Then by \eqref{g1} the process $\widehat  V^\G$ admits the Galchouk-Kunita-Watanabe decomposition given by
\begin{equation}\label{eq:GKWbis}
\widehat V^\G_t= \espp{N_{T\wedge\tau}}+\int_0^t  \theta^\F_u \ud S^\tau_u+ A^\G_{t},  \quad t \in \llbracket0,T \wedge \tau\rrbracket,
\end{equation}
where $A^\G = \widehat A^\G$ is a square integrable $( \bG,\widehat \P)$-martingale null at time zero, strongly orthogonal to $S^\tau$ w.r.t. $\widehat \P$. By taking the predictable covariation with respect to $S^\tau$ on both sides of the equality we get that
\begin{equation}\label{eq:thetaF}
\theta^\F_t = \frac{\ud \langle \widehat V^\G, S^\tau \rangle_t^{\widehat \P}}{ \ud \langle  S^\tau \rangle_t ^{\widehat \P}} \quad t \in \llbracket0,T \wedge \tau\rrbracket,
\end{equation}
where $\langle  \cdot, \cdot  \rangle^{\widehat \P}$ denotes the predictable covariation process under minimal martingale measure   $\widehat \P$.

Now we have to face the task of computing the process $\langle \widehat  V^\G, S^\tau \rangle^{\widehat \P}$.

In the following section we will analyze some examples in a Markovian setting where we are able to give explicit representations of both the hedging strategies $\theta^\F$ and $\theta^{\F^S}$ under full and partial information.

\section{An application: the $\bF$-mortality rate depending on the unobservable stochastic factor} \label{sec:markov}

To introduce a Markovian setting,
we assume that the $\bF$-mortality rate $\gamma$ is of the form $\gamma_t = \gamma(t,X_t)$, $t \in [0,T]$, for a nonnegative measurable function $\gamma$ such that $\esp{\int_0^T\gamma(s, X_s)\ud s}<\infty$, and the endowment insurance contract is given by
the triplet
$(\xi, Z, \tau)$, where $\xi = G(T,S_T)$ and  $Z_t = U(t,S_t)$, for some measurable functions $G$ and $U$ such that  $\bE[|G(T,S_T)|^2 ] < \infty$ and $\bE[ |U(t,S_t)|^2 ] < \infty$,  for every $t \in[0,T]$.

On the probability space $(\Omega, \F, \widehat \P)$ the pair $(S,X)$ satisfies the following system of stochastic differential equations
\begin{equation}
\left\{
\begin{split}
&\ud S_t=S_t \sigma(t,S_t) \ud \widehat W_t, \quad S_0=s_0>0\\
&\ud X_t=\left(b(t,X_t)-a(t,X_t)\rho \ \frac{\mu(t, S_t, X_t)}{\sigma(t,S_t)}\right)\ud t + a(t,X_t)\left(\rho \ud \widehat W_t + \sqrt{1-\rho^2}\ud B_t\right), \quad X_0=x_0\in \R.
\end{split}\right.
\end{equation}

We assume throughout the section that
\begin{equation}\label{hp:generatore_cont}
\widehatesp{\int_0^T\left\{|b(t,X_t)|+ a^2(t,X_t)  + \sigma^2(t,S_t)\right\}\ud t}<\infty.
\end{equation}

The Markovianity of the pair $(S,X)$ under $\widehat \P$ is shown in the Lemma below.
\begin{lemma}\label{markov}
Under conditions (ii) of Assumption \ref{ipot1} and \eqref{hp:generatore_cont}, the pair $(S,X)$ is an $(\bF, \widehat \P)$-Markov process with generator $ \widehat \L^{S,X}$ given by
\begin{align}\label{generatore}
\widehat \L^{S,X} f(t,s,x) &=  \frac{\partial f}{\partial t}+ \left [ b(t,x) - \rho \  { \mu(t,s,x) a(t,x) \over \sigma(t,s)}  \right ] \frac{\partial f}{\partial x} +
\frac{1}{2} a^2(t,x) \frac{\partial^2 f}{\partial
x^2}\\
&+ \rho a(t,x) \sigma(t,s) s  \frac{\partial^2 f}{\partial x\partial s}
  +\frac{1}{ 2} \sigma^2(t,s)\, s^2  \frac{\partial^2 f}{\partial s^2},
  \end{align}
for every function $f \in  \C^{1,2,2}_b([0,T]\times \R^+ \times \R )$. Moreover, the following semimartingale decomposition holds
\begin{equation}\label{semi}
f(t, S_t, X_t)=f(0,s_0,x_0)+\int_0^t \widehat \L^{S,X}f(u,S_u, X_u) \ud u + M^{f}_t, \quad t\in [0,T],
\end{equation}
where $M^{f}=\{M_t^{f},\ t \in [0,T]\}$ is the $(\bF, \widehat \P)$-martingale given by
 \begin{gather}
  \ud M^{f}_t =  \frac{\partial f}{\partial x} a(t,X_t) \left[\rho \ud \widehat W_t + \sqrt{1-\rho^2} \ud B_t\right]
  + \frac{\partial f}{\partial s} \sigma(t,S_t) S_t\ud \widehat {W}_t.
  \label{eq:M1f}
   \end{gather}
\end{lemma}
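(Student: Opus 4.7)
The plan is to apply It\^o's formula directly to $f(t,S_t,X_t)$ using the $\widehat\P$-dynamics of $(S,X)$, read off the generator from the finite variation part and the martingale from the stochastic integral terms, and then invoke standard SDE theory for the Markov property.

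First I would compute the relevant quadratic (co)variations under $\widehat\P$. Since $dS_t = S_t\sigma(t,S_t)\,d\widehat W_t$ and the $\widehat W$-driven component of $X$ has coefficient $\rho\,a(t,X_t)$ while the $B$-driven component is orthogonal to $\widehat W$, one gets
\begin{align}
d\langle S\rangle_t &= \sigma^2(t,S_t)S_t^2\,dt,\\
d\langle X\rangle_t &= a^2(t,X_t)\,dt,\\
d\langle S,X\rangle_t &= \rho\, a(t,X_t)\sigma(t,S_t)S_t\,dt.
\end{align}
Applying It\^o to $f(t,S_t,X_t)$ and substituting the $\widehat\P$-dynamics of $S$ and of $X$ (including the Girsanov-shifted drift $b(t,X_t)-\rho\,a(t,X_t)\mu(t,S_t,X_t)/\sigma(t,S_t)$), the $dt$-terms collect exactly into $\widehat\L^{S,X}f(t,S_t,X_t)$ as displayed in \eqref{generatore}, while the $d\widehat W_t$ and $dB_t$ integrals combine into precisely \eqref{eq:M1f}. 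This gives the decomposition \eqref{semi}.

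Second, I would verify that $M^f$ is a true $(\bF,\widehat\P)$-martingale and not merely a local martingale. Because $f\in\C^{1,2,2}_b$, the partial derivatives $\partial_xf$ and $\partial_sf$ are bounded by some constant $C$, so
\begin{align}
\langle M^f\rangle_T \le C^2\int_0^T\!\bigl(a^2(u,X_u)+2|\rho|\,a(u,X_u)\sigma(u,S_u)S_u+\sigma^2(u,S_u)S_u^2\bigr)\,du.
\end{align}
Condition \eqref{hp:generatore_cont} controls the $a^2$-term directly. For the remaining terms I would localize via $\tau_n:=\inf\{t:S_t\ge n\}\wedge T$, on which $S$ and thus the integrands are bounded; each stopped process $M^{f,\tau_n}$ is then a square-integrable martingale by It\^o's isometry. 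Passing to the limit using $\tau_n\uparrow T$ $\widehat\P$-a.s. (since $S$ does not explode under $\widehat\P$, being a positive local martingale) together with boundedness of $f(t,S_t,X_t)$ and dominated convergence upgrades $M^f$ to a genuine martingale on $[0,T]$.

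Finally, the Markov property follows from uniqueness in law of the SDE system for $(S,X)$ under $\widehat\P$: the drift and diffusion coefficients are deterministic measurable functions of $(t,S_t,X_t)$ only, and the driving processes $\widehat W$ and $B$ are independent $(\bF,\widehat\P)$-Brownian motions. The generator $\widehat\L^{S,X}$ can then be identified via the martingale problem, which is well-posed under our standing assumptions. The main technical obstacle is the martingale verification in the second step, because $\sigma^2(t,S_t)S_t^2$ is not directly controlled by \eqref{hp:generatore_cont} and requires the localization argument sketched above; the remaining computations are a mechanical application of It\^o's formula.
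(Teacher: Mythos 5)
Your computation of the generator and of the martingale part via It\^o's formula is exactly what the paper does, so the analytic core of the two arguments coincides. Where you genuinely diverge is in how the Markov property is obtained: the paper does not argue through the SDE under $\widehat\P$ at all, but observes that the density $L$ in \eqref{eq:densitaL} is a functional of the state process, so the change of measure from $\P$ to $\widehat\P$ is \emph{Markovian}, and invokes a preservation result (\citet[Proposition 3.4]{ceci2001nonlinear}) to transfer the Markov property of $(S,X)$ from $\P$ to $\widehat\P$. Your route — uniqueness in law for the transformed SDE, or equivalently well-posedness of the martingale problem for $\widehat\L^{S,X}$ — also works in principle, but you assert well-posedness ``under our standing assumptions'' without checking it; the new drift $b(t,x)-\rho\,a(t,x)\mu(t,s,x)/\sigma(t,s)$ is only known to be bounded relative to $a$ via condition (ii) of Assumption \ref{ipot1}, and nothing in the hypotheses of the lemma gives Lipschitz or linear-growth control at this stage (those only appear later in Assumption \ref{ass:uniq}). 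The paper's citation sidesteps this entirely, which is what it buys. Your extra care in distinguishing a local martingale from a true martingale is a genuine improvement over the paper's one-line treatment, and you correctly identify that \eqref{hp:generatore_cont} controls $\sigma^2(t,S_t)$ but not $\sigma^2(t,S_t)S_t^2$; note, however, that your final limit passage still needs uniform integrability of $M^f_{t\wedge\tau_n}$, and since
\begin{equation}
M^f_{t\wedge\tau_n}=f(t\wedge\tau_n,S_{t\wedge\tau_n},X_{t\wedge\tau_n})-f(0,s_0,x_0)-\int_0^{t\wedge\tau_n}\widehat\L^{S,X}f(u,S_u,X_u)\,\ud u,
\end{equation}
the dominating function for the integral term again involves $\sigma^2(u,S_u)S_u^2$ through $\partial^2 f/\partial s^2$. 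So the argument is not closed by boundedness of $f$ alone; one either needs $\widehatesp{\int_0^T\sigma^2(u,S_u)S_u^2\,\ud u}<\infty$ (which is in the spirit of, but not literally, \eqref{hp:generatore_cont}) or must be content with $M^f$ being a local martingale. This is a soft spot shared with the paper rather than an error of yours, but your proof should state explicitly which of the two it relies on.
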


The proof is postponed to Appendix \ref{appendix:proofs}.

The idea for computing the $(\bF^S, \widetilde \bG)$-locally risk minimizing strategy is to derive $\theta^\F$ via \eqref{eq:thetaF} and apply equation \eqref{eq:theta*}. Therefore, we need to characterize the process $\widehat V^\G$ in \eqref{eq:hatVG}.

First, observe that the process $M$ in \eqref{eq:martingala_salto} is  $(\bG,\widehat \P)$-martingale null at time zero that can also be written as
\[M_t= H_t - \int_0^t (1-H_r) \gamma(r,X_r) \ud r,\]
where $H$ is the death indicator process given in \eqref{eq:death_process}, i.e. $H_t=\I_{\{\tau \le t\}}$. Then we get that,
\begin{align}
&\widehat V^\G_t=\widehatesp{ G(T,S^\tau_T) (1 - H_T) + \int_0^T U(r,S^\tau_r) \ud H_r | \G_t}\\
&\quad= \widehatesp{ G(T,S^\tau_T) (1 - H_T) + \int_0^T U(r,S^\tau_r)(1-H_r) \gamma(r,X^\tau_r) \ud r | \G_t}\\
&\quad= \int_0^t U(r,S^\tau_r)(1-H_r) \gamma(r,X^\tau_r) \ud r  + \widehatesp{ G(T,S^\tau_T) (1 - H_T) + \int_t^T U(r,S^\tau_r)(1-H_r) \gamma(r,X^\tau_r) \ud r | \G_t}.\label{eq:aspettazione}
\end{align}
In order to compute the last conditional expectation we use the Markovianity of the triplet $(S^{\tau},X^{\tau},H)$ under $\widehat \P$, which  is proved  in the lemma below. Denote by $\widehat \C_b^{1,2,2}$ the set of measurable and bounded functions $f:[0,T]\times \R^+ \times \R\times \{0,1\}\to \R $ which are continuous  and differentiable with respect to $t$, continuous and twice differentiable with respect to $(s,x)$ with bounded derivatives (of all necessary orders).

\begin{lemma}\label{markovbis}
Under conditions (ii) of Assumption \ref{ipot1} and \eqref{hp:generatore_cont}, the triplet $(S^{\tau},X^{\tau},H)$ is a $(\bG, \widehat \P)$-Markov process with generator $ \widehat \L^{S,X,H}$ given by
\begin{equation} \widehat \L^{S,X,H} f(t,s,x,z) = \widehat \L^{S,X} f(t,s,x,z) (1-z) + \{ f(t,s,x, z+1) - f(t,s,x,z)\} \gamma(t,x) (1-z)\end{equation}
for every function $f\in \widehat \C_b^{1,2,2}$, where $\widehat \L^{S,X} $ is given in \eqref{generatore}.

Moreover, the following $(\bG, \widehat \P)$-semimartingale decomposition holds
\begin{equation}\label{semibis}
f(t, S_t, X_t,H_t)=f(0,s_0,x_0,0)+\int_0^t \widehat \L^{S,X,H}f(u,S_u, X_u,H_u) \ud u + M^{f}_t, \quad t \in \llbracket0,T \wedge \tau\rrbracket,
\end{equation}
where $M^{f}=\{M_t^{f},\ t \in [0,T]\}$ is the $(\bG, \widehat \P)$-martingale given by
\begin{align}
  \ud M^{f}_t = & \frac{\partial f}{\partial x} (1-H_t) a(t,X_t) \left[\rho \ud \widehat W_t + \sqrt{1-\rho^2} \ud B_t\right]
  + \frac{\partial f}{\partial s} (1-H_t) \sigma(t,S_t) S_t\ud \widehat {W}_t\\
& +\{ f(t,S_t, X_t, H_{t^-}+1) - f(t,S_t,X_t,H_{t^-})\} \ud M_t.\end{align}
\end{lemma}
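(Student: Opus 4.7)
The plan is to derive the semimartingale decomposition \eqref{semibis} by applying It\^o's formula for $(S,X)$ up to time $\tau$ (using Lemma \ref{markov}) and then adding in the contribution coming from the single jump of $H$ at $\tau$. Having \eqref{semibis}, the Markov property and the identification of the generator $\widehat \L^{S,X,H}$ follow from the standard martingale characterization of Markov processes.

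First I would note that on $\llbracket 0, T \wedge \tau \rrbracket$ one has $H_{u^-}=0$ for every $u$ (with a single jump of size $1$ at $\tau$), so that for any $f \in \widehat{\mathcal C}_b^{1,2,2}$, Lemma \ref{markov} applied to $(t,s,x)\mapsto f(t,s,x,0)$ and stopping at $\tau$ yields
\begin{align*}
f(t\wedge\tau, S^{\tau}_t, X^{\tau}_t, 0) &= f(0,s_0,x_0,0) + \int_0^{t\wedge\tau} \widehat \L^{S,X} f(u, S_u, X_u, 0)\, \ud u + M^{f,c}_{t\wedge\tau},
\end{align*}
where $M^{f,c}$ is the stopped version of the continuous $(\bF, \widehat \P)$-martingale in \eqref{eq:M1f} (with $f(\cdot,0)$ in place of $f$), which by Remark \ref{W} is a $(\bG, \widehat \P)$-martingale on $\llbracket 0, T \wedge \tau \rrbracket$. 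Next I would handle the jump: since $H$ is a pure-jump process with a single jump at $\tau$ of size $1$,
\begin{align*}
f(t, S^{\tau}_t, X^{\tau}_t, H_t) - f(t\wedge\tau, S^{\tau}_t, X^{\tau}_t, 0) = \int_0^{t} \{f(u, S_u, X_u, H_{u^-}+1) - f(u, S_u, X_u, H_{u^-})\}\, \ud H_u,
\end{align*}
and substituting $\ud H_u = \ud M_u + (1-H_{u^-})\gamma(u, X_u)\, \ud u$ from \eqref{eq:martingala_salto} splits this into a $(\bG, \widehat \P)$-martingale piece (the integral against $\ud M$) plus the compensator integral $\int_0^{t\wedge\tau} \{f(u,S_u,X_u,1)-f(u,S_u,X_u,0)\}\gamma(u,X_u)\, \ud u$.

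Summing the two pieces and using that $H_u = 0$ for $u \leq \tau$ (outside a null set), the total drift collapses exactly into $\int_0^{t} \widehat \L^{S,X,H} f(u, S_u, X_u, H_u)\, \ud u$ by the definition of $\widehat \L^{S,X,H}$, while the martingale pieces combine into the $M^f$ claimed in the lemma. The boundedness of $f$ and its derivatives, together with \eqref{hp:generatore_cont} and condition (ii) of Assumption \ref{ipot1}, ensures the appropriate integrability so that $M^f$ is a genuine (not merely local) $(\bG, \widehat \P)$-martingale on $\llbracket 0, T\wedge\tau\rrbracket$. This proves \eqref{semibis}.

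For the Markov property, I would extend \eqref{semibis} trivially to all of $[0,T]$: after $\tau$ the triplet $(S^{\tau}_t, X^{\tau}_t, H_t) = (S_\tau, X_\tau, 1)$ is frozen, and $\widehat \L^{S,X,H} f(\cdot,\cdot,\cdot,1) \equiv 0$ because of the $(1-z)$ factors, so both the drift and $M^f$ are constant on $\llbracket \tau, T\rrbracket$. Consequently, for every $f \in \widehat{\mathcal C}_b^{1,2,2}$ the process $f(t, S^{\tau}_t, X^{\tau}_t, H_t) - \int_0^t \widehat \L^{S,X,H} f(u, S^{\tau}_u, X^{\tau}_u, H_u)\, \ud u$ is a $(\bG, \widehat \P)$-martingale, which, via the standard martingale-problem criterion, yields that $(S^{\tau}, X^{\tau}, H)$ is $(\bG, \widehat \P)$-Markov with generator $\widehat \L^{S,X,H}$. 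The main technical obstacle is the careful bookkeeping at $u=\tau$ when combining the It\^o decomposition for the continuous part of $(S^\tau, X^\tau)$ with the single jump of $H$, in particular making sure the drift pieces assemble correctly into $\widehat \L^{S,X,H}$ and that $M^f$ is a true martingale rather than a local one.
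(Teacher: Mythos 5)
Your argument is correct and follows essentially the same route as the paper: Itô's formula for the stopped diffusion part of $(S,X)$ combined with the single jump of $H$, compensated through the martingale $M$ of \eqref{eq:martingala_salto}, with the drift terms assembling into $\widehat\L^{S,X,H}$ because $H_{u^-}=0$ on $\llbracket 0,\tau\rrbracket$. Two minor caveats: the martingale property of the stopped continuous part under $\widehat\P$ follows from the Girsanov argument after \eqref{eq:Ltau} (Remark \ref{W} as stated is under $\P$), and your claim that the decomposition extends unchanged to all of $[0,T]$ is not quite right since $f(t,S_\tau,X_\tau,1)$ still varies in $t$ while $\widehat\L^{S,X,H}f(\cdot,\cdot,\cdot,1)\equiv 0$ — though the Markov property after $\tau$ is trivial anyway because the process is frozen.
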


The proof is postponed to Appendix \ref{appendix:proofs}.

Then the following result provides a characterization of the locally risk-minimizing strategy for the insurance claim under full information.

\begin{proposition}[The full information case]\label{xi_bis}
Let  $g\in \C_b^{1,2,2}([0,T] \times \R^+ \times \R)$ be a solution of the problem
\begin{equation} \label{eq:problemabis}
\left\{
\begin{split}&\widehat  \L^{S,X} g(t,s,x)  - \gamma(t,x) g(t,s,x) + U(t,s) \gamma(t,x) = 0, \quad (t,s,x) \in [0,T) \times \R^+ \times \R\\
&g(T,s,x)  = G(T,s),
\end{split}\right.
\end{equation}
 then the $(\bF, \bG)$-locally risk minimizing strategy is given by
\begin{equation}\label{eq:thetaF_1}\theta^\F_t  = {\partial g \over \partial s}(t, S_t, X_t) + \frac{\rho  a(t, X_t)}{ S_t \sigma(t, S_t)} {\partial g \over \partial x}(t, S_t, X_t) \quad t \in \llbracket0,T \wedge \tau\rrbracket.\end{equation}

\end{proposition}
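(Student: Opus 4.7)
The plan is to use formula \eqref{eq:thetaF}, namely $\theta^\F_t = \ud\langle \widehat V^\G, S^\tau \rangle^{\widehat\P}_t / \ud\langle S^\tau \rangle^{\widehat\P}_t$, so the task reduces to obtaining an explicit martingale representation of $\widehat V^\G$ from which the predictable covariation with $S^\tau$ can be read off.

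First, I would introduce the candidate process
\[
\psi_t := g(t, S^\tau_t, X^\tau_t)(1-H_t) + \int_0^t U(r, S^\tau_r)\,\ud H_r, \quad t\in \llbracket 0, T\wedge\tau\rrbracket,
\]
and show that $\psi$ is a $(\bG,\widehat\P)$-martingale with terminal value $N_{T\wedge\tau}$, which will identify it with $\widehat V^\G$. To this end I would apply Lemma \ref{markovbis} to the bounded test function $f(t,s,x,z):=g(t,s,x)(1-z)$, which belongs to $\widehat\C^{1,2,2}_b$ since $g\in\C^{1,2,2}_b$. A direct computation gives
\[
\widehat\L^{S,X,H} f(t,s,x,z) = (1-z)\bigl[\widehat\L^{S,X} g(t,s,x) - \gamma(t,x)\,g(t,s,x)\bigr],
\]
because $(1-z)^2 = (1-z)$ on $\{0,1\}$ and the jump increment equals $-g(t,s,x)$. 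Using the PDE \eqref{eq:problemabis} this simplifies to $-(1-z)\,U(t,s)\,\gamma(t,x)$.

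Next I would combine \eqref{semibis} with the $(\bG,\widehat\P)$-compensation $\ud H_t = \ud M_t + (1-H_{t^-})\gamma(t,X^\tau_t)\,\ud t$ to obtain
\[
\psi_t = g(0,s_0,x_0) + M^f_t + \int_0^t U(r, S^\tau_r)\,\ud M_r,
\]
where the two drift contributions cancel by construction. Both $M^f$ and $\int_0^\cdot U\,\ud M$ are $(\bG,\widehat\P)$-local martingales, and the boundedness of $g$ and its derivatives together with Assumption \ref{ipot1}(ii), condition \eqref{hp:generatore_cont}, and the square integrability assumptions on $G$ and $U$ ensure they are true square integrable martingales on $\llbracket 0, T\wedge\tau\rrbracket$. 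Since $\psi_T = G(T,S^\tau_T)(1-H_T) + \int_0^T U(r,S^\tau_r)\,\ud H_r = N_{T\wedge\tau}$, the martingale property yields $\psi_t = \widehat V^\G_t$.

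Finally, to identify $\theta^\F$, I would read off the continuous martingale part of $\psi$ from \eqref{eq:M1f} applied to $f$:
\[
\ud M^f_t = (1-H_t)\left[\rho a(t,X^\tau_t)\frac{\partial g}{\partial x} + \sigma(t,S^\tau_t)S^\tau_t \frac{\partial g}{\partial s}\right]\ud \widehat W^\tau_t + (1-H_t)\sqrt{1-\rho^2}\,a(t,X^\tau_t)\frac{\partial g}{\partial x}\,\ud B^\tau_t + (\cdots)\,\ud M_t.
\]
Because $S^\tau$ is continuous while $M$ is a pure-jump martingale, $\langle S^\tau, \int U\,\ud M\rangle \equiv 0$, and $B^\tau$ is $\widehat\P$-orthogonal to $\widehat W^\tau$. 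Hence $\ud\langle S^\tau, \widehat V^\G\rangle^{\widehat\P}_t$ equals $S^\tau_t\sigma(t,S^\tau_t)(1-H_t)\bigl[\rho a(t,X^\tau_t)\partial_x g + \sigma(t,S^\tau_t)S^\tau_t\,\partial_s g\bigr]\ud t$, while $\ud\langle S^\tau\rangle^{\widehat\P}_t = (1-H_t)[S^\tau_t\sigma(t,S^\tau_t)]^2\,\ud t$. Forming the ratio and using $(S^\tau, X^\tau) = (S,X)$ on $\llbracket 0, T\wedge\tau\rrbracket$ yields \eqref{eq:thetaF_1}. The most delicate point, to be handled with care rather than as a routine step, is the cancellation of the drift terms arising from the compensator of $H$ against those coming from the PDE: this is the mechanism that turns $\psi$ into a martingale, and it must be organized cleanly to extract a clean martingale representation before computing covariations.
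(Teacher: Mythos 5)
Your proposal is correct and follows essentially the same route as the paper: both apply Lemma \ref{markovbis} to the function $\widehat g(t,s,x,z)=g(t,s,x)(1-z)$, use the PDE \eqref{eq:problemabis} to identify $\widehat V^\G$ (the paper invokes Feynman--Kac, you verify the drift cancellation directly, which is the same mechanism), and then compute $\theta^\F$ as the ratio $\ud\langle \widehat V^\G, S^\tau\rangle^{\widehat\P}/\ud\langle S^\tau\rangle^{\widehat\P}$ after discarding the purely discontinuous part $\int U\,\ud M$ and the $B^\tau$-component. Your computations of the generator acting on $g(1-z)$ and of the two brackets are accurate, so no gap.
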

\begin{proof}
First, note that   if $g(t,x,s) \in \C_b^{1,2,2} ([0,T]\times \R \times \R^+)$ is  a solution of the problem
\eqref{eq:problemabis} then the function $\widehat g(t,x,s,z) \in  \widehat \C_b^{1,2,2} $, defined as  $\widehat g(t,x,s,0):= g(t,x,s)$ and  $\widehat g(t,x,s,1):=0$ solves  the backward Cauchy problem \begin{equation}\label{eq:problema}
\left\{
\begin{split}& \widehat  \L^{S,X,H} \widehat g(t,s,x,z)   + U(t,s) (1-z) \gamma(t,x) = 0, \quad (t,s,x,z) \in [0,T) \times \R^+ \times \R \times \{0,1\}\\
&\widehat g(T,s,x,z)  = (1-z)G(T,s).
\end{split}\right.
\end{equation}
By Lemma \ref{markovbis} and Feynman-Kac formula we have that
$$\widehat g(t,S_t,X_t,H_t) = \widehatesp{ G(T,S_T) (1 - H_T) + \int_t^T U(r,S_r)(1-H_r) \gamma(r,X_r) \ud r | \G_t}$$
and  the following $(\bG, \widehat \P)$-martingale decomposition of $\widehat V^\G$ holds,
\begin{align}
  \ud \widehat V^{\G}_t = & \frac{\partial \widehat  g}{\partial x} (1-H_t) a(t,X_t)\sqrt{1-\rho^2} \ud B_t +
   (1-H_t) \big [ \frac{\partial \widehat  g}{\partial s} \sigma(t,S_t) S_t +   \frac{\partial \widehat  g}{\partial x} a(t,X_t) \rho \big ] \ud \widehat {W}_t\\
& +\{  \widehat  g(t,S_t, X_t, H_{t^-}+1) -  \widehat  g(t,S_t,X_t,H_{t^-})\} \ud M_t.\end{align}

Then taking the predictable covariation of $\widehat V^\G$ with respect to $S^\tau$ we get
\begin{align}
\ud \langle \widehat V^\G, S^\tau \rangle^{\widehat \P}_t =& (1 - H_{t^-}) \left(S_t \sigma(t, S_t) {\partial \widehat  g \over \partial s}(t, S_t, X_t, H_{t^-}) + \rho  a(t, X_t) {\partial \widehat  g \over \partial x}(t, S_t, X_t, H_{t^-}) )\right) \ud t\\
=&(1 - H_{t^-}) \left(S_t \sigma(t, S_t) {\partial \widehat  g \over \partial s}(t, S_t, X_t, 0) + \rho  a(t, X_t)  {\partial \widehat  g \over \partial x}(t, S_t, X_t, 0) \right) \ud t
\end{align}

Since the predictable quadratic variation of $S^\tau$ satisfies
\begin{equation} \ud \langle S^\tau \rangle^{\widehat \P} _t = (1 - H_{t^-})  \ud \langle S \rangle _t = (1 - H_{t^-}) S_t \sigma(t,S_t) \ud t\end{equation}
we only need to apply relation \eqref{eq:thetaF} to get
\begin{equation}\label{eq:thetaF_2}\theta^\F_t  = {\partial \widehat g \over \partial s}(t, S_t, X_t, 0) + \frac{\rho  a(t, X_t)}{ S_t \sigma(t, S_t)} {\partial \widehat g \over \partial x}(t, S_t, X_t, 0) \quad t \in \llbracket0,T \wedge \tau\rrbracket,\end{equation}
which corresponds to \eqref{eq:thetaF_1}

\end{proof}

\begin{remark}
Existence and uniqueness of classical solutions to \eqref{eq:problemabis}  can be obtained under suitable assumptions by applying the results in \citet{heath2000martingales}.
\end{remark}

\begin{remark}
By  Feynmann-Kac formula the process $\{g(t,  S_t,X_t), \ t \in [0,T]\}$ has the following stochastic representation
\begin{align}\label{eq:FK_1} g(t,S_t,X_t)=\widehatesp{ e^{-\int_t^T\gamma(r, X_r)\ud r}G(T,S_T) + \int_t^T e^{-\int_t^r\gamma(u, X_u)\ud u} U(r,S_r)\gamma(r,X_r) \ud r | \F_t}. \end{align}
 \end{remark}

\subsection{A filtering approach to local risk-minimization under partial information}
In this section we wish to apply some results from filtering theory to compute the locally risk-minimizing hedging strategy under partial information.
Precisely, this requires to compute conditional expectations of processes that depend on the trajectories of $X$. To apply the classical methodology, we introduce as an additional state process the $\bF$-survival process of $\tau$ given by $\P(\tau>t|\F_t)=1-F_t= e^{- \int_0^t \gamma(u,X_u) \ud u}$, for each $t \in [0,T]$, and denote it by $Y_t$.
The dynamics of $Y=\{Y_t, \ t \in [0,T]\}$, is given by
\[
\ud Y_t= - \gamma(t, X_t) Y_t  \ud t, \quad Y_0=1.
\]

\begin{remark}
By performing the same computations of Lemma \ref{markov} for the triplet $(S,X,Y)$, it is easy to verify that the vector process $(S,X,Y)$ is an $(\bF, \widehat \P)$-Markov process with generator $ \widehat \L^{S,X,Y}$ given by
\begin{align}\label{eq:generatore2}
\widehat \L^{S,X,Y} f(t,s,x,y) = & \frac{\partial f}{\partial t}+ \left [ b(t,x) - \rho \  { \mu(t,s,x) a(t,x) \over \sigma(t,s)}  \right ] \frac{\partial f}{\partial x} -  y \gamma(t,x) \frac{\partial f}{\partial y} \\
& +\frac{1}{2} a^2(t,x) \frac{\partial^2 f}{\partial
x^2}+ \rho a(t,x) \sigma(t,s) s  \frac{\partial^2 f}{\partial x\partial s}
  +\frac{1}{ 2} \sigma^2(t,s)\, s^2  \frac{\partial^2 f}{\partial s^2},
\end{align}
for every function $f \in  \C^{1,2,2,1}_b([0,T]\times \R^+ \times \R \times \R^+)$.  Consequently, the process $\{f(t, S_t, X_t, Y_t), \ t \in [0,T]\}$ has the following semimartingale decomposition
\begin{equation}\label{eq:semimg_decomp}
f(t,S_t,X_t,Y_t)=f(0,s_0,x_0,1)+\int_0^t \widehat \L^{S,X,Y}f(u, S_u, X_u,Y_u) \ud u + M^{f}_t,\quad t \in [0,T],
\end{equation}
where $M^{f}=\{M_t^{f},\ t \in [0,T]\}$ is the $(\bF, \widehat \P)$-martingale given by
 \begin{gather}
  \ud M^{f}_t =  \frac{\partial f}{\partial x} a(t,X_t) \left[  \rho \ud \widehat W_t + \sqrt{1-\rho^2} \ud B_t\right]
  + \frac{\partial f}{\partial s} \sigma(t,S_t) S_t\ud \widehat {W}_t.
  \label{eq:M2f}
   \end{gather}
\end{remark}

For every measurable function $f(t,s,x,y)$ such that $\espp{|f(t,S_t,X_t,Y_t)|}< \infty$, for each $t\in [0,T]$, we define the {\em filter} $\pi(f)=\left\{\pi_t(f), \ t \in [0,T]\right\}$ with respect to $\widehat \P$, by setting
\begin{equation}\label{eq:filter}
\pi_t(f) := \widehatesp{ f(t,S_t, X_t,Y_t)  | \F^S_t}, \quad t \in [0, T].
\end{equation}
It is well known that $\pi$ is a probability measure-valued process with c\`{a}dl\`{a}g trajectories (see, e.g.~\citet{kurtz1988unique}), and provides the $\widehat \P$-conditional law of the stochastic factor $X$ given the filtration generated by the risky asset prices process.
The filter dynamics is given in Proposition \ref{filteq} below.
\begin{ass}\label{ass:uniq}
The functions $b$, $a$, $\gamma$, $\mu$, and $\sigma$ are jointly continuous and
satisfy the following growth and locally Lipschitz conditions:
\begin{itemize}
\item[(G)] for some nonnegative constant $C$, and for every $(t,s,x) \in [0,T]\times \R^+ \times \R$,
\begin{gather}
|b(t,x)|^2 + |a(t,x)|^2  + |\gamma(t,x)|^2  \leq C( 1 + |x|^2), \\
 |\mu(t,s,x)| ^2 \leq C( 1 + |s|^2 + |x|^2) \mbox{ and } |\sigma(t,s)|^2 \leq C(1 + |s|^2);
\end{gather}
\item[(LL)]
for all $r>0$ there exists a constant $L$ such that for every $t \in [0,T]$, $ s, s^\prime,x,x^\prime \in B_r(0) := \{ z \in \R: \quad |z| \leq r\}$,
\begin{gather}
|b(t,x) - b(t,x^\prime)| + |a(t,x) - a(t,x^\prime)| +  |\gamma(t,x) - \gamma(t,x^\prime)| \leq L|x- x^\prime|,\\
|\mu(t,s,x) - \mu(t,s^\prime,x^\prime)|\leq L (  |s- s^\prime|+ |x- x^\prime| ) \mbox{ and } |\sigma(t,s) - \sigma(t,s^\prime)|\leq L|s- s^\prime|.
\end{gather}
\end{itemize}
\end{ass}

\begin{proposition}\label{filteq}
Under Assumption \ref{ipot1}, Assumption \ref{ass:uniq} and condition \eqref{hp:generatore_cont},  for every function $f \in \C_b^{1,2,2,1}([0,T] \times \R^+ \times \R \times \R^+)$  and $t \in [0,T]$,
the filter $\pi$ is the unique strong solution of the following equation
\begin{equation} \label{eq:ks}
\pi_t (f) = f(0,s_0,x_0,1) + \int_0^t \pi_u(\widehat \L^{S,X,Y} f) \ud u + \int_0^t \left [ \rho \pi_{u}\left(a \  \frac{\partial f }{\partial x}\right) + S_u \sigma(t,S_u) \pi_u \left(\frac{\partial f }{\partial
 s}\right) \right ] \ud \widehat {W}_u.  \end{equation}

\end{proposition}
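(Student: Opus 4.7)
The plan is to derive \eqref{eq:ks} by projecting the semimartingale decomposition \eqref{eq:semimg_decomp} of $f(t,S_t,X_t,Y_t)$ onto $\bF^S$ under $\widehat\P$, and then to invoke \citet{kurtz1988unique} for uniqueness. A useful preliminary observation is that under $\widehat\P$ the price process satisfies $\ud S_t = S_t\sigma(t,S_t)\ud\widehat W_t$ with an almost surely non-zero diffusion coefficient, so that $\widehat W_t = \int_0^t (S_u\sigma(u,S_u))^{-1}\ud S_u$ is $\bF^S$-adapted and, up to the usual completion, $\bF^S$ coincides with $\bF^{\widehat W}$. Consequently every square-integrable $(\bF^S,\widehat\P)$-martingale admits a representation as a stochastic integral with respect to $\widehat W$.

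For the finite variation part of $\pi_\cdot(f)$, a standard stochastic Fubini argument applied to the drift term in \eqref{eq:semimg_decomp}, combined with the tower property, identifies its predictable FV component as $\int_0^\cdot\pi_u(\widehat\L^{S,X,Y}f)\ud u$. Setting $N_t := \pi_t(f)-\pi_0(f)-\int_0^t\pi_u(\widehat\L^{S,X,Y}f)\ud u$, the process $N$ is a square-integrable $(\bF^S,\widehat\P)$-martingale and, by the representation recalled above, $N_t = \int_0^t K_u\ud\widehat W_u$ for some $\bF^S$-predictable process $K$. To identify $K$ I would compute $\langle N,\widehat W\rangle$ in two ways: on one hand it equals $\int_0^\cdot K_u\ud u$; on the other, exploiting \eqref{eq:M2f}, the independence of $B$ and $\widehat W$ under $\widehat\P$, the $\F^S_u$-measurability of $S_u\sigma(u,S_u)$, and the predictable-projection identity for cross variations (along the lines of Appendix \ref{appendix:projections}), one obtains
\begin{equation}
\langle N,\widehat W\rangle_t = \int_0^t\left[\rho\,\pi_u\!\left(a\,\tfrac{\partial f}{\partial x}\right) + S_u\sigma(u,S_u)\,\pi_u\!\left(\tfrac{\partial f}{\partial s}\right)\right]\ud u,
\end{equation}
matching the stochastic integral term of \eqref{eq:ks}.

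For uniqueness, I would appeal to the filtered martingale problem framework of \citet{kurtz1988unique}: under Assumption \ref{ipot1}, Assumption \ref{ass:uniq} and \eqref{hp:generatore_cont}, the locally Lipschitz and linear growth conditions ensure well-posedness of the martingale problem for the $(\bF,\widehat\P)$-Markov process $(S,X,Y)$, and the test function class $\C^{1,2,2,1}_b([0,T]\times\R^+\times\R\times\R^+)$ is measure-determining, so \eqref{eq:ks} admits a unique strong solution among c\`adl\`ag probability-measure-valued processes. The main obstacle I anticipate is the careful verification that Kurtz's hypotheses are met in our hybrid setting, where $Y$ has purely drift (non-diffusive) dynamics driven by the unobservable factor $X$ and $Y$ enters the observation only indirectly via the filter, together with rigorously justifying the stochastic Fubini interchange for the projected drift when the generator $\widehat\L^{S,X,Y}f$ is only controlled via the integrability bound \eqref{hp:generatore_cont}.
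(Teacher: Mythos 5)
Your proposal is correct and follows essentially the same route as the paper: the paper likewise observes that $\widehat W$ is an $(\bF^S,\widehat\P)$-Brownian motion (via the innovation process rather than your direct adaptedness argument), projects the semimartingale decomposition \eqref{eq:semimg_decomp} to obtain \eqref{eq:ks} (delegating the martingale-representation and bracket computations you sketch to \citet[Proposition A.2]{ceci2015local}), and derives uniqueness from the filtered martingale problem of \citet[Theorem 3.3]{kurtz1988unique} under Assumption \ref{ass:uniq}.
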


The proof is postponed to Appendix \ref{appendix:proofs}.

Now, we can characterize the optimal hedging strategy for the given endowment insurance contract $(\xi, Z, \tau)$  under partial information as follows.
\begin{theorem}\label{theta*}
Assume that $S^\tau$ and $N_{T\wedge \tau}$ are $\widehat \P$-square integrable. Let $g$ be a solution to problem \eqref{eq:problemabis}.
The first component $\theta^*$ of the $(\bF^S, \widetilde \bG)$-locally risk-minimizing strategy for the payment stream $N$ associated to the unit-linked life insurance contract  $(\xi, Z, \tau)$
 is given by
\begin{equation}\label{eq:theta_bis}
\theta^*_t=  \frac{\pi_{t} \left( id_y {\partial  g \over \partial s}\right) +
{\rho \over \sigma(t,S_t) S_t}  \pi_{t} \left(a\ id_y {\partial  g \over \partial x} \right)
}{\pi_{t}(id_y)},
\end{equation}
for every $t \in \llbracket 0, T\wedge\tau\rrbracket$, where $id_y(t,s,x,y) := y$.
\end{theorem}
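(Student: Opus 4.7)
The plan is to reduce the statement to formula \eqref{eq:theta*} from Theorem \ref{thm:strategy}, substitute the explicit expression of $\theta^{\F}$ obtained in Proposition \ref{xi_bis}, and finally rewrite the resulting $(\widehat \P, \bF^S)$-predictable projections as filter quantities.

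First I would recall from Theorem \ref{thm:strategy} that, for $t \in \llbracket 0, T\wedge\tau\rrbracket$,
\begin{equation}
\theta^*_t = \frac{{}^{\widehat p, \bF^S}\bigl(\theta^{\F}_t\, e^{-\int_0^t \gamma_u \ud u}\bigr)}{{}^{\widehat p, \bF^S}\bigl(e^{-\int_0^t \gamma_u \ud u}\bigr)},
\end{equation}
and from Proposition \ref{xi_bis} that, under the current Markovian specification $\gamma_u = \gamma(u, X_u)$,
\begin{equation}
\theta^{\F}_t = \frac{\partial g}{\partial s}(t, S_t, X_t) + \frac{\rho\, a(t, X_t)}{S_t\, \sigma(t, S_t)} \frac{\partial g}{\partial x}(t, S_t, X_t).
\end{equation}
Since the $\bF$-survival process $Y_t = e^{-\int_0^t \gamma(u, X_u) \ud u}$ coincides with $e^{-\int_0^t \gamma_u \ud u}$, the numerator and denominator of $\theta^*_t$ become the $(\widehat \P, \bF^S)$-predictable projections of $\theta^{\F}_t Y_t$ and of $Y_t$, respectively.

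The key step is to identify these predictable projections with values of the filter $\pi_t$. Since the reference filtration $\bF^S$ is generated by the continuous process $S$ (equivalently, by the $(\bF, \widehat \P)$-Brownian motion $\widehat W$), it is quasi-left-continuous, and the processes $\frac{\partial g}{\partial s}(t, S_t, X_t)\, Y_t$, $\frac{\partial g}{\partial x}(t, S_t, X_t)\, Y_t$ and $Y_t$ all have continuous trajectories under the standing smoothness assumptions on $g$, $a$, $b$ and $\gamma$. Hence for each $t$ the $\bF^S$-predictable and $\bF^S$-optional projections agree $\widehat \P$-a.s., so that
\begin{equation}
{}^{\widehat p, \bF^S}\bigl(\theta^{\F}_t Y_t\bigr) = \widehatesp{\theta^{\F}_t Y_t \bigm| \F^S_t}, \qquad {}^{\widehat p, \bF^S}(Y_t) = \widehatesp{Y_t \bigm| \F^S_t} = \pi_t(id_y).
\end{equation}
Now $S_t$ and $\sigma(t, S_t)$ are $\F^S_t$-measurable, so I can pull them out of the conditional expectation; using the notation \eqref{eq:filter}, this yields
\begin{equation}
\widehatesp{\theta^{\F}_t Y_t \bigm| \F^S_t} = \pi_t\!\left(id_y \frac{\partial g}{\partial s}\right) + \frac{\rho}{\sigma(t, S_t) S_t}\, \pi_t\!\left(a\, id_y \frac{\partial g}{\partial x}\right).
\end{equation}
Dividing by $\pi_t(id_y)$ gives \eqref{eq:theta_bis}.

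The main obstacle I anticipate is the careful justification of the passage from the $(\widehat \P, \bF^S)$-predictable projection to the filter $\pi_t$. This step requires the continuity of the processes in the integrand and the quasi-left-continuity of $\bF^S$; a clean way would be to invoke the general projection results collected in the Appendix (the same ones used in the proof of Corollary \ref{cor:pred_proj} and Proposition \ref{thm:strategy0}) so that the identification ${}^{\widehat p, \bF^S}(\cdot) = \pi_{\cdot}(\cdot)$ holds up to a $\ud t \otimes \ud \widehat \P$-null set, which is all that is needed for the characterization of the integrand in the Galtchouk--Kunita--Watanabe decomposition.
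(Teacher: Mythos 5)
Your proposal is correct and follows essentially the same route as the paper: invoke formula \eqref{eq:theta*} from Theorem \ref{thm:strategy}, substitute the full-information strategy $\theta^{\F}$ from Proposition \ref{xi_bis}, pull the $\F^S_t$-measurable factors out of the projection, and identify the resulting $(\widehat\P,\bF^S)$-predictable projections with the filter $\pi_t$. Your additional remark on justifying the identification of the predictable projection with the conditional expectation (via continuity of the integrands) is a point the paper dispatches with ``follows by the definition of the filter,'' so it is a welcome clarification rather than a deviation.
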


\begin{proof}
By equation \eqref{eq:theta*} in Theorem \ref{thm:strategy} and \eqref{eq:thetaF_1}  we get
\begin{align}\theta^*_t& =  \frac{{}^{\widehat p, \bF^S} \left(\theta^\F_t  e^{- \int_0^t \gamma(u,X_u) \ud u}\right)}{{}^{\widehat p, \bF^S} ( e^{- \int_0^t \gamma_u \ud u} ) }\\
& = \frac{{}^{\widehat p, \bF^S} \left ( e^{- \int_0^t \gamma(u,X_u) \ud u}  {\partial  g \over \partial s}(t, S_t, X_t) \right)}{{}^{\widehat p, \bF^S} ( e^{- \int_0^t \gamma_u \ud u} )}  + \frac{{}^{\widehat p, \bF^S} \left (e^{- \int_0^t \gamma(u,X_u) \ud u} \frac{\rho  a(t, X_t)}{S_t \sigma(t, S_t) } {\partial  g\over \partial x}(t, S_t, X_t)   \right )}{{}^{\widehat p, \bF^S} ( e^{- \int_0^t \gamma_u \ud u} ) },
\end{align}
for every $t \in \llbracket 0, T\wedge\tau\rrbracket$. Finally,  \eqref{eq:theta_bis} follows by the definition of the filter.
\end{proof}

\subsection{An example with uncorrelated Brownian motions}

Throughout this section we choose $\rho=0$, which corresponds to the case where $W$ and $B$ are $\P$-independent, and therefore $\widehat W$ and $B$ are $\widehat \P$-independent. In this case, a simpler expression for the first component of the optimal hedging strategy $\theta^*$ under partial information is provided.

On the probability space $(\Omega, \F, \widehat \P)$ the dynamics of the vector process $(S,X, Y)$ is given by
\begin{equation}\label{eq:uncorrelated}
\left\{
\begin{split}
&\ud S_t=S_t \sigma(t, S_t) \ud \widehat W_t, \quad S_0=s_0>0,\\
&\ud X_t=b(t, X_t)\ud t + a(t,X_t)\ud B_t, \quad X_0=x_0\in \R,\\
&\ud Y_t=-Y_t\gamma(t,X_t)\ud t, \qquad Y_0=1.
\end{split}\right.
\end{equation}

Moreover, we choose a recovery function of the form $U(t,s)=\delta \ s$, for every $(t,s)\in [0,T]\times \R^+$, where $\delta$ is a given positive constant. Then, the payment stream $N$ is given by $\ds N_t=\delta\int_0^t S_u \ud H_u$ if $t \in [0,T)$ and $\ds N_T = G(T,S_T)\I_{\{ \tau > T \}}$.

%We recall that the first component of the $(\bF^S, \widetilde \bG)$-locally risk-minimizing strategy for $N$ is given by\[ \theta^*_t=\frac{{}^{\widehat p, \bF^S} \left(\xi^m_t\right)}{ \sigma(t, S_t) S_t\ {}^{\widehat p, \bF^S} ( e^{- \int_0^t \gamma_u \ud u} ) }, \quad \llbracket 0, T\wedge \tau \rrbracket. \]
In the sequel we wish to characterize the optimal hedging strategy under full information, given in  \eqref{eq:thetaF_1}, and under partial information via \eqref{eq:theta*}, in this simpler example. This requires to compute $g$ in equation \eqref{eq:FK_1}.

The independence between $X$ and $S$ under $\widehat \P$ (that also holds when conditioning to $\F_t$, for each $t$), implies that
\begin{align}
&\widehatesp{G(T, S_T)e^{-\int_t^T\gamma(u, X_u)\ud u}\Big{|}\F_t}=\widehatesp{G(T, S_T)|\F_t}\widehatesp{e^{-\int_t^T\gamma(u, X_u)\ud u}\Big{|}\F_t} \\ &\qquad=\widetilde{g}(t,S_t)\frac{\widehatesp{Y_T|\F_t}}{Y_t}=\widetilde{g}(t,S_t)\widehatesp{Y_T|\F_t}e^{\int_0^t\gamma(r, X_r) \ud r},
\end{align}
for every $t \in [0,T]$, where by the Feynman-Kac theorem the function $\widetilde g$  can be characterized as the solution of the problem
\begin{equation}
\left\{
\begin{aligned}& \frac{\partial \widetilde g}{\partial t}(t,s)+\frac{\partial^2 \widetilde g}{\partial s^2}(t,s)\sigma^2(t,s) s^2=0, \quad (t,s) \in [0,T) \times \R^+,\\
&\widetilde g(T,s)=G(T,s)  , \quad s\in \R^+.
\end{aligned}
\right.
\end{equation}
Then for the remaining part of the conditional expectation in \eqref{eq:FK_1}, using the $\widehat \P$-independence between $(X, Y)$ and $S$ and the fact that $S$ is an $(\bF,  \widehat P)$-martingale, we have
\begin{align}
&\delta \widehatesp{\int_t^T e^{-\int_t^r\gamma(u, X_u)\ud u} S_r\gamma(r,X_r) \ud r | \F_t}=\delta\widehatesp{\int_t^T \frac{Y_r}{Y_t} S_r\gamma(r,X_r) \ud r | \F_t}\\
&\qquad=-\frac{\delta}{Y_t}\widehatesp{\int_t^T  S_r \ud Y_r | \F_t}=-\frac{\delta}{Y_t}\widehatesp{\int_t^T \ud( S_r  Y_r) | \F_t}+\frac{\delta}{Y_t}\widehatesp{\int_t^T  Y_r \ud S_r| \F_t}\\
&\qquad=-\frac{\delta}{Y_t}\widehatesp{S_TY_T-S_tY_t| \F_t}=\frac{\delta S_t}{Y_t}\left(Y_t-\widehatesp{Y_T|\F_t}\right).
\end{align}

This implies that
\[
g(t, X_t, S_t)=\widetilde{g}(t,S_t) \widehatesp{Y_T|\F_t}e^{\int_0^t\gamma(r, X_r) \ud r}+\frac{\delta S_t}{Y_t}\left(Y_t-\widehatesp{Y_T|\F_t}\right)
\]
\[
= ( \widetilde{g}(t,S_t)  - \delta S_t) e^{\int_0^t\gamma(r, X_r) \ud r}  \widehatesp{Y_T|\F_t} +\delta S_t.
\]

\begin{remark}
Note that for every $t \in [0,T]$,
\[
\widehatesp{Y_T|\F_t}=e^{-\int_0^t\gamma(u, X_u)\ud u}\widehatesp{e^{-\int_t^T\gamma(u, X_u)\ud u}\Big{|}\F_t}=e^{-\int_0^t\gamma(u, X_u)\ud u}\widehatesp{e^{-\int_t^T\gamma(u, X_u)\ud u}\Big{|}\F^B_t},
\]
where the last equality follows by the independence of $X$ and $W$ under $\widehat \P$. By the Feynman-Kac theorem, if there exists a function $\Phi \in \C_b^{1,2}([0,T] \times \R)$ which solves the problem
\begin{equation}
\left\{
\begin{aligned}& \frac{\partial \Phi}{\partial t}(t,x)+\frac{\partial \Phi}{\partial x}(t,x)b(t,x)+\frac{1}{2}\frac{\partial^2 \Phi}{\partial x^2}(t,x)a^2(t,x) -\frac{\partial \Phi}{\partial y}(t,x)\gamma(t,x)=0, \quad (t,x) \in [0,T) \times \R, \\
&\Phi(T,x)=1  , \quad x\in \R,
\end{aligned}
\right.
\end{equation}
then, $\ds \Phi(t,X_t)=\widehatesp{e^{-\int_t^T\gamma(u, X_u)\ud u}\Big{|}\F^B_t}$ and the process $\ds \left\{e^{-\int_0^t\gamma(u, X_u)\ud u} \Phi(t, X_t), \  t \in [0,T]\right\}$ is an $(\bF^B, \widehat \P)$-martingale.

%Consequently, it admits a representation in terms of the $(\bF^B, \widehat \P)$-Browian motion, $B$. Precisely,\[M^Y_t:=e^{-\int_0^t\gamma(u, X_u)\ud u} \Phi(t, X_t)=\Phi(0,x_0)+\int_0^t\beta_u \ud B_u, \quad t \in [0,T],\]where $\beta=\{\beta_t, \ t \in [0,T]\}$ is the $(\bF^B, \widehat \P)$-predictable process given by $\ds \beta_t=\frac{\ud \langle M^Y,B\rangle_t }{\ud \langle B\rangle_t}$ for every $t \in [0,T]$. Therefore, by the It\^{o} product rule, we get that $\ds \beta_t=Y_t\frac{\partial \Phi}{\partial x}(t, X_t)a(t,X_t)$, for every $t \in [0,T]$.
\end{remark}

Hence $g(t,S_t,X_t) = \widetilde{g}(t,S_t) \Phi(t,X_t) +  \delta S_t(1- \Phi(t,X_t) )$ and by using  \eqref{eq:thetaF_1} the optimal hedging strategy under full information is given by
\begin{equation}\label{eq:strategia_esempiofull}
\theta^{\F}_t=\left(\frac{\partial \widetilde g}{\partial s}(t, S_t) -\delta\right)  \Phi(t,X_t)+\delta, \quad t \in \llbracket0,T\wedge \tau\rrbracket.
\end{equation}

Finally, by  \eqref{eq:theta*} we get that the $(\bF^S, \widetilde \bG)$-locally risk-minimizing strategy can be written as
\begin{equation}\label{eq:strategia_esempio}
\theta^*_t=\frac{\left(\frac{\partial \widetilde g}{\partial s}(t, S_t) - \delta\right) \pi_{t}\left(id_y \ \Phi \right)}{\pi_{t}\left(id_y\right)}+\delta ,\quad t \in \llbracket0,T\wedge \tau\rrbracket.
\end{equation}

Note that, by the $\widehat \P$-independence of $(X, Y)$ and $S$, and the fact that the change of probability measure from $\P$ to $\widehat \P$ does not affect the law of $X$,  we have that the computation of the filter reduces to ordinary expectations with respect to $\P$
\begin{gather}
\pi_t( \Phi \ id_y) = \widehatesp{\Phi(t,X_t) e^{-\int_0^t \gamma(u,X_u) \ud u}\Big{|} \F^S_t}=\widehatesp{\Phi(t,X_t) e^{-\int_0^t \gamma(u,X_u) \ud u}}=\Phi(0, x_0)=\esp{Y_T},\\
\pi_t(id_y) = \widehatesp{e^{-\int_0^t \gamma(u,X_u) \ud u}\Big{|} \F^S_t}=\widehatesp{e^{-\int_0^t \gamma(u,X_u) \ud u}}=\widehatesp{Y_t}=\esp{Y_t},
\end{gather}
for every $t \in [0,T]$. Then we can write \eqref{eq:strategia_esempio} as
\begin{equation}\label{eq:strategia_esempio_2}
\theta^*_t=\frac{\left(\frac{\partial \widetilde g}{\partial s}(t, S_t) - \delta\right) \esp{Y_T} + \delta \esp{Y_t} }{\esp{Y_t}}, \quad t \in \llbracket0,T\wedge \tau\rrbracket
\end{equation}

where $\esp{Y_t} = \esp{ e^{-\int_0^t \gamma(u,X_u) \ud u}}$, $t \in [0,T]$.

 \begin{center}
{\bf Acknowledgements}
\end{center}
The authors wish to acknowledge Professor Monique Jeanblanc for valuable comments and suggestions. The authors are also grateful to the Gruppo Nazionale per l'Analisi Matematica, la Probabilit\`{a} e le loro
Applicazioni (GNAMPA) of the Istituto Nazionale di Alta Matematica (INdAM) for the financial support through project number 2016/000371.

\appendix

\section{The hazard process and the martingale hazard process of $\tau$ under partial information}\label{appendix:hazard_rate}

We define the conditional distribution of $\tau$ with respect to $\F^S_t$, for every $t \in [0,T]$, as
\begin{equation}\label{eq:Ftilde}
  F^S_t = \P(\tau \leq t |\F^S_t),\quad t \in [0,T].
\end{equation}
By the tower rule it is easy to check that $F^S_t=\esp{F_t|\F^S_t}$, for each $t \in [0,T]$. Hence, the assumption $F_t < 1$, for every  $ t \in [0,T]$, also implies that $F^S_t < 1$ for every $t \in [0,T]$.

We now introduce the $\bF^S$-hazard process of $\tau$ under $\P$,  $\Gamma^S=\{\Gamma^S_t, \ t \in [0,T]\}$, by setting
\begin{equation}\label{eq:hazard process2}
\Gamma^S_t=-\ln(1-F^S_t), \quad t \in [0,T].
\end{equation}

\begin{remark}
Notice that the relationship between the $\bF$-hazard process $\Gamma$, see \eqref{eq:hazard process}, and the $\bF^S$-hazard process $\Gamma^S$, see \eqref{eq:hazard process2}, is given by
\begin{equation}\label{eq:relazione}
e^{-\Gamma^S_t}=\esp{e^{-\Gamma_t}|\F^S_t}, \quad t \in [0,T].
\end{equation}
 \end{remark}
If  $\Gamma^S$ is continuous and increasing, by \citet[Proposition 5.1.3]{bielecki2002}
the process
$\{H_t-\Gamma^S_{t\wedge \tau},\ t \in [0,T]\}$ is a $(\widetilde \bG, \P)$-martingale.
However, without these assumptions, we will prove in Proposition \ref{prop:mg_haz_proc}
the existence of  an $(\bF^S, \widetilde \bG)$-martingale hazard process.

For the sake of clarity, we recall the definition of martingale hazard process in our setting.

\begin{definition}
An $\bF^S$-predictable, increasing process $\Lambda=\{\Lambda_t,\ t \in [0,T]\}$ is called an $(\bF^S, \widetilde \bG)$-martingale hazard process of the random time $\tau$ if and only if the process $\{H_t - \Lambda_{t\wedge \tau},\ t \in [0,T]\}$ follows a $(\widetilde \bG,\P)$-martingale.
\end{definition}

In general,  the $(\bF^S, \widetilde \bG)$-martingale hazard process  does not coincide with the $\bF^S$-hazard process $\Gamma^S$. This property is fulfilled  if the martingale invariance property holds, that is, any
$(\bF^S, \P)$-martingale turns out to be a $(\widetilde \bG, \P)$-martingale. In such a case, the $(\bF^S, \widetilde \bG)$-martingale hazard process uniquely specifies the $\bF^S$-survival probabilities of $\tau$. Nevertheless, we do not make this assumption in the paper.

In order to derive the $(\bF^S, \widetilde \bG)$-martingale hazard process of $\tau$ we need some preliminary results.

Recall that
given any subfiltration $\bH=\{\H_t,\ t \in [0,T]\}$ of $\bG$,

${}^{o,\bH} {Y}$ (respectively ${}^{p,\bH} Y$) denotes the optional (respectively predictable) projection of a given  $\P$-integrable, $\bG$-adapted process $Y$ with respect to $\bH$ and $\P$.

\begin{lemma}\label{relation}
Given a $\P$-integrable, $\bG$-adapted process $Y$, we have
\begin{align}
\I_{\{\tau > t\}}  {}^{o, \widetilde \bG} Y_t & = \I_{\{\tau > t\}} \frac{ {}^{o, \bF^S}\left(Y_t  \I_{\{\tau > t\}}\right)} {{}^{o, \bF^S} { \I_{\{\tau > t\}}}}, \label{op1}\\
\I_{\{\tau \geq t\}}  {}^{p,\widetilde \bG} Y_t & = \I_{\{\tau \geq t\}} \frac{{}^{p, \bF^S} \left(Y_t  \I_{\{\tau \geq t\}}\right)}{{}^{p, \bF^S}\I_{\{\tau \geq t\}}}, \label{pr1}
\end{align}
for each $t \in [0,T]$. Moreover, if $Y$ is $\bF$-predictable then
\begin{equation}\label{pr2}
\I_{\{\tau \geq t\}}  {}^{p,\widetilde \bG} Y_t = \I_{\{\tau \geq t\}}\frac{{}^{p, \bF^S} \left(Y_t  e^{- \int_0^t \gamma_u \ud u}\right)} {{}^{p, \bF^S} (e^{- \int_0^t \gamma_u \ud u})}, \quad t \in [0,T].
 \end{equation}

\end{lemma}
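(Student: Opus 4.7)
The plan is to reduce all three identities to a single ``key fact'' about conditional expectations in a progressive enlargement of filtrations: on the set $\{\tau>t\}$,
\[
\esp{Y_t\mid\widetilde\G_t}\;=\;\frac{\esp{Y_t\I_{\{\tau>t\}}\mid\F^S_t}}{\esp{\I_{\{\tau>t\}}\mid\F^S_t}},
\]
and analogously on $\{\tau\geq t\}$ with $\widetilde\G_{t^-}$ and $\F^S_{t^-}$. This is the standard step in the theory (cf.\ \citet[Section 5.1]{bielecki2002}), and I would prove it by the characterizing property of conditional expectation: every $A\in\widetilde\G_t$ contained in $\{\tau>t\}$ has the form $A=A_1\cap\{\tau>t\}$ for some $A_1\in\F^S_t$ (because on $\{\tau>t\}$ the $\F^H_t$-information reduces to the single event $\{\tau>t\}$), so
\[
\esp{Y_t\I_{A_1\cap\{\tau>t\}}}=\esp{\I_{A_1}\,\esp{Y_t\I_{\{\tau>t\}}\mid\F^S_t}}=\esp{\I_{A_1}\,\frac{\esp{Y_t\I_{\{\tau>t\}}\mid\F^S_t}}{\esp{\I_{\{\tau>t\}}\mid\F^S_t}}\,\esp{\I_{\{\tau>t\}}\mid\F^S_t}},
\]
and this last expression equals the conditional expectation of the right-hand side of the key fact against $\I_{A}$ via the tower rule; the hypothesis $F_t<1$ ensures that the denominator is strictly positive.

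Equation \eqref{op1} then follows by specializing this identity to the process $Y$ and recognizing that for a deterministic $t$ the optional projection w.r.t.\ a filtration $\bH$ satisfies ${}^{o,\bH}Y_t=\esp{Y_t\mid\H_t}$ $\P$-a.s.; the numerator is ${}^{o,\bF^S}(Y_t\I_{\{\tau>t\}})$ and the denominator is ${}^{o,\bF^S}\I_{\{\tau>t\}}$, both at time $t$. Equation \eqref{pr1} is obtained by rerunning the same argument at the strict past, replacing $\widetilde\G_t$ by $\widetilde\G_{t^-}$, $\F^S_t$ by $\F^S_{t^-}$ and $\{\tau>t\}$ by $\{\tau\geq t\}$: any $A\in\widetilde\G_{t^-}$ contained in $\{\tau\geq t\}$ is of the form $A_1\cap\{\tau\geq t\}$ with $A_1\in\F^S_{t^-}$, and for a deterministic $t$ the predictable projection w.r.t.\ $\bH$ satisfies ${}^{p,\bH}Y_t=\esp{Y_t\mid\H_{t^-}}$.

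For \eqref{pr2}, assuming $Y$ is $\bF$-predictable so that $Y_t$ is $\F_{t^-}$-measurable, I first condition on $\F_{t^-}$ inside the numerator on the right-hand side of \eqref{pr1}:
\[
\esp{Y_t\I_{\{\tau\geq t\}}\mid\F^S_{t^-}}=\esp{Y_t\,\P(\tau\geq t\mid\F_{t^-})\mid\F^S_{t^-}}.
\]
By the assumed existence of the $\bF$-mortality intensity, $\Gamma$ is absolutely continuous, hence continuous, so $\P(\tau\geq t\mid\F_{t^-})=\P(\tau>t^-\mid\F_{t^-})=e^{-\Gamma_{t^-}}=e^{-\int_0^t\gamma_u\,\ud u}$. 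Applying the same reduction to the denominator gives \eqref{pr2}.

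The main technical delicacy is the careful bookkeeping between right-continuous ($\widetilde\G_t$, $\F^S_t$, $\{\tau>t\}$) and left-continuous ($\widetilde\G_{t^-}$, $\F^S_{t^-}$, $\{\tau\geq t\}$) versions, together with the ``generator lemma'' for $\widetilde\G_{t^-}$-measurable sets on $\{\tau\geq t\}$; once this is in place, \eqref{pr2} rests solely on the continuity of $\Gamma$, which is precisely what converts $\{\tau>t^-\}$ into $\{\tau\geq t\}$ inside the conditional expectation.
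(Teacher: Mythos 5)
Your argument is correct and rests on the same two structural facts as the paper's proof: the trace of $\widetilde\G_t$ (resp.\ $\widetilde\G_{t^-}$) on $\{\tau>t\}$ (resp.\ $\{\tau\geq t\}$) is generated by $\F^S_t$ (resp.\ $\F^S_{t^-}$), and, for \eqref{pr2}, one first projects $\I_{\{\tau\geq t\}}$ onto $\bF$ using the continuity of $\Gamma$. The implementation differs in two places. For \eqref{op1} the paper simply cites \citet[Lemma 5.1.2]{bielecki2002}, whereas you reprove the key identity from the definition of conditional expectation; that is fine. For \eqref{pr1} the paper does not work at the strict past at all: it tests the candidate process against arbitrary $\bF^S$-predictable integrands $\varphi$ inside $\esp{\int_0^t\varphi_s(\cdot)\,\ud s}$ and invokes the defining duality of predictable projections, obtaining the identity $\ud s\times\ud\P$-a.e.; you instead evaluate both predictable projections at the deterministic (hence predictable) time $t$ as conditional expectations given $\widetilde\G_{t^-}$ and $\F^S_{t^-}$ and rerun the trace argument there, obtaining the identity a.s.\ for each fixed $t$. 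Both versions suffice for the way the lemma is used later (inside $\ud s$-integrals), and yours matches the ``for each $t\in[0,T]$'' phrasing of the statement more literally. The one step you should spell out is $\P(\tau\geq t\mid\F_{t^-})=e^{-\int_0^t\gamma_u\,\ud u}$: writing it as $\P(\tau>t^-\mid\F_{t^-})=e^{-\Gamma_{t^-}}$ conflates the left limit of the survival process with conditioning on the left $\sigma$-field. The identity is true, but it needs the limiting argument: for $A\in\F_s$ with $s<t$ one has $\esp{\I_A\I_{\{\tau\geq t\}}}=\lim_n\esp{\I_A\,e^{-\Gamma_{t-1/n}}}=\esp{\I_A\,e^{-\Gamma_t}}$ by continuity of $\Gamma$ and dominated convergence, and $e^{-\Gamma_t}$ is $\F_{t^-}$-measurable because $\gamma$ is $\bF$-predictable. (The paper's corresponding step, ${}^{p,\bF}\I_{\{\tau\geq t\}}={}^{o,\bF}\I_{\{\tau>t\}}$, is compressed to a comparable degree.)
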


\begin{proof}
How to get formula \eqref{op1} is shown in \citet[Lemma 5.1.2]{bielecki2002}.

To prove \eqref{pr1}, first observe that  since $F_t<1$ for all $t\in [0,T]$,  there exists an $\bF^S$-predictable process $\widetilde Y=\{\widetilde Y_t, \ t \in [0,T]\}$ such that  $\widetilde Y_t  \  \I_{\{\tau\geq t\}} = {}^{p,\widetilde \bG} Y_t \  \I_{\{\tau\geq t\}}$, $\P$-a.s. for each $t \in [0,T]$.
By the predictable projection properties, for any $\bF^S$-predictable process $\varphi=\{\varphi_t, \ t \in [0,T]\}$ and for each $t \in [0,T]$, we get
\begin{align}
\esp{\int_0^t \varphi_s\widetilde Y_s \ {}^{p, \bF^S} \!  \I_{\{\tau \geq s\}}  \ud s} & = \esp{\int_0^t \varphi_s \widetilde Y_s   \I_{\{\tau \geq s\}} \ud s} = \esp{\int_0^t \varphi_s \I_{\{\tau \geq s\} } \ {}^{p,\widetilde \bG} Y_s \ud s} \\
&=  \esp{\int_0^t \varphi_s \I_{\{\tau \geq s\} } Y_s \ud s} = \esp{\int_0^t \varphi_s \ {}^{p, \bF^S} \!\left(\I_{\{\tau \geq s\}} Y_s\right)  \ud s},
\end{align}
since the process $\{ \varphi_t  \I_{\{\tau \geq t\} },  \ t \in [0,T]\}$ is $\widetilde \bG$-predictable.

Now consider the case where  $Y$ is $\bF$-predictable. %First, let us observe that
Since $\{{}^{o, \bF} \I_{\{\tau > t\}}=e^{- \int_0^t \gamma_u \ud u},\ t \in [0,T]\}$ is a continuous process, we get
$${}^{o, \bF} \I_{\{\tau > t\}} = {}^{o, \bF} \I_{\{\tau \geq t\}} =  {}^{p, \bF} \I_{\{\tau \geq t\}},\quad t \in [0,T].
$$

Finally, equation %\eqref{op2}
 \eqref{pr2} is consequence of the following chains of equalities
$${}^{p, \bF^S} \I_{\{\tau \geq t\}} = {}^{p, \bF^S} \left( {}^{p, \bF} \I_{\{\tau \geq t\}}\right) = {}^{p, \bF^S}\left( e^{- \int_0^t \gamma_u \ud u}\right),
$$
and
$$ {}^{p, \bF^S} \left(Y_t  \I_{\{\tau \geq t\}}\right)  =
{}^{p, \bF^S} \left(Y_t   \ {}^{p, \bF} \I_{\{\tau \geq t\}}\right) = {}^{p, \bF^S} \left(Y_t  e^{- \int_0^t \gamma_u \ud u}\right),
 $$
 for every $t \in [0,T]$.
\end{proof}

\begin{remark}
Note that the $\bF^S$-hazard process $\Gamma^S=\{\Gamma^S_t, \ t \in [0,T]\}$, can be written as
\begin{equation}\label{eq:gammatilde}
\Gamma^S_t=-\ln\left({}^{o, \bF^S} \left(e^{- \int_0^t \gamma_u \ud u}\right)\right), \quad t \in [0,T].
\end{equation}
\end{remark}

\begin{remark}\label{projection}
Given a $(\bG, \P)$-martingale $m =\{m_t,  \ t \in [0,T]\}$ and a $\bG$-progressively measurable process $\psi = \{ \psi_t,  \ t \in [0,T]\}$ such that $\esp{\int_0^T |\psi_u| \ud u} < \infty$, the processes $\ds \left\{{}^{o,\widetilde \bG} m_t = \esp{m_t | \widetilde \G_t}, \ t \in [0,T]\right\}$  and $\ds \left\{{\phantom{\bigg{|}}}^{o,\widetilde \bG} \left( \int_0^t \psi_u \ud u\right) -  \int_0^t {}^{o,\widetilde \bG} \psi_u \ud u = \esp{ \int_0^t \psi_u \ud u \Big{|} \widetilde \G_t}  - \int_0^t  \esp{   \psi_u  \Big{|} \widetilde \G_u} \ud u, \ t \in [0,T]\right\}$ are  $(\widetilde \bG, \P)$-martingales, see for instance \citet[Remark 2.1]{cecicolaneri2012}.
\end{remark}

Finally, we give the $(\bF^S, \widetilde \bG)$-martingale hazard process of $\tau$.

\begin{proposition} \label{prop:mg_haz_proc}
The death time $\tau$ admits an  $(\bF^S, \widetilde \bG)$-martingale hazard process $\Lambda =\{\Lambda_t, \ t \in [0,T]\}$, where $\Lambda_t := \int_0^{t } \gamma^S_u \ud u $, with $\gamma^S=\{\gamma^S_t, \ t \in [0,T]\}$ being a nonnegative, $\bF^S$-predictable process.
Moreover, for every $t \in [0,T]$,
\begin{equation}\label{gtilde-intensity2}
\gamma^S_t  \I_{\{\tau\geq t\}} = {}^{p,\widetilde \bG} \gamma_t  \I_{\{\tau\geq t\}} \quad \P-a.s.
\end{equation}
and
\begin{equation}\label{esplicita}
  \gamma^S_t =
  {{}^{p, \bF^S} \left(\gamma_t  e^{- \int_0^t \gamma_u \ud u}
  \right)  \over {}^{p, \bF^S} \left(e^{-\int_0^t \gamma_u \ud u}\right) }, \quad t \in \llbracket 0, T \wedge \tau\rrbracket.
\end{equation}
\end{proposition}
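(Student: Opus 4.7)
The strategy is to project the known $(\bG,\P)$-martingale $M_t=H_t-\int_0^t\lambda_u\,\ud u$ (with $\lambda_t=\gamma_t\I_{\{\tau\ge t\}}$) onto $\widetilde\bG$ and then rewrite the resulting compensator in terms of an $\bF^S$-predictable process. The central input is the general principle that the optional projection of a $(\bG,\P)$-martingale onto a subfiltration satisfying the usual hypotheses is again a martingale. Applied to $M$, since $H$ is already $\widetilde\bG$-adapted, this yields that
\[
\esp{M_t\mid\widetilde\G_t}=H_t-\esp{\int_0^t\lambda_u\,\ud u\,\Big|\,\widetilde\G_t}
\]
is a $(\widetilde\bG,\P)$-martingale.

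Next, I would invoke Remark \ref{projection} with $\psi=\lambda$: the process $\esp{\int_0^t\lambda_u\,\ud u\mid\widetilde\G_t}-\int_0^t{}^{o,\widetilde\bG}\lambda_u\,\ud u$ is a $(\widetilde\bG,\P)$-martingale. Adding the two martingales gives that $H_t-\int_0^t{}^{o,\widetilde\bG}\lambda_u\,\ud u$ is a $(\widetilde\bG,\P)$-martingale. The standard fact that the optional and predictable projections of a measurable process coincide $\ud u\otimes \ud\P$-almost everywhere lets me replace ${}^{o,\widetilde\bG}\lambda_u$ by ${}^{p,\widetilde\bG}\lambda_u$ in the Lebesgue integral. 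Then, since $\I_{\{\tau\ge t\}}=1-H_{t-}$ is left-continuous and $\widetilde\bG$-adapted, hence $\widetilde\bG$-predictable, it factors out of the predictable projection:
\[
{}^{p,\widetilde\bG}\lambda_t={}^{p,\widetilde\bG}\!\left(\gamma_t\I_{\{\tau\ge t\}}\right)=\I_{\{\tau\ge t\}}\,{}^{p,\widetilde\bG}\gamma_t.
\]
Because $\gamma$ is $\bF$-predictable, formula \eqref{pr2} of Lemma \ref{relation} identifies $\I_{\{\tau\ge t\}}\,{}^{p,\widetilde\bG}\gamma_t$ with $\I_{\{\tau\ge t\}}\gamma^S_t$, where $\gamma^S$ is exactly the right-hand side of \eqref{esplicita}. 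Using $\int_0^t\I_{\{\tau\ge u\}}\,\ud u=t\wedge\tau$ (which is Lebesgue-a.s.\ true since $\{\tau=u\}$ is $\ud u$-null), I then rewrite $\int_0^t{}^{p,\widetilde\bG}\lambda_u\,\ud u=\int_0^{t\wedge\tau}\gamma^S_u\,\ud u=\Lambda_{t\wedge\tau}$.

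Finally, I would verify the structural properties of $\Lambda$ claimed in the statement. The process $\gamma^S$ is nonnegative (numerator and denominator in \eqref{esplicita} are nonnegative) and $\bF^S$-predictable as a ratio of two predictable projections. The denominator is strictly positive: being the predictable projection of the left-continuous process $e^{-\int_0^\cdot \gamma_u\,\ud u}$, it coincides with its optional projection $\esp{e^{-\Gamma_t}\mid\F^S_t}=1-F^S_t$, which is strictly positive by the standing assumption $F_t<1$ (hence $F^S_t<1$). Then $\Lambda_t=\int_0^t\gamma^S_u\,\ud u$ is continuous and increasing, hence right-continuous and $\bF^S$-predictable, so it is a valid $(\bF^S,\widetilde\bG)$-martingale hazard process, and \eqref{gtilde-intensity2} follows from the identification ${}^{p,\widetilde\bG}\gamma_t\I_{\{\tau\ge t\}}=\gamma^S_t\I_{\{\tau\ge t\}}$ just established.

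The main obstacle, though not deep, is the bookkeeping around optional versus predictable projections: one must be careful that the passage from ${}^{o,\widetilde\bG}\lambda$ to ${}^{p,\widetilde\bG}\lambda$ inside the Lebesgue integral is justified, and that $\I_{\{\tau\ge t\}}$ (rather than $\I_{\{\tau>t\}}$) is the $\widetilde\bG$-predictable object that legitimately factors out; everything else is a direct application of Lemma \ref{relation} and Remark \ref{projection}.
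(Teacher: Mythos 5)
Your proposal is correct and follows essentially the same route as the paper: project the $(\bG,\P)$-martingale $M$ onto $\widetilde\bG$ via Remark \ref{projection}, pass from the optional to the predictable projection inside the Lebesgue integral (the content of Lemma \ref{op-pre}), factor out the $\widetilde\bG$-predictable indicator $\I_{\{\tau\geq t\}}$, and identify $\gamma^S$ through formula \eqref{pr2} of Lemma \ref{relation}. Your additional verification of the nonnegativity, predictability and strict positivity of the denominator is a welcome (and correct) supplement that the paper leaves implicit.
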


\begin{proof}
By applying  Remark \ref{projection} to the   $(\bG, \P)$-martingale $M$, see \eqref{eq:martingala_salto},
we have that
$$
\left\{H_t -\int_0^{t} {}^{o,\widetilde \bG} \lambda_u \ud u,\ t \in [0,T]\right\}
$$
is a $(\widetilde \bG, \P)$-martingale,
which implies, taking Lemma \ref{op-pre} into account, that also
$$
\left\{H_t -\int_0^{t} {}^{p,\widetilde \bG} \lambda_u \ud u =  H_t - \int_0^{t \wedge \tau} {}^{p,\widetilde \bG} \gamma_u \ud u,\ t \in [0,T]\right\}
$$
is a $(\widetilde \bG, \P)$-martingale.

Since $F_t<1$ for all $t\in  [0,T]$, for any
$\widetilde \bG$-predictable process  $h=\{h_t,\ t \in [0,T]\}$  there exists
an $\bF^S$-predictable process $\widetilde h=\{\widetilde h_t,\ t \in [0,T]\}$
 such that $\widetilde h_t \I_{\{\tau\geq t\}} = h_t  \I_{\{\tau\geq t\}}$, $ \P$-a.s. for each $t \in [0,T]$.
This implies the existence of an
  $\bF^S$-predictable process $\gamma^S$ such that
   \eqref{gtilde-intensity2} is satisfied.

Hence, the process $\{\Lambda_t = \int_0^t  \gamma^S_u  \ud u,\ t \in [0,T]\}$ is  an $(\bF^S, \widetilde \bG)$-martingale hazard process of $\tau$ since $H_t - \Lambda_{t \wedge \tau} =  H_t - \int_0^{t \wedge \tau} \gamma^S_u  \ud u$, for each $t \in [0,T]$,
is a $(\widetilde \bG, \P)$-martingale.
To complete the proof is sufficient to apply the relationship \eqref{pr2} in Lemma \ref{relation}.
\end{proof}

Note that Proposition \ref{prop:mg_haz_proc} ensures that $\tau$ turns out to be a totally inaccessible $\widetilde \bG$-stopping time thanks to \citet[Chapter 6.78]{dellacherie_meyer1982}.

\section{Technical results} \label{appendix:projections}
\subsection{On optional and predictable projections under partial information}

\begin{lemma}\label{op-pre}
Given a  $\bG$-progressively measurable process $\psi= \{ \psi_t,  \ t \in [0,T]\}$ such that $\esp{\int_0^{T} |\psi_u| \ud u} < \infty$, then
$$
\int_0^{t } {}^{o,\widetilde \bG} \psi_u \ud u = \int_0^{t }  {}^{p,\widetilde \bG} \psi_u \ud u  \quad \P-a.s. \quad  t \in [0, T].
$$
\end{lemma}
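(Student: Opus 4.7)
The plan is to reduce the claim to the statement that the $\widetilde \bG$-adapted process
$$D_t := \int_0^t \bigl({}^{o,\widetilde \bG}\psi_u - {}^{p,\widetilde \bG}\psi_u\bigr)\,du, \quad t\in[0,T],$$
is indistinguishable from zero. First I would check regularity: integrability of $\psi$ together with Jensen's inequality for conditional expectations yields $\esp{\int_0^T |{}^{o,\widetilde \bG}\psi_u|\,du}<\infty$ and the analogous bound for the predictable projection, so both $I^o_t := \int_0^t {}^{o,\widetilde \bG}\psi_u\,du$ and $I^p_t := \int_0^t {}^{p,\widetilde \bG}\psi_u\,du$ are well defined, continuous, $\widetilde \bG$-adapted processes of integrable variation. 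Continuity plus $\widetilde \bG$-adaptedness makes $D = I^o - I^p$ a continuous $\widetilde \bG$-predictable finite-variation process starting at zero.

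The next step is to prove that $D$ is a $(\widetilde \bG, \P)$-martingale. Since Lebesgue measure $du$ is simultaneously a $\widetilde \bG$-optional and a $\widetilde \bG$-predictable random measure, and every $\widetilde \bG$-predictable process is $\widetilde \bG$-optional, the defining characterizations of the two projections tested against any bounded $\widetilde \bG$-predictable $U$ give
$$\esp{\int_0^T U_s\,{}^{o,\widetilde \bG}\psi_s\,ds} = \esp{\int_0^T U_s\psi_s\,ds} = \esp{\int_0^T U_s\,{}^{p,\widetilde \bG}\psi_s\,ds}.$$
Specialising to $U_s = \I_A \I_{(r,t]}(s)$ with $0\le r\le t\le T$ and $A\in\widetilde \G_r$ produces $\esp{\I_A(D_t-D_r)}=0$, i.e.\ $\esp{D_t - D_r\mid\widetilde \G_r}=0$.

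Finally I would invoke the classical uniqueness-type result that a $\widetilde \bG$-predictable martingale of finite variation starting at $0$ is indistinguishable from $0$ (see e.g.\ \citet[Chapter VII.16]{dellacherie_meyer1982}). Applied to $D$ this forces $D\equiv 0$, which is exactly the desired identity.

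The main obstacle I anticipate is the clean invocation of the test-integral characterizations of ${}^{o,\widetilde \bG}\psi$ and ${}^{p,\widetilde \bG}\psi$ when $\psi$ is only $\bG$-progressively measurable (not $\widetilde \bG$-measurable); one needs the standard fact that, for $\P$-integrable $\psi$, the optional and predictable $\widetilde \bG$-projections exist and satisfy $\esp{\int U_s\psi_s\,ds} = \esp{\int U_s\,{}^{o,\widetilde \bG}\psi_s\,ds}$ for any bounded $\widetilde \bG$-optional $U$ (and the analogous relation in the predictable case), so that the fact that $ds$ is both optional and predictable lets us combine the two identities on the common test class of bounded $\widetilde \bG$-predictable processes. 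Once this point is settled, the predictable-finite-variation-martingale step is entirely routine.
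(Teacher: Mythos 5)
Your proposal is correct and follows essentially the same route as the paper: both define the difference process $\int_0^t({}^{o,\widetilde\bG}\psi_u-{}^{p,\widetilde\bG}\psi_u)\,du$, show it is a $(\widetilde\bG,\P)$-martingale by testing the two projection identities against $\varphi_u=\I_A\I_{(s,t]}(u)$, and conclude it vanishes because a continuous finite-variation martingale null at zero is constant. The only cosmetic difference is the reference invoked for the last step (the paper cites Revuz--Yor for continuous finite-variation martingales rather than the predictable-martingale version), which changes nothing of substance.
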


\begin{proof}
First, we prove that the process $U=\{U_t,\ t \in [0,T]\}$ given by $U_t := \int_0^{t } ({}^{o,\widetilde \bG} \psi_u -  {}^{p,\widetilde \bG} \psi_u)  \ud u$, $t \in [0, T ]$,
is a $(\widetilde \bG, \P)$-martingale.

By the properties of predictable and optional projections, for any $\widetilde \bG$-predictable process $\varphi=\{\varphi_t, \ t \in [0,T]\}$ we get
\begin{equation}
\esp{\int_0^T \varphi_u {}^{p, \widetilde \bG} \psi_u  \ud u} = \esp{\int_0^T \varphi_u \psi_u  \ud u} = \esp{\int_0^T \varphi_u {}^{o, \widetilde \bG} \psi_u  \ud u}.
 \end{equation}
By choosing $\varphi_u = \I_A \I_{(s,t]}(u)$, $s <t$, $A \in \widetilde \G_s$, we obtain that
\begin{equation}
\esp{ \I_A \int_s^t ( {}^{p, \widetilde \bG} \psi_u  -  {}^{o, \widetilde \bG} \psi_u  ) \ud u} =0.
 \end{equation}

Finally, since $U$ is a process of finite variation by \citet[Chapter IV, Proposition 1.2]{revuz2013continuous}, $U$ is necessarily constant and equal to $U_0=0$, which concludes the proof.
\end{proof}

For reader's convenience, we provide a version of the Kallianpur-Striebel formula holding for predictable projections.
\begin{lemma} \label{lem:KS-pred}
If $G=\{G_t,\ t \in [0,T]\}$ is an $\bF$-adapted process, such that $\esp{G_t L_t} < \infty$, for any $t \in [0,T]$, then
\begin{equation} \label{eq:KS_pred}
{}^{\widehat p, \bF^S} G_t = \frac{{}^{p, \bF^S}\left(G_t L_t\right)}{{}^{p, \bF^S} L_t}, \quad t \in [0,T],
\end{equation}
where $L$ is the density process given in \eqref{eq:densitaL}.
\end{lemma}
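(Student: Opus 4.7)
The plan is to verify the identity at an arbitrary $\bF^S$-predictable stopping time $\widehat\tau \le T$ and then invoke the fact that two $\bF^S$-predictable processes coinciding at all such times are indistinguishable. By the very definition of the predictable projection recalled in Section~\ref{sec:semi_mg_decomp}, the identity to prove reduces to
\[
\widehatesp{G_{\widehat\tau}\,\big|\,\F^S_{\widehat\tau^-}}
=\frac{\bE[G_{\widehat\tau}\,L_{\widehat\tau}\,|\,\F^S_{\widehat\tau^-}]}{\bE[L_{\widehat\tau}\,|\,\F^S_{\widehat\tau^-}]}.
\]

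First, I would apply the generalized Bayes rule associated with the density $L_T=\ud\widehat\P/\ud\P|_{\F_T}$ to the sub-$\sigma$-algebra $\F^S_{\widehat\tau^-}\subseteq\F_T$, obtaining
\[
\widehatesp{G_{\widehat\tau}\,\big|\,\F^S_{\widehat\tau^-}}
=\frac{\bE[G_{\widehat\tau}\,L_T\,|\,\F^S_{\widehat\tau^-}]}{\bE[L_T\,|\,\F^S_{\widehat\tau^-}]}.
\]
Next, since $\widehat\tau$ is in particular an $\bF$-stopping time and $\F^S_{\widehat\tau^-}\subseteq\F_{\widehat\tau^-}\subseteq\F_{\widehat\tau}$, I would insert a conditional expectation with respect to $\F_{\widehat\tau}$ in the numerator and denominator. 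The variable $G_{\widehat\tau}$ is $\F_{\widehat\tau}$-measurable because $G$ is $\bF$-adapted, so it pulls out, and the $(\bF,\P)$-martingale property of $L$ combined with optional sampling gives $\bE[L_T|\F_{\widehat\tau}]=L_{\widehat\tau}$. The tower property then yields
\[
\bE[G_{\widehat\tau}\,L_T\,|\,\F^S_{\widehat\tau^-}]=\bE[G_{\widehat\tau}\,L_{\widehat\tau}\,|\,\F^S_{\widehat\tau^-}],\qquad
\bE[L_T\,|\,\F^S_{\widehat\tau^-}]=\bE[L_{\widehat\tau}\,|\,\F^S_{\widehat\tau^-}],
\]
and the two right-hand sides are, by definition, the $\bF^S$-predictable projections under $\P$ of $GL$ and $L$ evaluated at $\widehat\tau$. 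This produces exactly the claimed formula at $\widehat\tau$.

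The only delicate points are ensuring the integrability needed for each conditional expectation to be well defined (which is precisely what the hypothesis $\bE[|G_t|L_t]<\infty$ for each $t$ furnishes, together with the integrability of $L_T$), and confirming that $L$ can be stopped at $\widehat\tau$ in the way used above. I do not anticipate a substantial obstacle beyond bookkeeping: the construction is the predictable analogue of the classical Kallianpur--Striebel identity, with the continuity and square-integrability of $L$ established in \eqref{eq:densitaL} making optional sampling immediate. Finally, uniqueness of $\bF^S$-predictable projections (applied to the process $t\mapsto{}^{p,\bF^S}(G_tL_t)/{}^{p,\bF^S}L_t$, which is $\bF^S$-predictable as a ratio of $\bF^S$-predictable processes with a positive denominator thanks to $L>0$) concludes the identification throughout $[0,T]$.
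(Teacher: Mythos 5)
Your argument is correct, but it takes a genuinely different route from the paper's. You prove the identity pointwise at an arbitrary $\bF^S$-predictable stopping time $\widehat\tau$: abstract Bayes formula for $\widehatesp{\,\cdot\,|\F^S_{\widehat\tau^-}}$ with density $L_T$, then the tower property through $\F_{\widehat\tau}$ and optional sampling of the $(\bF,\P)$-martingale $L$ to replace $L_T$ by $L_{\widehat\tau}$, and finally the defining property of the $\P$-predictable projection at $\widehat\tau$; the predictable section theorem then upgrades this to indistinguishability. The paper instead argues in weak/duality form: it tests both sides against an arbitrary $\bF^S$-predictable process $\varphi$ inside $\int_0^t\cdots\,\ud s$, moves between $\widehat\P$- and $\P$-expectations via the density $L$, and repeatedly invokes the duality property $\esp{\int\varphi_s\psi_s\,\ud s}=\esp{\int\varphi_s\,{}^{p,\bF^S}\psi_s\,\ud s}$ together with Fubini. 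Your route buys the stronger conclusion (equality up to evanescence rather than $\ud t\otimes\ud\P$-a.e.), and it makes the Kallianpur--Striebel structure transparent; the paper's route avoids any discussion of measurability of $G_{\widehat\tau}$ and of integrability at random times, which is all that is needed since the lemma is only ever used inside $\ud u$- and $\ud S^\tau$-integrals downstream. The two caveats you flag yourself --- that $G_{\widehat\tau}$ be $\F_{\widehat\tau}$-measurable (strictly this wants $G$ progressive or predictable, not merely adapted, which is how the lemma is actually applied) and that $G_{\widehat\tau}L_{\widehat\tau}$ be integrable at random rather than deterministic times --- are bookkeeping of the same order as in the paper's own definition of projections and do not constitute a gap.
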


\begin{proof}
To prove the result, we need to check that for every $\bF^S$-predictable process $\varphi$, the following equality holds
\begin{equation}
\widehatesp{\int_0^t \varphi_s  \ {}^{\widehat p, \bF^S} G_s\ {}^{p, \bF^S} L_s \ud s}=\widehatesp{\int_0^t \varphi_s \  {}^{p, \bF^S}\left(G_s L_s\right)\ud s},
\end{equation}
for every $t \in [0,T]$. By applying Fubini's theorem twice, and the property of the predictable projection, for every $\bF^S$-predictable process $\varphi$ and for every $t \in [0,T]$, we get
\begin{align*}
\widehatesp{\int_0^t \varphi_s  \ {}^{\widehat p, \bF^S} G_s \ {}^{p, \bF^S} L_s \ud s} & = \widehatesp{\int_0^t \varphi_s  G_s\ {}^{p, \bF^S} L_s \ud s}=\int_0^t \widehatesp{\varphi_s  G_s\ {}^{p, \bF^S} L_s}\ud s\\
&=\int_0^t \esp{\varphi_s  G_s L_s \ {}^{p, \bF^S} L_s}\ud s=\int_0^t \esp{\varphi_s \  {}^{p, \bF^S}\left(G_s L_s\right)\ {}^{p, \bF^S} L_s}\ud s\\
&= \int_0^t\esp{\varphi_s  \ {}^{p, \bF^S}\left(G_s L_s\right) L_s}\ud s=\widehatesp{\int_0^t\varphi_s \ {}^{p, \bF^S}\left(G_s L_s\right)\ud s},
\end{align*}
which concludes the proof.
\end{proof}

If the process $G$ is $\bG$-adapted but not necessarily $\bF$-adapted, then a similar result is showed in the following lemma.

\begin{lemma}\label{lem:KS-pred2}
If $G=\{G_t,\ t \in [0,T]\}$ is a $\bG$-adapted process, such that $\esp{G_t L_t} < \infty$, for any $t \in [0,T]$, then
\begin{equation} \label{eq:KS_pred2}
\I_{\{\tau\ge t\}}{}^{\widehat p, \widetilde \bG} G_t = \I_{\{\tau\ge t\}} \frac{{}^{p, \widetilde \bG}\left(G_t L_t\right)}{{}^{p, \widetilde \bG} L_t}, \quad t \in [0,T].
\end{equation}
\end{lemma}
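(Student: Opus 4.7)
The plan is to adapt the proof of Lemma \ref{lem:KS-pred} to the present setting, where $G$ is merely $\bG$-adapted (not $\bF$-adapted) and the projection is taken with respect to $\widetilde \bG \subseteq \bG$. The crucial substitute for the density process $L$ is the stopped process $L^\tau$ from \eqref{eq:Ltau}, which, by Remark \ref{W}, is a $(\bG,\P)$-martingale and provides the density of $\widehat \P$ with respect to $\P$ on $\G_{T\wedge\tau}$. Since $L^\tau_t = L_t$ on $\{\tau \geq t\}$ and $\I_{\{\tau \geq t\}} = 1 - H_{t^-}$ is $\widetilde \bG$-predictable, the predictable projection commutes with multiplication by $\I_{\{\tau\geq\cdot\}}$, and this allows one to swap $L$ for $L^\tau$ inside the various projections whenever we restrict to the set $\{\tau\ge t\}$.

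First I would test the identity against an arbitrary bounded $\widetilde \bG$-predictable process $\varphi$, i.e. show
\[
\widehatesp{\int_0^T \varphi_s \I_{\{\tau \geq s\}} \cdot {}^{\widehat p, \widetilde \bG} G_s \cdot {}^{p, \widetilde \bG} L_s \ud s} = \widehatesp{\int_0^T \varphi_s \I_{\{\tau \geq s\}} \cdot {}^{p, \widetilde \bG}(G_s L_s) \ud s}.
\]
Set $\tilde\varphi_s := \varphi_s \I_{\{\tau \geq s\}}$, which is $\widetilde \bG$-predictable. On the left-hand side, applying the defining property of the $(\widetilde \bG, \widehat \P)$-predictable projection replaces ${}^{\widehat p, \widetilde \bG} G_s$ by $G_s$; then changing measure to $\P$ on $\G_{T\wedge\tau}$ via the density $L^\tau_T$, applying Fubini, conditioning on $\G_s$ and using the $(\bG, \P)$-martingale property $\esp{L^\tau_T | \G_s} = L^\tau_s$, and finally exploiting the identity $\tilde\varphi_s L^\tau_s = \tilde\varphi_s L_s$ (valid on the support of $\tilde\varphi_s$), the left-hand side becomes
\[
\int_0^T \esp{\tilde\varphi_s \cdot {}^{p, \widetilde \bG} L_s \cdot L_s G_s} \ud s = \int_0^T \esp{\tilde\varphi_s \cdot {}^{p, \widetilde \bG} L_s \cdot {}^{p, \widetilde \bG}(L_s G_s)} \ud s,
\]
where in the last step I use that $\tilde\varphi_s \cdot {}^{p, \widetilde \bG} L_s$ is $\widetilde \bG$-predictable. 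Performing the analogous manipulations on the right-hand side (change of measure, conditioning, martingale property, the identity $\tilde\varphi_s L^\tau_s = \tilde\varphi_s L_s$, and then the predictable-projection property applied to $L_s$) yields the same expression, and equality follows. Since this holds for every bounded $\widetilde \bG$-predictable $\varphi$, and since both $\I_{\{\tau \geq \cdot\}} \cdot {}^{\widehat p, \widetilde \bG} G$ and $\I_{\{\tau \geq \cdot\}} \cdot {}^{p, \widetilde \bG}(GL)/{}^{p, \widetilde \bG} L$ are $\widetilde \bG$-predictable, they must coincide $\P$-a.s. The quotient is meaningful because $L^\tau$ is a strictly positive $(\bG,\P)$-martingale, so ${}^{p, \widetilde \bG} L_s > 0$ on $\{\tau \geq s\}$.

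The main obstacle is the careful bookkeeping in the double use of the change of measure: first from $\widehat \P$ to $\P$ in order to bring in the $(\bG,\P)$-martingale $L^\tau$, and then exploiting that $L^\tau_s$ and $L_s$ agree only on $\{\tau \geq s\}$ so that everything can be expressed in terms of $L$ in the final formula. One must also verify integrability and measurability on the stochastic interval $\llbracket 0, T \wedge \tau \rrbracket$, where $L^\tau_T$ acts as the density and where the hypothesis $\esp{G_t L_t} < \infty$, combined with $L^\tau_t = L_t$ on $\{\tau \geq t\}$, guarantees integrability of all products appearing along the way.
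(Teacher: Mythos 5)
Your proposal is correct and follows essentially the same route as the paper: test the identity against an arbitrary $\widetilde\bG$-predictable $\varphi$ multiplied by $\I_{\{\tau\ge\cdot\}}$, use the defining property of the $(\widetilde\bG,\widehat\P)$-predictable projection, change measure to $\P$ via the density $L^\tau$, invoke $L^\tau_s\I_{\{\tau\ge s\}}=L_s\I_{\{\tau\ge s\}}$, and apply the $(\widetilde\bG,\P)$-predictable projection property twice. The only cosmetic difference is that you insert $L^\tau_T$ and condition down to $\G_s$ via the martingale property, whereas the paper uses $L^\tau_s$ directly on the $\G_s$-measurable integrand; these are equivalent.
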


\begin{proof}
Similarly to the proof of Lemma \ref{lem:KS-pred}, for every $\bG$-adapted process $G$ and every $\widetilde \bG$-predictable process $\varphi$ we have
\begin{align*}
&\widehatesp{\int_0^t \I_{\{\tau\ge s\}} \varphi_s  \ {}^{\widehat p, \widetilde \bG} G_s \ {}^{p, \widetilde \bG} L_s \ud s}  = \widehatesp{\int_0^t \I_{\{\tau\ge s\}} \varphi_s  G_s\ {}^{p, \widetilde \bG} L_s \ud s}=\int_0^t \widehatesp{\I_{\{\tau\ge s\}} \varphi_s  G_s\ {}^{p, \widetilde \bG} L_s}\ud s\\
&\hspace{1cm} =\int_0^t \esp{L^\tau_s \I_{\{\tau\ge s\}} \varphi_s  G_s \ {}^{p, \widetilde \bG} L_s}\ud s=\int_0^t \esp{\I_{\{\tau\ge s\}} \varphi_s \  {}^{p, \widetilde \bG}\left(G_s L_s\right)\ {}^{p, \widetilde \bG} L_s}\ud s\\
&\hspace{1cm}= \int_0^t\esp{\I_{\{\tau\ge s\}}\varphi_s  \ {}^{p, \widetilde \bG}\left(G_s L_s\right) L_s}\ud s=\widehatesp{\int_0^t \I_{\{\tau\ge s\}} \varphi_s \ {}^{p, \widetilde \bG}\left(G_s L_s\right)\ud s},
\end{align*}
for every $t \in [0,T]$. Note that, in the second line, we use the fact that $L^\tau_t=L_t$ for every $t \in \llbracket 0,T\wedge\tau\rrbracket$, where $L^\tau$ is the density process given in \eqref{eq:Ltau}.
\end{proof}

\begin{corollary}\label{cor:pred_proj}
Let $\theta=\{\theta_t,\ t \in [0,T]\}$ be an $\bF$-predictable process. Then,
\begin{equation}
\I_{\{\tau\geq t\}} \ {}^{\widehat p, \widetilde \bG}\theta_t = \I_{\{\tau\geq t\}} \frac{{}^{\widehat p, \bF^S}(\theta_t e^{-\int_0^t\gamma_u \ud u})}{{}^{\widehat p, \bF^S}(e^{-\int_0^t\gamma_u \ud u})}, \quad t \in [ 0,T].
\end{equation}
\end{corollary}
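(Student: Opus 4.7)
The plan is to chain together the three projection/K--S formulas that have just been proved: Lemma~\ref{lem:KS-pred2} converts $\widehat p,\widetilde \bG$ projections into $p,\widetilde \bG$ projections via the density $L$; formula \eqref{pr2} of Lemma~\ref{relation} reduces $p,\widetilde \bG$ projections of $\bF$-predictable processes to $p,\bF^S$ projections weighted by $e^{-\int_0^t\gamma_u\ud u}$; and Lemma~\ref{lem:KS-pred} turns $p,\bF^S$ projections back into $\widehat p,\bF^S$ projections. Composing these three steps should give the claim.

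Concretely, since $\theta$ is $\bF$-predictable, it is in particular $\bG$-adapted, so Lemma~\ref{lem:KS-pred2} applied to $G=\theta$ yields
\begin{equation}
\I_{\{\tau\ge t\}}\,{}^{\widehat p,\widetilde \bG}\theta_t \;=\; \I_{\{\tau\ge t\}}\,\frac{{}^{p,\widetilde \bG}(\theta_t L_t)}{{}^{p,\widetilde \bG}L_t}.
\end{equation}
Next I would observe that $L$, being a Dol\'eans--Dade exponential of a continuous $\bF$-martingale (cf.\ \eqref{eq:densitaL}), is a continuous $\bF$-adapted process, hence $\bF$-predictable. Consequently both $\theta_t L_t$ and $L_t$ are $\bF$-predictable, so formula \eqref{pr2} of Lemma~\ref{relation} applies to each. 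Substituting these two identities into the ratio above, the common factor ${}^{p,\bF^S}(e^{-\int_0^t\gamma_u\ud u})$ cancels and one obtains
\begin{equation}
\I_{\{\tau\ge t\}}\,{}^{\widehat p,\widetilde \bG}\theta_t \;=\; \I_{\{\tau\ge t\}}\,\frac{{}^{p,\bF^S}\bigl(\theta_t L_t e^{-\int_0^t\gamma_u\ud u}\bigr)}{{}^{p,\bF^S}\bigl(L_t e^{-\int_0^t\gamma_u\ud u}\bigr)}.
\end{equation}

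To finish, I would apply Lemma~\ref{lem:KS-pred} twice, with $G_t=\theta_t e^{-\int_0^t\gamma_u\ud u}$ for the numerator and $G_t=e^{-\int_0^t\gamma_u\ud u}$ for the denominator. Both choices are $\bF$-adapted (in fact $\bF$-predictable), so the Kallianpur--Striebel identity gives
\begin{equation}
{}^{\widehat p,\bF^S}\!\bigl(\theta_t e^{-\int_0^t\gamma_u\ud u}\bigr) = \frac{{}^{p,\bF^S}\bigl(\theta_t L_t e^{-\int_0^t\gamma_u\ud u}\bigr)}{{}^{p,\bF^S}L_t}, \qquad {}^{\widehat p,\bF^S}\!\bigl(e^{-\int_0^t\gamma_u\ud u}\bigr) = \frac{{}^{p,\bF^S}\bigl(L_t e^{-\int_0^t\gamma_u\ud u}\bigr)}{{}^{p,\bF^S}L_t}.
\end{equation}
Dividing these and comparing with the previous display, the factors ${}^{p,\bF^S}L_t$ cancel and the right-hand sides match, producing the stated identity.

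The only genuinely delicate point, and where I expect to have to be careful, is the measurability justification needed to invoke \eqref{pr2}: one must confirm that the processes $\theta_t L_t$ and $L_t$ really are $\bF$-predictable, which here rests on the continuity of $L$ under Assumption~\ref{ipot1}(ii). Integrability conditions required to apply the K--S lemmas follow from $L$ being square integrable and from $\theta$ appearing only inside conditional expectations under $\widehat{\mathbf P}$, so they do not constitute a substantial obstacle. Everything else is bookkeeping on the set $\{\tau\ge t\}$, where the indicator factors cancel consistently.
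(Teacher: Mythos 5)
Your proof is correct and follows essentially the same route as the paper: Lemma \ref{lem:KS-pred2} to pass from ${}^{\widehat p,\widetilde\bG}$ to ${}^{p,\widetilde\bG}$ via $L$, then formula \eqref{pr2} of Lemma \ref{relation} applied to the $\bF$-predictable processes $\theta L$ and $L$, then Lemma \ref{lem:KS-pred} twice to reinstate the $\widehat\P$-projections. Your explicit check that $L$ is $\bF$-predictable (by continuity) before invoking \eqref{pr2} is a detail the paper leaves implicit, but otherwise the arguments coincide.
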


\begin{proof}
By Lemma \ref{lem:KS-pred2} we get
\begin{align}
\I_{\{\tau\ge t\}}{}^{\widehat p, \widetilde \bG} \theta_t &= \I_{\{\tau\ge t\}} \frac{{}^{p, \widetilde \bG}\left(\theta_t L_t\right)}{{}^{p, \widetilde \bG} L_t}\\
&\label{eq:second_line}= \I_{\{\tau\ge t\}}\frac{{}^{p, \bF^S}\left(\theta_t L_te^{-\int_0^t\gamma_u\ud u}\right)}{{}^{p, \bF^S} \left( e^{-\int_0^t\gamma_u\ud u}\right)} \cdot \frac{{}^{p, \bF^S} \left( e^{-\int_0^t\gamma_u\ud u}\right)}{{}^{p, \bF^S} \left(e^{-\int_0^t\gamma_u\ud u}L_t\right)}\\
&\label{eq:third_line} =\I_{\{\tau\geq t\}} \frac{{}^{\widehat p, \bF^S}(\theta_t e^{-\int_0^t\gamma_u \ud u})}{{}^{\widehat p, \bF^S}(e^{-\int_0^t\gamma_u \ud u})}
\end{align}
where in line \eqref{eq:second_line} we use Lemma \ref{relation} and in line \eqref{eq:third_line} we apply Lemma \ref{lem:KS-pred}.
\end{proof}

\subsection{Some proofs}\label{appendix:proofs}

\begin{proof}[Proof of Lemma \ref{lemma:stopped_inovation}]
First, note that $I^\tau$ is a square integrable process with continuous trajectories, and since the following equality is fulfilled
\begin{align}
I^\tau_t  = \int_0^t {1 \over \sigma(u,S^\tau_u) S^\tau_u} \ \ud S^\tau_u  - \int_0^t { {}^{p,\widetilde \bG} \mu_u\over \sigma(u,S^\tau_u)} \  \  \ud u, \quad t \in [0,T],
\end{align}
it turns out to be $\widetilde \bG$-adapted. We now prove that $I^\tau$ is a $(\widetilde \bG, \P)$-martingale. As a consequence of Lemma \ref{op-pre} in Appendix \ref{appendix:projections}, we can work with the $(\widetilde \bG, \P)$-optional projection of $\mu$, that is ${}^{o,\widetilde \bG} \mu$,  instead of the $(\widetilde \bG, \P)$-predictable projection $ {}^{p,\widetilde \bG} \mu$.
Hence, for every $0\leq s \leq t \leq T$, we have
$$
\esp{I^\tau_t   - I^\tau_s   \Big{|} \widetilde \G_s}  =  \esp{\int_{s\wedge \tau}^{t\wedge \tau}  {\mu(u, S^\tau_u, X_u^\tau) - {}^{o,\widetilde \bG} \mu_u \over \sigma(u,S^\tau_u) } \  \ud u  \bigg{|} \widetilde \G_s} + \esp{W^\tau_t  - W^\tau_s \Big{|} \widetilde \G_s}.
$$
By the properties of the conditional expectation
 we obtain that
\begin{equation}
\begin{split}
\esp{I^\tau_t   - I^\tau_s   \Big{|} \widetilde \G_s}   & =  \int_{s}^{t}  \esp{\esp{{\mu(u, S^\tau_u, X_u^\tau) \over \sigma(u,S^\tau_u)} \I_{\{\tau > u\}}   - {\phantom{\Big{|}}}^{o,\widetilde \bG} \left( { \mu_u \over \sigma_u}  \I_{\{\tau > u\}}\right)  \Big{|}  \widetilde \G_u}\Big{|} \widetilde \G_s} \ud u  \\
& \qquad +  \esp{\esp{W^\tau_t  - W^\tau_s \Big{|}  \G_s}\Big{|} \widetilde \G_s}. \end{split}
\end{equation}
Since $\bE [W^\tau_t  - W^\tau_s |    \G_s ] = 0$,  finally we get
$$
\esp{I^\tau_t   - I^\tau_s   \Big{|} \widetilde \G_s}  =
\int_{s}^{t}  \esp{{\phantom{\Big{|}}}^{o,\widetilde \bG} \left( \frac{ \mu_u}{ \sigma_u}  \I_{\{\tau > u\}}   \right)  -   {\phantom{\Big{|}}}^{o,\widetilde \bG} \left( \frac{ \mu_u}{ \sigma_u}  \I_{\{\tau > u\}}   \right)}\ud u = 0.$$
To conclude, we apply the L\'evy Theorem taking into account that $\langle I^\tau \rangle =  \langle W^\tau \rangle$.
\end{proof}

\begin{proof}[Proof of Lemma \ref{markov}]
Recall that the process $\widehat W$ given in
\eqref{eq:What} and $B$ are
independent $(\bF, \widehat \P)$-Brownian motions.
Since the change of probability measure from $\P$ to $\widehat \P$ is Markovian, the pair $(S,X)$ is still an $(\bF,\widehat \P)$-Markov process, see~\citet[Proposition 3.4]{ceci2001nonlinear}.
Then, the Markov generator $\widehat \L^{S,X}$ of the pair $(S,X)$ can be easily computed by applying It\^{o}'s formula to the function $f(t,s,x)$.
\end{proof}

\begin{proof}[Proof of Lemma \ref{markovbis}]
Note that, for any bounded and measurable  function $f:\{0,1\} \to \R$ we get, for any $t \in [0,T]$
\begin{align}
f(H_t) =& \int_0^t ( f(H_{s^-} +1) - f(H_{s^-} ) )\ud H_s\\
=&\int_0^t ( f(H_{s^-} +1) - f(H_{s^-} ) ) \ud M_s  + \int_0^t ( f(H_{s^-} +1) - f(H_{s^-} ) (1-H_s) \gamma(s,X_s) \ud s,\end{align}
 and  since

 $$S^\tau _t  = \int_0^t S_r \ud H_r + S_t (1-H_t),  \quad  X^\tau _t  = \int_0^t X_r \ud H_r + X_t (1-H_t), \forall t \in [0,T]$$

 we obtain $ \ds \ud S^\tau _t = (1 - H_{t^-})  \ud S_t$ and  $\ds \ud X^\tau _t = (1 - H_{t^-})  \ud X_t$, $\forall t \in [0,T]$ .

Then, the Markov generator $\widehat \L^{S,X,H}$ is obtained by It\^{o}'s formula applied to any function $f(t,s,x,z) \in \widehat C_b^{1,2,2}$.
\end{proof}

\begin{proof}[Proof of Proposition \ref{filteq}]
First, observe that $\widehat W$ is an $(\bF^S, \widehat \P)$ Brownian motion since the following equality holds
$$
\widehat W_t = I^{\bF^S} _t + \int_0^t { {}^{p, \bF^S}\!\mu_u\over \sigma(u,S_u) } \ud u,\quad t \in [0,T],
$$
where  $\{I^{\bF^S}_t := W_t  + \int_0^{t }  {\mu(u, S_u, X_u)  -  {}^{p, \bF^S}\!\mu_u \over \sigma(u,S_u) } \ud u,\ t \in [0,T]\}$ is the so-called innovation process which is known to be an $(\bF^S, \widehat \P)$ Brownian motion (see, for instance \citet{liptser2013statistics}).

Recalling the semimartingale decomposition  of $f(t,S_t,X_t,Y_t)$, given in \eqref{semi},  we can proceed as in the proof of \citet[Proposition A.2]{ceci2015local} and prove that the filter $\pi$ solves equation  \eqref{eq:ks}.

Strong uniqueness for the solution to the filtering  equation follows by uniqueness of the {\em filtered martingale problem}  for the operator $\widehat \L^{S,X,Y}$ (see, e.g. \citet{kurtz1988unique}, \citet{cecicolaneri2012}, \citet{ceci2014zakai}).
Precisely, by applying \citet[Theorem 3.3]{kurtz1988unique} we get that the filtered martingale problem for the operator $\widehat \L^{S,X,Y}$ has a unique solution, and this implies uniqueness of equation \eqref{eq:ks}.
\end{proof}

\bibliographystyle{plainnat}
\bibliography{bibliography}

\end{document}